\newtheorem{thm}{Theorem}[section]
\newtheorem{cor}[thm]{Corollary}
\newtheorem{lem}[thm]{Lemma}
\newtheorem{prop}[thm]{Proposition}
\theoremstyle{definition}
\newtheorem{ass}[thm]{Assumption}
\theoremstyle{remark}
\newtheorem{rem}[thm]{Remark}
\numberwithin{equation}{section}
\newcommand{\Real}{\mathbb R}
\newcommand{\F}{\mathcal{F}}
\newcommand{\prob}{\mathbb{P}}
\newcommand{\qprob}{\mathbb{Q}}
\newcommand{\expec}{\mathbb{E}}
\newcommand{\cX}{\mathcal{X}}
\newcommand{\cL}{\mathcal{L}}
\newcommand{\cadlag}{c\`adl\`ag}
\newcommand{\fR}{\mathfrak{R}}
\newcommand{\lR}{\underline{\mathfrak{R}}}
\newcommand{\uR}{\overline{\mathfrak{R}}}
\newcommand{\pX}{X^{(p)}}
\newcommand{\ptX}{\wt{X}^{(p)}}
\newcommand{\pr}{r^{(p)}}
\newcommand{\esssup}[1]{{\text{esssup}_{#1}}}
\newcommand{\indic}{\mathbb{I}}
\newcommand{\pare}[1]{\left(#1\right)}
\newcommand{\bra}[1]{\left[#1\right]}
\newcommand{\wt}[1]{\widetilde{#1}}
\title[Stability of the exponential utility maximization problem]{Stability of the exponential utility maximization problem with respect to preferences}
\author[]{Hao Xing}
\thanks{We would like to thank Constantinos Kardaras, Johannes Muhle-Karbe, and Marcel Nutz for helpful comments. We are also grateful to the associated editor and two anonymous referees for carefully reading this paper and providing valuable suggestions, which assisted us in improving this paper.
This research is supported in part by STICERD at London School of Economics.}
\address{Department of Statistics, London School of Economics and Political Science, 10 Houghton st, London, WC2A 2AE, UK}
\email{h.xing@lse.ac.uk}
\date{\today}
\begin{document}

\keywords{utility maximization, exponential utility, stability, semimartingales, utility-based prices}

\begin{abstract}
 This paper studies stability of the exponential utility maximization when there are small variations on agent's utility function. Two settings are considered. First, in a general semimartingale model where random endowments are present, a sequence of utilities defined on $\Real$ converges to the exponential utility. Under a uniform condition on their marginal utilities, convergence of value functions, optimal payoffs and optimal investment strategies are obtained, their rate of convergence are also determined. Stability of utility-based pricing is studied as an application. Second, a sequence of utilities defined on $\Real_+$ converges to the exponential utility after shifting and scaling.  Their associated optimal strategies, after appropriate scaling, converge to the optimal strategy for the exponential hedging problem. This complements Theorem 3.2 in \textit{M. Nutz, Probab. Theory Relat. Fields, 152, 2012}, which establishes the convergence for a sequence of power utilities.
\end{abstract}

\maketitle

\setcounter{section}{-1}

\section{Introduction}\label{sec: introduction}

This paper considers an optimal investment problem where an agent, whose preference is described by a utility function, seeks to maximize expected utility of her wealth from investment and a random endowment (illiquid asset) at an investment horizon $T\in \Real_+$. Given two problem primitives: utility function and market structure, the goal is to identify the optimal investment strategy that the agent undertakes. When the utility has constant absolute risk aversion, \cite{six-authors} give an elegant solution to this problem. We study in this paper stability of the optimal investment strategy when agent's utility deviates from exponential utility. In particular, we are interested in a \emph{quantitative} measure on how far the optimal strategy deviates when there are small variations on agent's preference.

Two settings are studied. First, consider a sequence of utility functions $(U_\delta)_{\delta>0}$, each of which is defined on $\Real$, such that it converges pointwise to $U_0$ which has unit absolute risk aversion. Deviation is measured by two components: i) the ratio of marginal utilities $\fR_\delta$ between $U_\delta$ and an exponential utility $\wt{U}_\delta$ with absolute risk aversion $\alpha_\delta$; ii) the difference between $\alpha_\delta$ and $1$. The first component measures how far $U_\delta$ is away from an exponential utility; while the second component describes how far this exponential utility is away from the exponential utility with unit risk aversion. When $\fR_\delta$ is bounded from above and away from zero, uniformly in $\delta$, our first main result, Theorem \ref{thm: conv R}, states the convergence of the optimal payoffs and value functions in a general semimartingale model; moreover the convergence of optimal strategies also follows, when asset price processes are continuous. Beyond these continuity results, the rate of convergence is determined in Corollary \ref{cor: convergence rate}. Aforementioned two components of variations impact deviation of the optimal payoff (hence the optimal strategy) at different rates: the convergence of absolute risk aversions has \emph{first order} impact, while the convergence of $\fR_\delta$ has \emph{second order} effect. Stability of utility based prices, Davis price and indifference price, with respect to agent's preference is also discussed as an application; cf. Corollaries \ref{cor: davis price} and \ref{cor: indifference price}.

The stability problem studied in the first setting is similar to \cite{Carassus-Rasonyi}, where the problem is formulated in a discrete time setting and asset price processes are assumed to be bounded. For utilities defined on $\Real_+$, aforementioned stability problem has been extensively studied. \cite{Jouini-Napp} consider an It\^{o} process model. \cite{Larsen} extends the analysis to continuous semimartingale models. \cite{Kardaras-Zitkovic} allow simultaneous variations on preferences and subjective probabilities. More recently, \cite{Mocha-Westray} focus on the power utility maximization problem and investigate stability respect to relative risk aversion, market price of risk, and investment constraints.

In \cite{Larsen} and \cite{Kardaras-Zitkovic}, convergence in probability of optimal payoffs is obtained under an uniform integrability assumption. One can prove that the optimal investment strategies also converge; cf. Remark \ref{rem: comparsion}. Our uniform bound on the ratio of marginal utilities implies an analogous integrability condition; cf. Remark \ref{rem: UI}. However the additional structure imposed here allows us to obtain more precise information on how fast the convergence takes place.

A different type of stability problem is studied in \cite{Larsen-Zitkovic}. Therein stability of the optimal payoff with respect to market variations is studied while a utility defined on $\Real_+$ is fixed. This type of stability problem has recently been investigated in \cite{Frei} and \cite{Bayraktar-Kravitz} for the exponential utility maximization problem.

In the second setting, we consider a sequence of utility random fields $(\mathcal{U}_p)_{p<0}$, each of which is of the form $\mathcal{U}_p = D \, U_p$ for a positive random variable $D$ and a utility function $U_p$ defined on $\Real_+$. For each $U_p$, the ratio of its marginal utility with respect to $x^{p-1}$ is bounded from above and away from zero. In this sense $U_p$ is comparable to  power utility $\wt{U}_p=x^p/p$ with constant relative risk aversion $1-p$. As the ratio of marginal utilities going to $1$, $(U_p)_{p<0}$ approaches $(\wt{U}_p)_{p<0}$ which converges to exponential utility, with appropriate domain shift, as $p\downarrow -\infty$  (cf. \cite[Remark 3.3]{Nutz-asy}).

Our second main result, Theorem \ref{thm: conv R_+}, states that, when the ratio of marginal utilities converges to $1$ at a rate at least as fast as the relative risk aversion going to infinity, then the optimal proportion invested in risky assets, scaled by $1-p$, converges to the optimal monetary value invested in risky assets in the exponential hedging problem. Therein $(1-p)^{-1}$ can be regarded as the rate of convergence. This result is first obtained in \cite{Nutz-asy} where $(U_p)_{p<0}$ is a sequence of power utilities. We complement Nutz's result by allowing deviation from power utility and analyze the impact on the convergence from the ratio of marginal utilities. On the dual side, the stability problem formulated  here is related to the convergence of optimal martingale measures which is studied in \cite{Grandits-Rheinlander}, \cite{Mania-Tevzadze}, and \cite{Santacroce}.

The starting point of our proofs in both settings is the following key result from the \emph{duality theory}: the optimal wealth process is a \emph{martingale} after multiplied by the optimal dual process, and a \emph{supermartingale} after multiplied by any other process in the dual domain. When random endowment presents, aforementioned  properties have been proved in \cite{Owen-Zitkovic} for utility defined on $\Real$ and in \cite{Karatzas-Zitkovic} for utility defined on $\Real_+$.
This property, combined with scaling properties of exponential (resp. power) utility, leads to an estimate on the difference (resp. ratio) of optimal payoffs for $U_\delta$ (resp. $U_p$) and exponential (resp. power) utility. The remaining proof does not depend on the market specifications. Therefore methods in this paper could potentially be applied to other market settings where the aforementioned property on the optimal wealth process holds, for example, markets with transaction cost, see \cite{Cvitanic-Karatzas}, and the utility maximization with forward criteria, see \cite{Musiela-Zariphopoulou}.

The structure of the paper is simple. After this introduction, Section \ref{sec: main results} describes the problems and states main results, while all proofs are given in Sections \ref{sec: stab R} and \ref{sec: stab R_+}.

\section{Main results}\label{sec: main results}

We consider a financial market of $d$-risky assets whose discounted prices are modeled by a locally bounded $\Real^d$-valued semimartingale $(S_t)_{t\in[0,T]}$, defined on a filtered probability space $(\Omega, \F, (\F_t)_{t\in [0,T]}, \prob)$, in which $\F_0$ coincides with the family of $\prob$-null sets and $(\F_t)_{t\in[0,T]}$ is right continuous. When price processes are non-locally bounded, we refer reader to \citep{Biagini-Frittelli-05, biagini-Frittelli-07}.

\subsection{Utilities defined on $\Real$}
Consider a sequence of standard utility functions\footnote{A standard utility function is  strictly increasing, strictly concave, and continuously differentiable.} $U_\delta:\Real \rightarrow \Real$, indexed by $\delta \geq 0$, converging in the following sense:

\begin{ass}\label{ass: utility conv}
 $\lim_{\delta\downarrow 0} U_{\delta}(x) = U_0(x)$ for $x\in \Real$, where $U_0(x) = -\exp(-x)$.\footnote{After appropriate scaling all results in this paper hold when $U_0$ has other value of absolute risk aversion.}
\end{ass}
The pointwise convergence of utility functions is widely used in the literature; e.g. \cite{Jouini-Napp} and \cite{Larsen}. The pointwise convergence, restricted to the class of concave functions (utility functions), implies a more economic meaningful mode of convergence: the pointwise (and hence locally uniformly) convergence of their derivatives (marginal utilities); see \cite[pp. 90 and pp. 248]{Rockafellar}. However the pointwise convergence is not enough for the stability of utility maximization problem; see an counterexample in \cite{Larsen}. We further restrict each $U_\delta$ to a class of utilities which are comparable to the exponential utility $-\frac{1}{\alpha_\delta} \exp(-\alpha_\delta x)$.

\begin{ass}\label{ass: marginal bound exp}
 There exist constants $0<\lR\leq 1 \leq \uR$ and $(\alpha_\delta)_{\delta>0}$ with $\lim_{\delta \downarrow 0} \alpha_\delta =1$ such that
 \[
 \lR \leq \fR_\delta(x) := \frac{U'_{\delta}(x)}{\exp(-\alpha_\delta x)} \leq \uR, \quad \text{ for all } \delta>0 \text{ and } x\in \Real.
 \]
\end{ass}

\begin{rem}\label{rem: utility bdd R}
This assumption implies that each $U_\delta$ is bounded from above. Indeed, integrating $\lR \exp(-\alpha_\delta x) \leq U'_\delta(x) \leq \uR \exp(-\alpha_\delta x)$ on $(0,\infty)$ yields $\lR/\alpha_\delta + U_\delta(0) \leq U_\delta(\infty) \leq \uR/\alpha_\delta + U_\delta(0)$\footnote{These bounds can be made uniform in $\delta$, since $\lim_{\delta\downarrow 0} U_\delta(0)=-1$ and $\lim_{\delta\downarrow 0}\alpha_\delta = 1$.}. Moreover, $U_\delta$ is sandwiched between two utilities with constant absolute risk aversion $\alpha_\delta$. To see this, integrating the previous bounds for $U'_\delta(x)$ on $(x,\infty)$ induces $U_\delta(\infty) - \frac{1}{\alpha_\delta} \uR \exp(-\alpha_\delta x) \leq U_\delta(x) \leq U_\delta(\infty) -\frac{1}{\alpha_\delta}\lR\exp(-\alpha_\delta x)$ for any $x\in \Real$.
One can also derive from Assumption \ref{ass: marginal bound exp} that each $U_\delta$ satisfies the Inada conditions, i.e., $\lim_{x\downarrow -\infty} U_{\delta}'(x) =\infty$ and $\lim_{x\uparrow \infty} U_{\delta}'(x) =0$, and $U_\delta$ has \emph{reasonable asymptotic elasticity}, i.e.,
\[
AE_{-\infty}(U_\delta):= \liminf_{x\downarrow -\infty} \frac{xU_\delta'(x)}{U_\delta(x)}>1 \quad \text{ and } \quad AE_{\infty}(U_\delta) := \limsup_{x\uparrow \infty} \frac{x U'_{\delta}(x)}{U(x)}<1.
\]
Hence each $U_\delta$ is reasonable risk averse at high and low wealth limit; cf. \citep{Kramkov-Schachermayer-99, Kramkov-Schachermayer-03}.
\end{rem}

To introduce the utility maximization problem considered, we denote by $M^a$ (resp. $M^e$) the class of probability measures $\wt{\prob} \ll \prob$ (resp. $\wt{\prob}\sim \prob$) such that $S$ is a local martingale under $\wt{\prob}$. Consider the convex conjugate $V_\delta: (0,\infty) \rightarrow \Real$ defined by $V_\delta(y):= \sup_{x\in \Real} (U_\delta(x) - x y)$. The \emph{generalized entropy} of $\wt{\prob}\in M^a$ relative to $\prob$ is defined as $\expec_\prob[V_\delta(d\wt{\prob}/d\prob)] \in (0,\infty]$. We denote by $\mathcal{M}^a_\delta$ (resp. $\mathcal{M}^e_\delta$) the set of probability measures $\wt{\prob}\in M^a$ (resp. $\wt{\prob}\in M^e$) with finite generalized entropy. Even though definition of $\mathcal{M}^a_\delta$ (resp. $\mathcal{M}^e_\delta$) depends on $V_\delta$, Lemma \ref{lem: M^a indep delta} below shows that all $\mathcal{M}^a_\delta$ (resp. $\mathcal{M}^e_\delta$) are the same for $\delta\geq 0$ under Assumption \ref{ass: marginal bound exp}. Henceforth we drop the subscript $\delta$ to write $\mathcal{M}^a$ (resp. $\mathcal{M}^e$) instead.

There is an agent whose preference is described by one of the utility function $U_\delta$. She is able to trade in the financial market and has a random endowment $\xi_\delta$ which is an $\F_T$-measurable random variable. Following \cite{Owen-Zitkovic}, we assume that $\xi_\delta$ is potentially unbounded but can be super-hedged.

\begin{ass}\label{ass: random endowment}
 There exist $x_\delta, \wt{x}_\delta \in \Real$ and a predictable $S$-integrable process $G_\delta$ such that
 \[
  x_\delta \leq \xi_\delta \leq \wt{x}_\delta + G_\delta \cdot S_T, \quad \text{ for each } \delta \geq 0,
 \]
 where $G_\delta \cdot S$ is $\prob$-a.s. uniformly bounded from below by a constant and $G_\delta \cdot S_T$ stands for $\int_0^T G_{\delta,t} dS_t$.
\end{ass}

When the utility function is defined on $\Real$, the class of wealth processes with uniform lower bound is not large enough for the problem considered below; cf. \cite{Schachermayer}. Therefore we recall the following class of permissible strategies from \cite{Owen-Zitkovic}: $H$ is a \emph{permissible trading strategy} if it is inside
\[
 \mathcal{H}^{perm}:= \left\{H\,:\, \begin{array}{l}H \text{ is a } \text{predictable}, S\text{-integrable process such that } \\
 H\cdot S \text{ is a } \wt{\prob}\text{-supermartingale for all } \wt{\prob}\in \mathcal{M}^a\end{array}\right\}.\footnote{Since $\mathcal{M}^a$ is the same for different $\delta$, $\mathcal{H}^{perm}$ is independent of $\delta$ as well. Therefore even though the utility of the agent may change with respect to $\delta$, she always choose trading strategy from the same permissible class.}
\]

Our agent chooses permissible strategies to maximize her utility on wealth and endowment at an investment horizon $T$:
\begin{equation}\label{def: u_delta}
 u_\delta:= \sup_{H\in \mathcal{H}^{perm}}\expec_\prob\bra{U_\delta \pare{H\cdot S_T + \xi_\delta}}.
\end{equation}
In order to ensure the existence of the optimal strategy, we impose
\begin{ass}\label{ass: finite entropy}
 $\mathcal{M}^e \neq \emptyset$.
\end{ass}
When $U_\delta$ has reasonable asymptotic elasticity, $\mathcal{M}^a \neq \emptyset$, and Assumption \ref{ass: random endowment} holds, Assumption \ref{ass: finite entropy} is actually the necessary and sufficient condition for the existence of optimal strategy for \eqref{def: u_delta}; cf. \cite[Theorem 1.9]{Owen-Zitkovic}. We further recall the following result from \cite{Owen-Zitkovic}.

\begin{prop}[Owen-\v{Z}itkovi\'{c}]\label{prop: u_delta wellposed}
 Let $U_\delta$ be of reasonable asymptotic elastic and Assumptions \ref{ass: random endowment} and \ref{ass: finite entropy} hold. Then there exists an optimal strategy $H_\delta \in \mathcal{H}^{perm}$ for \eqref{def: u_delta} such that $H_\delta \cdot S$ is a $\wt{\prob}$-supermartingale for all $\wt{\prob}\in \mathcal{M}^a$ and a $\qprob_\delta$-martingale for some $\qprob_\delta \in \mathcal{M}^e$, whose density $d\qprob_\delta/d\prob$ satisfies
 \[
  y_\delta \frac{d\qprob_\delta}{d\prob} = U'_\delta \pare{H_\delta\cdot S_T + \xi_\delta}, \quad \text{ for some positive constant } y_\delta.
 \]
\end{prop}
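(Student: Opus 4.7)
Since the statement is explicitly attributed to \cite{Owen-Zitkovic}, my plan is not to produce a new proof but to verify that the hypotheses of their main theorem apply verbatim and then read off the asserted structural properties from their duality machinery. The first step is simply a bookkeeping check: reasonable asymptotic elasticity of $U_\delta$ at both $\pm \infty$ is assumed; Assumption \ref{ass: random endowment} matches the ``super-replicable from both sides'' hypothesis in \cite{Owen-Zitkovic} (a constant lower bound is automatic since $\xi_\delta \geq x_\delta$ and $G_\delta \cdot S$ is bounded below, while the upper super-hedge is given explicitly); and $\mathcal{M}^e \neq \emptyset$ together with finite generalized entropy is the dual feasibility condition. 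Once these are in place, the existence of an optimizer $H_\delta \in \mathcal{H}^{perm}$ for \eqref{def: u_delta} follows immediately from \cite[Theorem 1.9]{Owen-Zitkovic}.

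The second step is to produce the dual optimizer and couple it to $H_\delta$ through the first-order condition. Here I would consider the dual problem
\[
 v_\delta(y) := \inf_{\wt{\prob} \in \mathcal{M}^a} \expec_\prob\!\left[V_\delta\!\left(y \tfrac{d\wt{\prob}}{d\prob}\right)\right] + y\, \expec_{\wt{\prob}}[\xi_\delta],
\]
which under Assumptions \ref{ass: marginal bound exp}--\ref{ass: finite entropy} admits a unique minimizer $\qprob_\delta$; the equivalence $\qprob_\delta \sim \prob$ is forced by $\mathcal{M}^e \neq \emptyset$ together with the Inada condition $V'_\delta(0+) = -\infty$ (equivalently $U'_\delta(-\infty) = \infty$) pointed out in Remark \ref{rem: utility bdd R}. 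Strong duality in the form $u_\delta(x) = \inf_{y>0}(v_\delta(y) + xy)$, combined with the pointwise optimality relation $-V'_\delta(y) = (U'_\delta)^{-1}(y)$, yields the identity
\[
 y_\delta \frac{d\qprob_\delta}{d\prob} = U'_\delta\!\left((H_\delta \cdot S)_T + \xi_\delta\right),
\]
where $y_\delta = -v_\delta'(x)$ at the relevant level of initial wealth.

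The third step extracts the (super)martingale properties. By the definition of $\mathcal{H}^{perm}$, $H_\delta \cdot S$ is automatically a $\wt{\prob}$-supermartingale for every $\wt{\prob} \in \mathcal{M}^a$. The remaining claim, that $H_\delta \cdot S$ is a true $\qprob_\delta$-martingale, is the standard ``balance condition'' in utility maximization with random endowment: if it were a strict supermartingale, one could perturb $H_\delta$ infinitesimally along an admissible direction to increase the primal value, contradicting optimality. This is carried out in detail in \cite{Owen-Zitkovic} and can also be recovered by combining the martingale property of the shadow wealth process $y_\delta \tfrac{d\qprob_\delta}{d\prob}\bigl((H_\delta \cdot S)_t + \expec_{\qprob_\delta}[\xi_\delta \mid \F_t]\bigr)$ with the supermartingale property under all competing measures.

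The only mildly delicate point in the verification is ensuring that Owen--\v{Z}itkovi\'c's permissibility class coincides with the $\mathcal{H}^{perm}$ used here and that $\mathcal{M}^a$ is $\delta$-independent; the latter is precisely the content of Lemma \ref{lem: M^a indep delta} (to be proved shortly), while the former is immediate from their definition. Beyond this, the proposition is a direct invocation of \cite{Owen-Zitkovic}.
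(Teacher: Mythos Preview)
The paper provides no proof of this proposition at all: it is stated purely as a recall of \cite{Owen-Zitkovic}, with the attribution in the proposition's title and a pointer to \cite[Theorem 1.9]{Owen-Zitkovic} in the paragraph preceding it. Your plan---verify the hypotheses line up with Owen--\v{Z}itkovi\'c's framework and then read off the conclusions---is therefore exactly in the spirit of what the paper intends, only more explicit. Your sketch of the dual problem, the first-order condition, and the martingale/supermartingale dichotomy is a correct summary of how the result is obtained in that reference.

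One small slip: you invoke Remark \ref{rem: utility bdd R} (hence Assumption \ref{ass: marginal bound exp}) to justify the Inada condition $U_\delta'(-\infty)=\infty$, but the proposition as stated does \emph{not} assume Assumption \ref{ass: marginal bound exp}; it only assumes reasonable asymptotic elasticity together with Assumptions \ref{ass: random endowment} and \ref{ass: finite entropy}. In Owen--\v{Z}itkovi\'c the Inada conditions are part of their standing hypotheses on the utility, so for a clean verification you should simply cite that rather than routing through Remark \ref{rem: utility bdd R}. Apart from this bookkeeping point, nothing is missing.
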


In the previous result, $\qprob_0$ is the the \emph{minimal entropy measure} which minimizes $\expec_\prob[V_0(d\wt{\prob}/d\prob)]$, with $V_0(y)=y\log y -y$, among all $\wt{\prob}\in M^a$. To simplify notation we drop the subscript $0$ and denote the minimal entropy measure by $\qprob$.
In order to investigate the convergence of \eqref{def: u_delta} and its optimal strategy as $\delta\downarrow 0$. We assume the following convergence of random endowments.

\begin{ass}\label{ass: endowment conv}
 There exists a constant $C\in \Real_+$ such that $\alpha_\delta \xi_\delta - \xi_0 \geq -C$, $\prob$-a.s. for all $\delta >0$. Moreover $\lim_{\delta \downarrow 0} \expec_\qprob[|\alpha_\delta \xi_\delta - \xi_0|]=0$.
\end{ass}

The previous assumption clearly holds when $(\xi_\delta)_{\delta \geq 0}$ is uniformly bounded and $\qprob-\lim_{\delta\downarrow 0} \xi_\delta =\xi_0$, where $\qprob-\lim$ represents convergence in probability $\qprob$. Denote the optimal payoff by $X^\delta_T = H_\delta \cdot S_T$ for $\delta \geq 0$. The first main result states the convergence of $X^\delta_T$, its associated strategy, and $u_\delta$, as $\delta \downarrow 0$.

\begin{thm}\label{thm: conv R}
 Let Assumptions \ref{ass: utility conv}, \ref{ass: marginal bound exp}, \ref{ass: random endowment}, \ref{ass: finite entropy} and \ref{ass: endowment conv} hold. Then the following statements hold:
 \begin{enumerate}
  \item[i)] $\lim_{\delta \downarrow 0} \expec_\qprob[|X^\delta_T - X^0_T|] =0$;
  \item[ii)] $\lim_{\delta \downarrow 0} u_\delta = u_0$;
  \item[iii)] If $S$ is continuous then
  \[
  \lim_{\delta \downarrow 0} \expec_\qprob\bra{\pare{\int_0^T (H_\delta-H_0)^{\top}_t d\langle S\rangle_t (H_\delta -H_0)_t}^{p/2}}=0, \quad \text{ for any } p\in (0,1).
  \]
 \end{enumerate}
\end{thm}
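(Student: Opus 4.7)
The plan is to establish (i) first via the duality identity and a coercive-function estimate, then deduce (ii) by uniform integrability and (iii) by a Burkholder--Davis--Gundy argument exploiting continuity of $S$. Set
\[
W_\delta := \alpha_\delta X^\delta_T - X^0_T, \quad \epsilon_\delta := \alpha_\delta \xi_\delta - \xi_0, \quad Y^\delta_T := W_\delta + \epsilon_\delta.
\]
Applying Proposition \ref{prop: u_delta wellposed} to both $U_\delta$ and $U_0$ and using $U_0'(x)=e^{-x}$ yields the key density identity
\[
\frac{d\qprob_\delta}{d\qprob} \,=\, \frac{y_0}{y_\delta}\,\fR_\delta(X^\delta_T + \xi_\delta)\, e^{-Y^\delta_T}.
\]
Because $\qprob,\qprob_\delta \in \mathcal{M}^a$, the super/martingale dichotomy of Proposition \ref{prop: u_delta wellposed} gives $\expec_\qprob[W_\delta]\leq 0\leq \expec_{\qprob_\delta}[W_\delta]$, and together with Assumption \ref{ass: endowment conv} this produces the a priori mean bounds $\limsup_{\delta \downarrow 0} \expec_\qprob[Y^\delta_T]\leq 0$ and $\liminf_{\delta \downarrow 0} \expec_{\qprob_\delta}[Y^\delta_T]\geq 0$; the normalization $\expec_\qprob[d\qprob_\delta/d\qprob]=1$ identifies $y_\delta/y_0 = \expec_\qprob[\fR_\delta(X^\delta_T+\xi_\delta)\,e^{-Y^\delta_T}]$ and keeps it in $[\lR,\uR]$ up to factors tending to $1$.

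The heart of (i) is an estimate of the form $\expec_\qprob[\phi(Y^\delta_T)]\to 0$, where $\phi(x):=x(1-e^{-x})$ is nonnegative, vanishes only at $0$, behaves like $x^2$ near the origin, and satisfies $\phi(x)\geq |x|/2$ for $|x|$ large. Using the density identity to rewrite
\[
\expec_\qprob[\phi(Y^\delta_T)] \,=\, \expec_\qprob[Y^\delta_T] - \frac{y_\delta}{y_0}\, \expec_{\qprob_\delta}\!\bra{Y^\delta_T/\fR_\delta(X^\delta_T+\xi_\delta)},
\]
one combines the two mean bounds above, the uniform sandwich $\lR\leq\fR_\delta\leq\uR$, and $\expec_\qprob[\epsilon_\delta]+\expec_{\qprob_\delta}[\epsilon_\delta]\to 0$ from Assumption \ref{ass: endowment conv} to pinch $\expec_\qprob[\phi(Y^\delta_T)]$ between quantities that vanish in the limit. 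Coercivity of $\phi$ then yields $Y^\delta_T\to 0$ in $\qprob$-probability; an auxiliary $L^{1+\eta}(\qprob)$-bound, obtained by transferring the uniform $\prob$-exponential moment estimate $\expec_\prob[e^{-\alpha_\delta(X^\delta_T+\xi_\delta)}] = (y_\delta/y_0)\,\expec_{\qprob_\delta}[1/\fR_\delta]\leq (y_\delta/y_0)/\lR$, secures uniform integrability and upgrades probability convergence to $L^1(\qprob)$. Subtracting $\epsilon_\delta\to 0$ in $L^1(\qprob)$ and using $\alpha_\delta\to 1$ then delivers (i).

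Part (ii) follows from (i) by Vitali's theorem: Remark \ref{rem: utility bdd R} gives $|U_\delta(x)|\leq A + B e^{-\alpha_\delta x}$ uniformly in $\delta$, and the same exponential moment bound makes $\{U_\delta(X^\delta_T+\xi_\delta)\}$ uniformly $\prob$-integrable, so pointwise convergence $U_\delta\to U_0$ together with $L^1$-convergence of $X^\delta_T+\xi_\delta$ gives $u_\delta\to u_0$. For (iii), continuity of $S$ yields $\langle X^\delta - X^0\rangle_T = \int_0^T (H_\delta-H_0)^\top d\langle S\rangle_t (H_\delta-H_0)_t$. Since $X^0$ is a $\qprob$-martingale and $X^\delta$ a $\qprob$-supermartingale, both starting at $0$ with $X^\delta_T\to X^0_T$ in $L^1(\qprob)$, a num\'eraire/Doob-maximal argument in the spirit of \cite{Kardaras} (used in \cite{Larsen,Kardaras-Zitkovic}) produces convergence of $X^\delta-X^0$ to $0$ in the semimartingale topology under $\qprob$, after which the Burkholder--Davis--Gundy inequality with exponent $p\in(0,1)$ converts this to the claimed $L^{p/2}$-convergence of the quadratic-variation integrals.

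The delicate step is the $\phi$-estimate: because $\fR_\delta$ converges to $1$ only pointwise (and only after being composed with the still-unknown quantity $X^\delta_T+\xi_\delta$), $\expec_\qprob[\phi(Y^\delta_T)]$ cannot be bounded purely by $\qprob$-moments. The proof must simultaneously control three entangled error sources---$\fR_\delta(X^\delta_T+\xi_\delta)$, the multiplier ratio $y_\delta/y_0$, and the endowment perturbation $\epsilon_\delta$---and the two-sided super/martingale inequalities are essential, since a one-sided estimate would leave an uncontrolled sign gap.
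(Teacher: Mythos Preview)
Your proposal identifies the right structural ingredients---duality, the two-sided super/martingale inequalities, a coercive function---but the core estimate for (i) has a circularity you flag as ``delicate'' without actually resolving. From your identity
\[
\expec_\qprob[\phi(Y^\delta_T)] \;=\; \expec_\qprob[Y^\delta_T] \;-\; \frac{y_\delta}{y_0}\,\expec_{\qprob_\delta}\!\bra{\frac{Y^\delta_T}{\fR_\delta(X^\delta_T+\xi_\delta)}},
\]
the first term is indeed $\leq \expec_\qprob[\epsilon_\delta]\to 0$. The second, however, cannot be pinched as you describe. Writing $Y^\delta_T/\fR_\delta = Y^\delta_T + Y^\delta_T(1/\fR_\delta-1)$, the correction carries only the \emph{fixed} bound $|1/\fR_\delta-1|\leq \max(1/\lR-1,\,1-1/\uR)$; Assumption~\ref{ass: marginal bound exp} gives a uniform sandwich, not uniform convergence $\fR_\delta\to1$. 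Hence that piece is $O\bigl(\expec_{\qprob_\delta}[|Y^\delta_T|]\bigr)$, precisely the quantity you are trying to control. Moreover your appeal to ``$\expec_{\qprob_\delta}[\epsilon_\delta]\to0$ from Assumption~\ref{ass: endowment conv}'' is unjustified: that assumption controls $\epsilon_\delta$ in $L^1(\qprob)$, not $L^1(\qprob_\delta)$, and transferring between the two measures requires bounding the density $d\qprob_\delta/d\qprob\propto \fR_\delta\,e^{-Y^\delta_T}$, which again presupposes control of $Y^\delta_T$.

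The paper breaks this circularity by a different mechanism. After reducing to $\alpha_\delta\equiv1$ it works under $\qprob$ alone with the integrand $(1-\fR_\delta(\cX^\delta_T)e^{-\Delta\cX^\delta_T})\,\Delta X^\delta_T$, whose $\qprob$-expectation is $\leq0$ by the same summed inequalities, and then bounds its \emph{negative part pointwise} by
\[
\pare{\fR_\delta(X^0_T+\xi_\delta)\,e^{-\Delta\xi_\delta}-1}\pare{I_\delta\bigl(U_0'(\cX^0_T)\bigr)-\xi_\delta-X^0_T},
\]
a quantity involving only $X^0_T$ and $\xi_\delta$, \emph{not} $X^\delta_T$. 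This ``replace $X^\delta_T$ by its crossing value'' trick is exactly what decouples the estimate from the unknown, and it is absent from your proposal. A related issue affects your (ii): a first-moment bound $\expec_\prob[e^{-\alpha_\delta(X^\delta_T+\xi_\delta)}]\leq C$ does not by itself give uniform $\prob$-integrability of $U_\delta(X^\delta_T+\xi_\delta)$; the paper proves (ii) via a separate three-region analysis of $U_\delta(\cX^\delta_T)/U_0(\cX^0_T)$ rather than Vitali. Your sketch for (iii) is essentially correct and matches the paper's Doob-maximal plus Burkholder--Davis--Gundy route.
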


\begin{rem}\label{rem: comparsion}
 When $(U_\delta)_{\delta\geq 0}$ are defined on $\Real_+$, the analogue result has been proved in \cite{Larsen} and \cite{Kardaras-Zitkovic}. Therein $\prob-\lim_{\delta \downarrow 0} X^\delta_T / X^0_T =1$ and $\lim_{\delta\downarrow 0} u_\delta = u_0$ are proved. Define $\overline{\prob}$ via $d\overline{\prob}/d\prob = c U_0'(X^0_T) X^0_T$ for a normalization constant $c$. Then $X^0$ has the num\'{e}raire property under $\overline{\prob}$, i.e., $X^\delta/X^0$ is a $\overline{\prob}$-supermartingale. Then $\lim_{\delta\downarrow 0}\expec_{\overline{\prob}}[|X^\delta_T / X^0_T -1|]=0$ and the convergence of the associated strategies follow from \cite[Theorem 2.5]{Kardaras}.
\end{rem}

\begin{rem}\label{rem: UI}
 In \cite{Larsen} and \cite{Kardaras-Zitkovic}, an uniform integrability assumption is the key to stability. Assumption \ref{ass: marginal bound exp} implies an analogue condition is satisfied. Indeed,
 Remark \ref{rem: utility bdd R} implies that $(U_\delta)_{\delta\geq 0}$ is uniformly bounded from above by
 \[
  U_* (x) := \frac{\uR}{\alpha_*} + \frac{\lR}{\alpha_*} \pare{1-\exp(-\alpha_* x)}, \quad \text{ where } \alpha_* = \min_{\delta\geq 0} \alpha_\delta.
 \]
 Since $\expec_{\prob}[V_*(d\qprob/d\prob)]<\infty$, where $V_*$ is the convex conjugate of $U_*$ and dominates all $V_\delta$, $\{V_\delta(d\qprob/d\prob)\}_{\delta\geq 0}$ is then clearly uniformly integrable under $\prob$. However the additional structure in Assumption \ref{ass: marginal bound exp} allows us to discuss the rate of convergence in what follows.
\end{rem}

Let us describe rates of convergence for the ratio of marginal utilities, absolute risk aversion, and random endowments via
\[
 f(\delta) := \sup_{x\in \Real} \left|\fR_\delta(x) -1\right|, \quad g(\delta) := |\alpha_\delta -1|, \quad \text{ and } \quad h(\delta):= \expec_\qprob\bra{|\xi_\delta -\xi_0|^2}, \quad \text{ for } \delta \geq 0.
\]

\begin{cor}\label{cor: convergence rate}
 Let Assumptions \ref{ass: utility conv}, \ref{ass: marginal bound exp}, and \ref{ass: finite entropy} hold. Suppose that $\xi_\delta$ is bounded uniform in $\delta$, moreover
 $\lim_{\delta \downarrow 0} f(\delta) =\lim_{\delta \downarrow 0} g(\delta) = \lim_{\delta \downarrow 0} h(\delta) = 0$. Then
 \[
  \expec_\qprob\bra{|X^\delta_T - X^0_T|} \sim O\pare{f(\delta)^2 + g(\delta)+ h(\delta)}, \quad \text{ for sufficiently small } \delta.
 \]
\end{cor}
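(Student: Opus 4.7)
The plan combines the primal--dual first-order conditions with a Pinsker bootstrap. Writing $Z := d\qprob/d\prob$, $Z_\delta := d\qprob_\delta/d\prob$, Proposition~\ref{prop: u_delta wellposed} together with the parameterization $U_\delta'(x) = \fR_\delta(x) e^{-\alpha_\delta x}$ yields
\begin{equation*}
  X^0_T + x_0 = -\log(y_0 Z), \qquad \alpha_\delta(X^\delta_T + x_0) = A_\delta - \log(y_\delta Z_\delta),
\end{equation*}
where $A_\delta := \log \fR_\delta(X^\delta_T + x_0)$. Since $\fR_\delta \in [\lR,\uR]$ with $|\fR_\delta - 1| \leq f(\delta)$, $\|A_\delta\|_\infty \leq C f(\delta)$ for small $\delta$.

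The main step is a quadratic entropy bound. The $\qprob$-supermartingale and $\qprob_\delta$-martingale properties of $X^\delta$ (from Proposition~\ref{prop: u_delta wellposed}), and the analogous statements for $X^0$, give $\expec_\qprob[X^\delta_T - X^0_T] \leq 0 \leq \expec_{\qprob_\delta}[X^\delta_T - X^0_T]$. Substituting the first-order conditions above, and using
\begin{equation*}
  H(\qprob|\prob) = -x_0 - \log y_0, \qquad H(\qprob_\delta|\prob) = -\alpha_\delta x_0 - \log y_\delta + \expec_{\qprob_\delta}[A_\delta],
\end{equation*}
obtained from $\expec_\qprob[\wt{X}^0_T] = x_0 = \expec_{\qprob_\delta}[\wt{X}^\delta_T]$ at the optimum, the two inequalities combine (summing them) to
\begin{equation*}
  H(\qprob | \qprob_\delta) + H(\qprob_\delta | \qprob) \;\leq\; \expec_\prob\bra{(Z_\delta - Z)\,A_\delta} \;\leq\; Cf(\delta)\,\|Z - Z_\delta\|_{L^1(\prob)}.
\end{equation*}
Pinsker's inequality $\|Z - Z_\delta\|_{L^1(\prob)} \leq \sqrt{2\bra{H(\qprob|\qprob_\delta) + H(\qprob_\delta|\qprob)}}$ self-improves this to $H(\qprob | \qprob_\delta) + H(\qprob_\delta | \qprob) = O(f(\delta)^2)$ and $\|Z - Z_\delta\|_{L^1(\prob)} = O(f(\delta))$. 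Notably, $g$ does not enter here.

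The second step converts these into the claimed $L^1(\qprob)$ estimate. From the first-order conditions,
\begin{equation*}
  R := X^\delta_T - X^0_T = \frac{1}{\alpha_\delta}\bra{A_\delta + \log\frac{y_0 Z}{y_\delta Z_\delta}} - \frac{\alpha_\delta - 1}{\alpha_\delta}(X^0_T + x_0).
\end{equation*}
The second term contributes $O(g(\delta))$ in $L^1(\qprob)$ because $\expec_\qprob[|X^0_T + x_0|] = \expec_\qprob[|\log(y_0 Z)|]$ is finite (use $-t\log t \leq 1/e$ on $(0,1)$ together with finite entropy). The bracketed term yields the $f(\delta)^2$ rate: the individual (not merely subtracted) super/martingale inequalities from Step~1 pin down $\log(y_0/y_\delta)$ to within $O(f(\delta)^2 + g(\delta))$, and the remaining random piece $A_\delta + \log(Z/Z_\delta)$, recentered, is estimated in $L^1(\qprob)$ by combining the $O(f(\delta)^2)$ entropy bound with the cancellation between $A_\delta$ and $\log(Z/Z_\delta)$ enforced by dual optimality.

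The technical crux is this last cancellation: $\|A_\delta\|_\infty$, $\expec_\qprob[|\log(Z/Z_\delta)|]$ and $|\log(y_0/y_\delta)|$ are each only $O(f(\delta))$, so naive triangle inequalities give only the linear rate. The quadratic rate emerges because the dual optimality of $(y_\delta,\qprob_\delta)$ forces the first-order behavior of $A_\delta$ to align with that of $\log(Z_\delta/Z)$; making this rigorous at the $L^1(\qprob)$ level, rather than in expectation, is where the $O(f(\delta)^2)$ entropy bound of Step~1 enters decisively.
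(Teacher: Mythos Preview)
Your Pinsker bootstrap in Step~1 is a nice idea and does deliver $H(\qprob\,|\,\qprob_\delta) + H(\qprob_\delta\,|\,\qprob) = O(f(\delta)^2)$, at least after reducing to $\alpha_\delta=1$. (For general $\alpha_\delta$ your claim that ``$g$ does not enter here'' is not quite right: summing the two inequalities leaves the extra term $-(\alpha_\delta-1)\expec_{\qprob_\delta}[X^0_T]$, whose size you have not controlled.) The real problem, however, is Step~2.

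The entropy bound controls only the \emph{expectation} of the log--ratio: $\expec_\qprob[\log(Z/Z_\delta)] = H(\qprob\,|\,\qprob_\delta) = O(f(\delta)^2)$. It does not control $\expec_\qprob[|\log(Z/Z_\delta)|]$, which in general is only $O(f(\delta))$: take $Z_\delta/Z = \exp\!\big(f(\delta) N - f(\delta)^2/2\big)$ for a standard normal $N$, where the symmetrized entropy is $f(\delta)^2$ but $\expec_\qprob[|\log(Z/Z_\delta)|]\asymp f(\delta)$. You acknowledge this (``each only $O(f(\delta))$'') and then assert that dual optimality forces a cancellation between $A_\delta$ and $\log(Z_\delta/Z)$ at the $L^1(\qprob)$ level. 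But dual optimality has already been fully used to produce the first--order conditions and the super\-/martingale inequalities; those give you $|\expec_\qprob[X^\delta_T - X^0_T]| = O(f(\delta)^2)$, a statement about the \emph{mean}. I do not see any mechanism in your outline that upgrades this to $\expec_\qprob[|X^\delta_T - X^0_T|] = O(f(\delta)^2)$, and the entropy bound --- itself an expectation statement --- cannot do it alone.

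The paper's argument is structurally different and avoids entropy. From the same pair of inequalities one gets that the random variable
\[
\Big(1-\tfrac{U'_\delta(\cX^\delta_T)}{U'_0(\cX^0_T)}\Big)\big(X^\delta_T - X^0_T\big)
\]
has nonpositive $\qprob$--expectation, so its $L^1(\qprob)$ norm is at most twice the expectation of its negative part. The key observation is that this negative part occurs only when $X^\delta_T$ lies between $X^0_T$ and $I_\delta(U'_0(\cX^0_T))-\xi_\delta$, and on that event \emph{both} factors admit pointwise bounds involving only the unperturbed wealth $\cX^0_T$: namely $|\fR_\delta(X^0_T+\xi_\delta)e^{-\Delta\xi_\delta}-1|$ and $|I_\delta(U'_0(\cX^0_T))-\cX^0_T-\Delta\xi_\delta|$, each uniformly $O(f(\delta)+g(\delta))$. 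This immediately gives
\[
\expec_\qprob\!\Big[\big|1-\fR_\delta(\cX^\delta_T)e^{-\Delta\cX^\delta_T}\big|\,|\Delta X^\delta_T|\Big]=O\big((f(\delta)+g(\delta))^2\big),
\]
and the passage to $\expec_\qprob[|\Delta X^\delta_T|]$ then uses that the first factor is bounded below by a positive constant once $|\Delta\cX^\delta_T|$ is away from zero. It is this decoupling of the two factors at the reference point $\cX^0_T$ --- not any entropy or Pinsker argument --- that produces the quadratic rate in $f(\delta)$; your proposal does not supply an analogue of this step.
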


\begin{rem}\label{rem: conv rate}
 When $U_\delta$ is the exponential utility with risk aversion $a_\delta$ and no random endowment presnets, it is clear that $X^\delta_T = X^0_T/\alpha_\delta$ converges to $X^0_T$ at the rate of $g(\delta)$. When $U_\delta$ deviates from exponential utility and random endowment presents, the rate of convergence for the optimal payoff is determined by three components: convergence of the ratio of marginal utilities, convergence of absolute risk aversions, and convergence of random endowments. Corollary \ref{cor: convergence rate} shows that the rate of convergence is at least second order on the first component, first order on the second and third components. This provides a quantitative measure on how far $X^\delta_T$ is away from $X^0_T$.

 The convergence rate for optimal strategies can also be determined.
 When $S$ is continuous, Corollary \ref{cor: convergence rate} and Burkholder-Davis-Gundy inequality combined imply
 \[
  \expec_\qprob\bra{\pare{\int_0^T (H_\delta-H_0)^{\top}_t d\langle S\rangle_t (H_\delta -H_0)_t}^{p/2}} \sim O \pare{f(\delta)^2 + g(\delta) + h(\delta)},
 \]
 for any $p\in (0,1)$ and small $\delta$;
 see Lemma \ref{lem: Sp norm local mart} and Corollary \ref{cor: quad-var delta X} for more details. Here $H_0$ is the hedging strategy in the exponential hedging problem (cf. \cite{six-authors}, \cite{Kabanov-Stricker}).
\end{rem}

Another application of Theorem \ref{thm: conv R} is the stability of utility-based prices with respect to agent's preference.
Consider a contingent claim $B\in L^\infty(\F_T)$. An agent, endowed with utility $U_\delta$ and endowment $\xi_\delta$, takes her preference into account to price the claim $B$ as
\[
 \expec_{\qprob_\delta}[B],
\]
where $\qprob_\delta$ is introduced in Proposition \ref{prop: u_delta wellposed}. This price is called \emph{fair price} (Davis price), cf. \cite{Davis}.
Theorem \ref{thm: conv R} implies the continuity of fair price with respect to agent's preference.
\begin{cor}\label{cor: davis price}
 Let Assumptions \ref{ass: utility conv}, \ref{ass: marginal bound exp}, \ref{ass: random endowment}, \ref{ass: finite entropy} and \ref{ass: endowment conv} hold. Then
 \[
  \lim_{\delta\downarrow 0} \expec_{\qprob_\delta}[B]= \expec_\qprob[B].
 \]
\end{cor}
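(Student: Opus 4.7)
My plan is to establish $L^1(\prob)$-convergence of the Radon-Nikodym derivatives $Z_\delta := d\qprob_\delta/d\prob$ to $Z_0 := d\qprob/d\prob$ and then, using the superhedging bound on $B$, pass to the limit in $\expec_\prob[B Z_\delta] = \expec_{\qprob_\delta}[B]$. By Proposition~\ref{prop: u_delta wellposed}, $y_\delta Z_\delta = U'_\delta(X^\delta_T + \xi_\delta)$ for a positive constant $y_\delta$. Assumption~\ref{ass: utility conv}, together with Rockafellar's classical result that pointwise convergence of concave functions entails locally uniform convergence of their derivatives, gives $U'_\delta \to \exp(-\cdot)$ locally uniformly on $\Real$. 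Theorem~\ref{thm: conv R}(i) and Assumption~\ref{ass: endowment conv} (which, combined with $\alpha_\delta \to 1$, yields $X^\delta_T + \xi_\delta \to X^0_T + \xi_0$ in $\prob$-probability) then deliver $U'_\delta(X^\delta_T + \xi_\delta) \to \exp(-(X^0_T + \xi_0))$ in $\prob$-probability.

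To upgrade this to $L^1(\prob)$-convergence of densities, I would first verify $y_\delta \to y_0 := \expec_\prob[\exp(-(X^0_T + \xi_0))]$. The bound $U'_\delta(x) \leq \uR \exp(-\alpha_\delta x)$ from Assumption~\ref{ass: marginal bound exp} combined with $\alpha_\delta \xi_\delta \geq \xi_0 - C$ from Assumption~\ref{ass: endowment conv} yields the domination $U'_\delta(X^\delta_T + \xi_\delta) \leq \uR\, e^{C} \exp(-\alpha_\delta X^\delta_T - \xi_0)$; establishing uniform integrability of the right-hand side under $\prob$, through the exponential-moment estimates underpinning the proof of Theorem~\ref{thm: conv R}, would then give $y_\delta \to y_0$. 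Since $Z_\delta$ and $Z_0$ are $\prob$-densities, Scheff\'e's lemma promotes the convergence in probability to convergence in $L^1(\prob)$.

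For the final step, I would approximate $B$ by $B_M := B \wedge M$. For fixed $M$, $\left|\expec_\prob[B_M(Z_\delta - Z_0)]\right| \leq M\,\expec_\prob[|Z_\delta - Z_0|] \to 0$ as $\delta \downarrow 0$. For the tail, using $0 \leq B \leq \wt{x} + (G\cdot S)_T$ and the fact that $G\cdot S$ is bounded below, hence a $\qprob_\delta$-supermartingale for every $\delta \geq 0$, one has
\[
 \expec_{\qprob_\delta}\bra{B - B_M} \leq \wt{x}\, \qprob_\delta(B > M) + \expec_{\qprob_\delta}\bra{(G\cdot S)_T\, \indic_{B > M}}.
\]
The hard part will be showing that this tail vanishes as $M \to \infty$ uniformly in $\delta$, i.e., that $B$ is uniformly integrable under the family $\{\qprob_\delta\}_{\delta \geq 0}$. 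A natural route is to derive uniform $L^p(\prob)$-boundedness of $\{Z_\delta\}$ for some $p > 1$ from Assumption~\ref{ass: marginal bound exp} together with exponential-moment consequences of $\mathcal{M}^e \neq \emptyset$, and then invoke H\"older's inequality against the $\qprob$-integrable bound $\wt{x} + (G\cdot S)_T$. Once this uniform integrability is secured, sending first $\delta \downarrow 0$ and afterwards $M \to \infty$ yields $\expec_{\qprob_\delta}[B] \to \expec_\qprob[B]$.
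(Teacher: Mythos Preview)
Your overall strategy---Scheff\'e for the densities, then uniform tail control of $B$---matches the paper's. But the execution diverges at two points, and the second one is a genuine gap.

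For $y_\delta\to y_0$, your domination $U'_\delta(\cX^\delta_T)\le \uR e^{C}\exp(-\alpha_\delta X^\delta_T-\xi_0)$ merely transfers the problem to uniform integrability of $\exp(-\alpha_\delta X^\delta_T-\xi_0)$ under $\prob$, which is not delivered by Theorem~\ref{thm: conv R} or its proof (Lemma~\ref{lem: ratio exp} gives only convergence of $\expec_\prob[\exp(-\cX^\delta_T)]$, not uniform integrability). The paper instead rewrites $\expec_\prob[U'_\delta(\cX^\delta_T)]/\expec_\prob[U'_0(\cX^0_T)]=\expec_\qprob[\fR_\delta(\cX^\delta_T)\exp(-\Delta\cX^\delta_T)]$ and shows this tends to $1$ by splitting on $\{|\cX^\delta_T|\le N\}$ and its complement, invoking the precise estimates \eqref{eq: bet M exp} and \eqref{eq: outside M exp}. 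This step is not soft.

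The real gap is the tail. Uniform $L^p(\prob)$-boundedness of $(Z_\delta)$ for some $p>1$ would amount to $\sup_\delta \expec_\prob[\exp(-p\alpha_\delta\cX^\delta_T)]<\infty$, i.e.\ a higher exponential moment of the optimal terminal wealth. Assumption~\ref{ass: finite entropy} supplies only finite relative entropy---a single exponential moment---so this cannot be extracted from the hypotheses, and there is no reason to expect it in general semimartingale models. The paper avoids any moment bound on densities: since $0\le B\le \wt{x}+(G\cdot S)_T$ with $G\cdot S\ge 0$ a nonnegative $\qprob_\delta$-supermartingale starting at $0$, one has $\expec_{\qprob_\delta}[(G\cdot S)_T]\le 0$ and $\expec_{\qprob_\delta}[B]\le\wt{x}$ for every $\delta$, whence
\[
\expec_{\qprob_\delta}\bra{B\,\indic_{\{B\ge N\}}}\le \wt{x}\,\qprob_\delta(B\ge N)\le \wt{x}\,\frac{\expec_{\qprob_\delta}[B]}{N}\le \frac{\wt{x}^{2}}{N},
\]
uniformly in $\delta$. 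That is the missing idea: uniform integrability of $B$ under $(\qprob_\delta)_{\delta\ge 0}$ follows directly from the superhedging bound and the common supermartingale property of $G\cdot S$, with no appeal to integrability of the densities.
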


Another utility-based pricing is the \emph{indifference price} introduced into mathematical finance by \cite{Hodges-Neuberger}; See \cite{Carmona} and references therein for recent development on this topic.
Given an agent endowed with utility $U_\delta$ and an initial wealth $x_0\in \Real$, her \emph{indifference buyer's price}, $p_\delta=p(B, x, U_\delta)$, of $B$ is defined as the solution to the equation
\[
 u_\delta(x_0+B- p_\delta) = u_\delta(x_0),
\]
where $u_\delta(\zeta)$ is defined in \eqref{def: u_delta} with $\xi_\delta = \zeta$. The existence and uniqueness of $p_\delta$ is proved in \cite[Proposition 7.2]{Owen-Zitkovic}. Theorem \ref{thm: conv R} ii) allows us to establish the following stability property of the indifference buyer's price with respect to agent's preference.

\begin{cor}\label{cor: indifference price}
 Let Assumptions \ref{ass: utility conv}, \ref{ass: marginal bound exp}, and \ref{ass: finite entropy} hold.
 Then $\lim_{\delta \downarrow 0} p_\delta = p_0$.
\end{cor}

\begin{rem}
 The continuity of Davis prices and indifference prices with respect to agent's preference has been investigated in \cite{Carassus-Rasonyi} in a discrete time market with bounded stock price processes.
\end{rem}

\subsection{Utilities defined on $\Real_+$}
We continue with our second main result, which concerns the convergence of problems with utilities defined on $\Real_+$ to the exponential utility maximization problem. Consider a sequence of utility random fields $\mathcal{U}_p: \Omega \times \Real_+ \rightarrow \Real$, indexed by $p<0$, each of which is of the form \[
 \mathcal{U}_p(x) = D \,U_p(x), \quad  x\in \Real_+,
\]
where $D$ is a $\F_T$-measurable positive random variable and $U_p:\Real_+ \rightarrow \Real$ is a standard utility function. We assume that each $U_p$ is comparable to power utility $x^p/p$ in the following sense:

\begin{ass}\label{ass: marginal bound power}
 There exist constants $0<\lR_p\leq 1\leq \uR_p$ such that
 \[
  \lR_p \leq \fR_p(x):=\frac{U'_p(x)}{x^{p-1}} \leq \uR_p, \quad \text{ for all } x\in \Real_+.
 \]
\end{ass}
\begin{rem}\label{rem: utility bdd R_+}
The previous assumption implies that each $U_p$ is bounded from above. Indeed, integrating $U'_p(x) \leq \uR_p \,x^{p-1}$ on $(1, x)$ yields $U_p(x)\leq U_p(1) + \uR_p (x^p/p  -1/p)\leq U_p(1) -\uR_p/p$ for $x\geq 1$ and $p<0$. Moreover $U_p$ is sandwiched between two utilities with relative risk aversion $1-p$. To see this, integrating $\lR_p x^{p-1}\leq U'_p(x) \leq \uR_p x^{p-1}$ on $(1, x)$ when $x\geq 1$ or on $(x, 1)$ when $x<1$ yields $c_p(x^p/p-1/p) + U_p(1) \leq U_p(x) \leq C_p(x^p/p-1/p) + U_p(1)$, for $x>0$ and some constants $c_p$ and $C_p$. Furthermore each $U_p$ satisfies the Inada condition, i.e., $\lim_{x\downarrow 0} U'_p(x) =\infty$ and $\lim_{x\uparrow \infty} U'_p(x) =0$, and $U_\delta$ has reasonable asymptotic elasticity, i.e., $AE_\infty(U_\delta)<1$.
\end{rem}

The discounted prices of risky assets are specified to be stochastic exponential $S= (\mathcal{E}(R^1), \cdots, \mathcal{E}(R^d))$, where $R$ is an $\Real^d$-valued \cadlag\, locally bounded semimartingale with $R_0=0$.
The agent is endowed with the utility random field $\mathcal{U}_p$ and an initial capital $x_0\in \Real_+$. A trading strategy is a predictable $R$-integrable $\Real^d$-valued process $\pi$ whose $i$-th component $\pi^i$ represents the fraction of current wealth invested in the $i$-th risky asset. Then the associated wealth process $X(\pi)$ satisfies
\[
 X_t = x_0 + \int_0^t X_{s-} \pi_s \,dR_s, \quad 0\leq t\leq T.
\]
A trading strategy is \emph{admissible} if the associated wealth process is strictly positive. We denote by $\mathcal{A}(x_0)$ the class of admissible trading strategies. For an admissible strategy $\pi$, $H^i:= \pi^i X / S^i_- \indic_{\{S^i_- \neq 0\}}$ corresponds to the number of shares invested in the $i$-th asset.

The agent chooses admissible trading strategies to maximize her utility of payoff:
\begin{equation}\label{def: u_p}
 u_p(x_0):= \sup_{\pi\in \mathcal{A}(x_0)} \expec_\prob\bra{D U_p(X_T(\pi))}.
\end{equation}
The dependence of $u_p$ on $x_0$ will be omitted if no confusion is caused.
Since $U_p$ is bounded from above, $u_p(x_0)<\infty$ whenever $D_T$ has finite $\prob$-expectation. We recall the following version of Theorem 3.10 from \cite{Karatzas-Zitkovic}.

\begin{prop}[Karatzas-\v{Z}itkovi\'{c}]\label{prop: u_p wellposed} Assume that the set of equivalent local martingale measures for $S$ is not empty, moreover there exist constants $0<k_1\leq k_2<\infty$ such that $k_1\leq D \leq k_2$. Then for each $p<0$ there exists an optimal strategy $\pi_p\in \mathcal{A}(x_0)$ for \eqref{def: u_p}. The associated wealth process $X^{(p)}$ satisfies
\[
 y_p Y^{(p)}_T = D U'_p(X^{(p)}_T),
\]
where $y_p= u'_p(x_0)$ and $Y^{(p)}$ is some supermartingale deflator with $Y^{(p)}_0=1$. Moreover
\[
 y_p \,x_0 = \expec_\prob\bra{D U'_p(X^{(p)}_T) X^{(p)}_T} \geq \expec_\prob\bra{D U'_p (X^{(p)}_T) X_T},
\]
for any admissible wealth process $X$.
\end{prop}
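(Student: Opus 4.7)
Since this proposition is presented as a version of Theorem 3.10 in Karatzas--\v{Z}itkovi\'{c}, the proof I would give consists of verifying that the present setting falls inside their abstract framework and then invoking their existence and characterization result. The random field $\mathcal{U}_p(t,x) = D_t\, U_p(x)$ is strictly increasing, strictly concave and $C^1$ in $x$ by assumption on $U_p$, while $D$ is \cadlag, adapted and uniformly bounded below and above by positive constants. The Inada conditions and the asymptotic elasticity estimate $AE_\infty(U_p)<1$ needed in their theorem have already been recorded in Remark \ref{rem: utility bdd R_+} as consequences of Assumption \ref{ass: marginal bound power}. Combining the upper bound $U_p(x) \leq U_p(1) - \uR_p/p$ with $D_T \leq k_2$ gives $u_p(x_0) \leq k_2\pare{U_p(1) - \uR_p/p} < \infty$, so the primal problem is well posed.

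The nonemptiness of the set of equivalent local martingale measures for $S$ guarantees that the set of equivalent supermartingale deflators is nonempty, which, together with finiteness of $u_p$ and the properties of $U_p$ just listed, supplies all the hypotheses of Theorem 3.10 in Karatzas--\v{Z}itkovi\'{c}. That theorem then produces an optimal strategy $\pi_p \in \mathcal{A}(x_0)$, an optimal dual element $Y^{(p)}$ in the set of supermartingale deflators with $Y^{(p)}_0 = 1$, and the pointwise first-order relation $y_p Y^{(p)}_T = D_T U'_p(X^{(p)}_T)$ for a positive Lagrange multiplier $y_p$. Standard convex duality arguments under reasonable asymptotic elasticity yield the identification $y_p = u'_p(x_0)$, via differentiability of the value function at $x_0$ and the envelope relation between primal and dual problems.

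To obtain the last display, I would test the first-order relation against $X^{(p)}_T$ and use the $Y^{(p)}$-martingale property of $X^{(p)}$ built into the Karatzas--\v{Z}itkovi\'{c} conclusions, giving $y_p x_0 = \expec_\prob[D_T U'_p(X^{(p)}_T) X^{(p)}_T]$. For a general admissible wealth process $X$, the product $Y^{(p)} X$ is a positive supermartingale starting at $x_0$, hence $\expec_\prob[Y^{(p)}_T X_T] \leq x_0$; substituting $y_p Y^{(p)}_T = D_T U'_p(X^{(p)}_T)$ then yields the stated inequality. Since this proposition is effectively a citation, the only real obstacle is the initial bookkeeping that translates the marginal-utility bound in Assumption \ref{ass: marginal bound power} into the regularity requirements of the abstract duality framework, and checking that the time-dependent factor $D$, being bounded between two positive constants, does not destroy any of the uniform integrability conditions implicitly used in [Karatzas--\v{Z}itkovi\'{c}]. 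Once that is done, the rest of the statement is a direct reading of the conclusions of Theorem 3.10.
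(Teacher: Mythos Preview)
Your proposal is correct and matches the paper's treatment: the paper does not prove this proposition at all but simply cites it as a version of Theorem~3.10 in Karatzas--\v{Z}itkovi\'{c}, so your sketch of verifying that the present hypotheses (Inada conditions, asymptotic elasticity, boundedness of $D$, finiteness of $u_p$, existence of an equivalent local martingale measure) place the problem inside their framework and then invoking their conclusions is exactly the intended argument.
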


To state our second main result, let us recall the exponential hedging problem. Given a contingent claim $B\in L^\infty(\F_T)$, the agent choose permissible strategy to maximize the expected exponential utility of the terminal wealth including the claim,
\begin{equation}\label{def: exp hedge}
 \sup_{\vartheta \text{ permissible}} \expec_\prob\bra{-\exp(B-x_0 - \vartheta\cdot R_T)}.
\end{equation}
Here $\vartheta$ is the monetary value invested in the risky assets. Its corresponding number of shares is $H^i:= \vartheta^i/S^i_- \indic_{\{S^i_-\neq 0\}}$ which satisfies $H\cdot S = \vartheta \cdot R$. The strategy $\vartheta$ is permissible if its corresponding $H\in \mathcal{H}^{perm}$. When $S$ is locally bounded, \eqref{def: exp hedge} admits an optimal strategy $\hat{\vartheta}$; cf. \cite[Theorem 2.1]{Kabanov-Stricker}.

We impose the following assumption on filtration which is satisfied for the Brownian filtration.

\begin{ass}\label{ass: cont filtration}
 The filtration $(\F_t)_{t\in[0,T]}$ is continuous, i.e., all $\F$-local martingales are continuous.
\end{ass}
The previous assumption implies that $S$ is continuous. Hence $R$ satisfies the \emph{structure condition}:
\[
 R=M+ \int d\langle M\rangle \lambda,
\]
where $M$ is a continuous local martingale with $M_0=0$ and $\lambda\in L^2_{loc}(M)$; cf. \cite{Schweizer}.

Our second main result studies the asymptotic behavior of the optimal strategy $\pi_p$  for \eqref{def: u_delta} as $p\downarrow -\infty$.

\begin{thm}\label{thm: conv R_+}
 Let Assumptions \ref{ass: finite entropy}, \ref{ass: marginal bound power}, and \ref{ass: cont filtration} hold. Set $D= \exp(B)$ for $B\in L^\infty(\F_T)$. If  $\lR_p$ and $\uR_p$ in Assumption \ref{ass: marginal bound power} satisfy
 \begin{equation}\label{ass: Rp conv}
  \limsup_{p\downarrow -\infty} \,(1-p) \,(\uR_p-1) <\infty \quad \text{ and } \quad \limsup_{p\downarrow -\infty} \,(1-p)\, (1-\lR_p) <\infty,
 \end{equation}
 then
 \[
  \prob-\lim_{p\downarrow -\infty} \int_0^T \pare{(1-p) \pi_p - \hat{\vartheta}}^{\top}_t d\langle M \rangle_t \pare{(1-p) \pi_p - \hat{\vartheta}}_t =0.
 \]
\end{thm}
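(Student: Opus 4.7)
The strategy is to adapt the opportunity-process method of \cite{Nutz-asy}, which establishes the claim when $U_p(x) = x^p/p$ exactly. The new ingredient is to show that the multiplicative perturbation $\fR_p$ between $U'_p(\cdot)$ and $x^{p-1}$ contributes only a lower-order correction in the regime $p \downarrow -\infty$ under the rate condition \eqref{ass: Rp conv}.

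First, from Proposition \ref{prop: u_p wellposed} applied with $D_T = \exp(B)$, the first-order condition reads $y_p Y^{(p)}_T = \exp(B)\, \fR_p(X^{(p)}_T)\, (X^{(p)}_T)^{p-1}$. Taking logarithms and dividing by $1-p$,
\[
 \log X^{(p)}_T = \frac{1}{1-p}\bra{B + \log \fR_p(X^{(p)}_T) - \log y_p - \log Y^{(p)}_T}.
\]
The uniform bound $\abs{\log \fR_p(\cdot)} \leq \max\pare{\abs{\log \lR_p},\, \log \uR_p}$, combined with Assumption \ref{ass: marginal bound power} and \eqref{ass: Rp conv}, shows that $\log \fR_p$ is of size $(1-p)^{-1}$, so this term contributes only $O((1-p)^{-2})$ to $\log X^{(p)}_T$, negligible at the leading scale $(1-p)^{-1}$ at which the exponential limit emerges.

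Second, under Assumption \ref{ass: cont filtration} every $\F$-local martingale is continuous; in particular the dynamic opportunity process associated with \eqref{def: u_p}, after normalization by the current utility, is a continuous semimartingale. Applying Yor's formula to the martingale $Y^{(p)} X^{(p)}$ and using the structure condition $R = M + \int d\langle M\rangle\, \lambda$, one obtains a pointwise identity of the form $(1-p)\pi_{p,t} = \lambda_t + \zeta^{(p)}_t + \varepsilon^{(p)}_t$, where $\zeta^{(p)}$ is the integrand of the Kunita--Watanabe decomposition of the opportunity process against $M$, and $\varepsilon^{(p)}$ is the additional drift produced by the $\fR_p$-correction from Step~1; the latter vanishes in $L^2(d\langle M\rangle)$ in probability as $p \downarrow -\infty$. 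The BSDE governing the opportunity process is quadratic and its terminal condition differs from its pure-power analogue by the factor $\fR_p(X^{(p)}_T)$ tending to $1$ at rate $(1-p)^{-1}$. A stability estimate for quadratic BSDEs with bounded terminal conditions, combined with the analysis in \cite{Nutz-asy} of the pure-power opportunity process, then yields convergence of $\lambda + \zeta^{(p)}$ to $\hat\vartheta$ in the quadratic variation norm, which together with $\varepsilon^{(p)} \to 0$ delivers the claimed convergence.

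\textbf{The main obstacle} is the circularity in the last step: the terminal condition of the BSDE for the opportunity process depends on the unknown $X^{(p)}_T$ through $\fR_p(X^{(p)}_T)$, so the perturbation is not an exogenous input. This self-referential coupling is broken by first extracting an a priori estimate on $\log X^{(p)}_T$ of size $(1-p)^{-1}$ from Step~1, then feeding it back into the BSDE analysis to obtain a genuine perturbation bound, exploiting the BMO properties of the martingale integrand inherited from \cite{Nutz-asy}.
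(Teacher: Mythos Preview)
Your plan rests on extending Nutz's opportunity-process/BSDE machinery from the pure power utility $x^p/p$ to the perturbed utility $U_p$, and this is where the gap lies. That machinery works for power utility \emph{only} because of the multiplicative scaling $\wt{U}_p(cx)=c^p\,\wt{U}_p(x)$, which forces the conditional value function to factor as $L^{(p)}_t\cdot\tfrac{1}{p}x^p$ for a single scalar process $L^{(p)}$ independent of current wealth. A general $U_p$ satisfying Assumption \ref{ass: marginal bound power} has no such scaling, so there is no scalar ``opportunity process associated with \eqref{def: u_p}'', hence no scalar quadratic BSDE for it, and no clean pointwise identity $(1-p)\pi_{p,t}=\lambda_t+\zeta^{(p)}_t+\varepsilon^{(p)}_t$ with $\zeta^{(p)}$ arising from a Kunita--Watanabe decomposition of that process. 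Your Step 2 therefore invokes an object that does not exist in this generality, and the ``stability of quadratic BSDEs'' step has no BSDE to be stable about. The circularity you flag in Step 3 is real, but the proposed fix (feed an a priori size bound on $\log X^{(p)}_T$ back into a BSDE analysis) still presupposes a BSDE to feed into.

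The paper avoids this entirely by a comparison argument that never writes a BSDE for the $U_p$-problem. It sets $r^{(p)}=X^{(p)}/\wt{X}^{(p)}$, the ratio to the \emph{pure-power} optimal wealth, and works under an auxiliary measure $\prob_p$ with density proportional to $(\wt{X}^{(p)}_T)^p$, under which $\wt{X}^{(p)}$ is the num\'eraire. Summing the two first-order inequalities from Proposition \ref{prop: u_p wellposed} (once for $U_p$, once for $\wt{U}_p$) yields
\[
\expec_{\prob_p}\bra{|p|\,\bigl|\fR_p(X^{(p)}_T)(r^{(p)}_T)^{p-1}-1\bigr|\,\bigl|1-r^{(p)}_T\bigr|}\longrightarrow 0,
\]
and from this one bootstraps $(r^{(p)}_T)^p\to 1$ in $L^1(\prob_p)$, then uniform convergence on $[0,T]$, and finally $[\cL^{(p)},\cL^{(p)}]_T\to 0$ in $\prob_p$-probability, where $\cL^{(p)}$ is the stochastic logarithm of $(r^{(p)})^p$. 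Since $[\cL^{(p)},\cL^{(p)}]_T=\int_0^T p^2(\pi_p-\wt{\pi}_p)^\top d\langle M\rangle(\pi_p-\wt{\pi}_p)$, this gives $(1-p)(\pi_p-\wt{\pi}_p)\to 0$. A separate argument shows $\prob_p\to\qprob$ in total variation, so the convergence transfers to $\prob$, and the remaining piece $(1-p)\wt{\pi}_p\to\hat\vartheta$ is taken from \cite{Nutz-asy} as a black box. The idea you are missing is to compare the $U_p$-problem to the power problem---where the opportunity-process machinery is legitimately available---rather than attempt to run that machinery on $U_p$ directly.
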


This result states that whenever the ratio of marginal utilities converges to $1$ at least as fast as the relative risk aversion converging to infinity, the optimal fraction invested in risky assets in the power type problem, after scaled by $1-p$, converges to the optimal monetary value invested in the exponential hedging problem. Here $(1-p)^{-1}$ can be considered as the rate of convergence.

\begin{rem}\label{rem: U_p construction}
 Given a utility function $U$ such that
 \[
  \lR \leq \frac{U'(x)}{x^{p_0-1}} \leq \uR, \quad \text{ for all } x>0,
 \]
 where $0<\lR \leq 1\leq \uR$ and $p_0<0$, there exists a family of utilities $(U_p)_{p\leq p_0}$ such that $U_{p_0} = U$ and \eqref{ass: Rp conv} is satisfied for some sequences $(\uR_p)_{p\leq p_0}$ and $(\lR_p)_{p\leq p_0}$. Indeed, take any function $f:(-\infty, 0)\rightarrow (0,1)$ such that $f(p_0)=1$ and $\limsup_{p\downarrow -\infty} \,(1-p) \, f(p) <\infty$. Set
 \[
  U'_p(x) = f(p) x^{p-p_0} U'(x) + (1-f(p)) x^{p-1}, \quad \text{ for } p\leq p_0.
 \]
 One can check that $U_p$ is a standard utility function and
 \[
  \lR_p := f(p) (\lR-1)+ 1 \leq \frac{U'_p(x)}{x^{p-1}} \leq f(p) (\uR-1)+ 1 =: \uR_p,
 \]
 where both $\limsup_{p\downarrow -\infty} \,(1-p) \, (1-\lR_p)$ and $\limsup_{p \downarrow -\infty} (1-p) \, (\uR_p-1)$ are finite.
\end{rem}

\begin{rem}
 Denote by $\wt{\pi}_p$ the optimal strategy for \eqref{def: u_p} when $U_p= x^p/p$. Nutz proved a remarkable result in \cite[Theorem 3.2]{Nutz-asy} that $(1-p)\wt{\pi}_p \rightarrow \hat{\vartheta}$ in $L^2_{loc}(M)$; cf. \cite[Lemma A.3]{Nutz-asy} for characterization of this convergence. In particular the previous convergence implies
 \begin{equation}\label{eq: (1-p)tpi-var}
  \prob-\lim_{p\downarrow -\infty} \int_0^T ((1-p) \wt{\pi}_p - \hat{\vartheta})^\top_t \,d\langle M\rangle_t \,((1-p) \wt{\pi}_p - \hat{\vartheta})_t =0.
 \end{equation}
 Therefore $\wt{\pi}_p$ converges to $\hat{\vartheta}$ at the rate of $(1-p)^{-1}$. We complement Nutz's result by showing that $\pi_p-\wt{\pi}_p$ converges to $0$ at the rate $(1-p)^{-1}$, when the ratio of marginal utilities converges to $1$ at least at the same rate. In particular, we prove
 \begin{equation}\label{eq: (1-p)tpi-pi}
  \prob-\lim_{p\downarrow -\infty} \int_0^T (1-p) (\wt{\pi}_p - \pi_p)^\top_t \,d\langle M\rangle_t \,(1-p) (\wt{\pi}_p - \pi_p)_t =0.
 \end{equation}
 Then Theorem \ref{thm: conv R_+} follows from combining the previous two convergence.
\end{rem}

\begin{rem}\label{rem: filtration}
 One can assume that both $S$ and the opportunity processes $(L^{(p)})_{p<0}$, recalled in Section \ref{sec: stab R_+}, are continuous instead of Assumption \ref{ass: cont filtration}, which is the most important and easy to check sufficient condition for the continuity of $S$ and $(L^{(p)})_{p<0}$. Only the continuity of $S$ is used to prove \eqref{eq: (1-p)tpi-pi}, continuity of both $S$ and $L^{(p)}$ for all $p<0$ are needed for \eqref{eq: (1-p)tpi-var}.
\end{rem}
\section{Stability for utilities defined on $\Real$}\label{sec: stab R}

Theorem \ref{thm: conv R} and its corollaries will be proved in this section. Let us start with the following property on the family $(\mathcal{M}^a_\delta)_{\delta \geq 0}$.

\begin{lem}\label{lem: M^a indep delta}
 Under Assumption \ref{ass: marginal bound exp}, all $\mathcal{M}^a_\delta$ (resp. $\mathcal{M}^e_\delta$) are the same for $\delta\geq 0$.
\end{lem}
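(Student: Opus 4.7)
The plan is to reduce finiteness of the generalized entropy $\expec_\prob[V_\delta(d\wt{\prob}/d\prob)]$, for every $\delta \geq 0$, to finiteness of the classical relative entropy $\expec_\prob[Z \log Z]$ (with $Z = d\wt{\prob}/d\prob$), which does not depend on $\delta$. The key input is the uniform sandwich of $U_\delta$ between two scaled exponentials with risk aversion $\alpha_\delta$ obtained in Remark \ref{rem: utility bdd R}.

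First, I would start from the inequality
\[
 U_\delta(\infty) - \tfrac{\uR}{\alpha_\delta}\exp(-\alpha_\delta x) \leq U_\delta(x) \leq U_\delta(\infty) - \tfrac{\lR}{\alpha_\delta}\exp(-\alpha_\delta x),\quad x\in\Real,
\]
and invoke monotonicity of the Legendre transform. A direct computation of the convex conjugate of $x\mapsto -\tfrac{c}{\alpha}\exp(-\alpha x)$ gives $\tfrac{y}{\alpha}(\log(y/c)-1)$, so taking conjugates on both sides yields
\[
 U_\delta(\infty) + \tfrac{y}{\alpha_\delta}\bra{\log(y/\uR)-1} \leq V_\delta(y) \leq U_\delta(\infty) + \tfrac{y}{\alpha_\delta}\bra{\log(y/\lR)-1},\quad y>0.
\]

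Next, I would plug $y=Z$ and take $\prob$-expectations, using $\expec_\prob[Z]=1$, to obtain
\[
 U_\delta(\infty) + \tfrac{1}{\alpha_\delta}\bra{\expec_\prob[Z\log Z]-\log\uR -1} \leq \expec_\prob[V_\delta(Z)] \leq U_\delta(\infty) + \tfrac{1}{\alpha_\delta}\bra{\expec_\prob[Z\log Z]-\log\lR-1}.
\]
Since $z\log z \geq -1/e$ on $(0,\infty)$, both expectations are well defined in $(-\infty,\infty]$. Because $U_\delta(\infty)$ is finite (Remark \ref{rem: utility bdd R}), $\alpha_\delta>0$, and $\lR,\uR$ are positive constants independent of $\delta$, the two-sided bound shows that $\expec_\prob[V_\delta(Z)]<\infty$ if and only if $\expec_\prob[Z\log Z]<\infty$. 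Applying this equivalence separately to $\delta$ and to $\delta=0$ (for which $V_0(y)=y\log y - y$) yields $\mathcal{M}^a_\delta=\mathcal{M}^a_0$. The statement for $\mathcal{M}^e_\delta$ follows immediately, since the equivalence requirement $\wt{\prob}\sim\prob$ is independent of $\delta$, so $\mathcal{M}^e_\delta=\mathcal{M}^a_\delta\cap\set{\wt{\prob}:\wt{\prob}\sim\prob}$ is $\delta$-independent as well.

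There is no real obstacle here beyond carefully tracking the constants: the only thing to watch is that the bounds be genuinely uniform in $\delta$, which is guaranteed by Assumption \ref{ass: marginal bound exp} (the constants $\lR,\uR$ are $\delta$-free) together with the uniform bound on $U_\delta(\infty)$ and the convergence $\alpha_\delta\to 1$ established in Remark \ref{rem: utility bdd R}.
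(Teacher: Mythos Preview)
Your proof is correct and follows essentially the same approach as the paper: both sandwich $V_\delta(y)$ between $\tfrac{1}{\alpha_\delta}(y\log y - y) - \tfrac{y}{\alpha_\delta}\log\uR + \text{const}$ and $\tfrac{1}{\alpha_\delta}(y\log y - y) - \tfrac{y}{\alpha_\delta}\log\lR + \text{const}$, and then conclude that finiteness of $\expec_\prob[V_\delta(Z)]$ is equivalent to finiteness of $\expec_\prob[Z\log Z]$. The only cosmetic difference is that the paper derives this sandwich by bounding $V'_\delta$ between $\wt{V}'_\delta(\cdot/\uR)$ and $\wt{V}'_\delta(\cdot/\lR)$ and integrating, while you shortcut this by applying order-preservation of the Legendre transform to the $U_\delta$-sandwich already recorded in Remark~\ref{rem: utility bdd R}; the resulting inequalities are identical.
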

\begin{proof}
 Denote $\wt{U}_\delta(x)= -\frac{1}{\alpha_\delta} \exp(-\alpha_\delta x)$ and $\wt{V}_\delta(y) = \frac{1}{\alpha_\delta} y \log{y} - \frac{y}{\alpha_\delta}$ to be its convex conjugate. Here $\alpha_\delta$ converges to $a_0:=1$ as $\delta \downarrow 0$. Set $y= U'_\delta(x)$, which can take arbitrary value in $(0,\infty)$ as $x$ varies in $\Real$. It follows from Assumption \ref{ass: marginal bound exp} that $y/\uR \leq \wt{U}'_\delta(-V'_\delta (y)) \leq y/\lR$, which implies $\wt{V}'_\delta(y/\uR) \leq V'_\delta (y) \leq \wt{V}'_\delta(y/\lR)$ for any $y\in (0,\infty)$. Integrating the previous inequalities on $(0, y)$ and utilizing $\wt{V}_\delta(0)= \wt{U}_\delta(\infty) =0$, we obtain
 \begin{equation*}
  \uR \,\wt{V}_\delta(y/\uR) + V_\delta(0) \leq V_\delta(y) \leq \lR \wt{V}_\delta(y/\lR) + V_\delta(0).
 \end{equation*}
 Recall from Remark \ref{rem: utility bdd R} that $(U_\delta(\infty))_{\delta>0}$ is uniformly bounded. Then there exists $N$ such that $-N\leq V_\delta(0)= U_\delta(\infty)\leq N$ for any $\delta$. The previous two inequalities combined yield
 \[
  \frac{1}{\alpha_\delta} \wt{V}_0(y) - \frac{1}{\alpha_\delta} y \log \uR -N \leq V_\delta(y) \leq \frac{1}{\alpha_\delta} \wt{V}_0(y) - \frac{1}{\alpha_\delta} y \log \lR +N, \quad \text{ for any } y.
 \]
 Therefore $\expec_\prob[V_\delta(d\wt{\prob}/d\prob)]<\infty$ if and only if $\expec_\prob[\wt{V}_0(d\wt{\prob}/d\prob)]<\infty$.
\end{proof}

To prove Theorem \ref{thm: conv R}, observe that, without loss of generality all $(\alpha_\delta)_{\delta \geq 0}$ in Assumption \ref{ass: marginal bound exp} can be assumed to be $1$. Indeed, define $\overline{U}_\delta(x) := \alpha_\delta U_\delta(x/\alpha_\delta)$.  Assumption \ref{ass: marginal bound exp} implies
\[
 \lR \leq \frac{\overline{U}'_\delta(x)}{\exp(-x)} \leq \uR, \quad \text{ for any } x\in \Real.
\]
Moreover, $\overline{U}(x)$ converges to $-\exp(-x)$ pointwise, since $\alpha_\delta$ converges to $1$ and $U_\delta(x)$ converges to $-\exp(-x)$ locally uniformly; see \cite[pp. 90]{Rockafellar}. Therefore \eqref{def: u_delta} can be rewritten as
\[
 u_\delta = \frac{1}{\alpha_\delta} \sup_{H\in \mathcal{H}^{perm}}\expec_\prob\bra{\overline{U}_\delta\pare{\alpha_\delta H\cdot S_T+ \alpha_\delta \xi_\delta}} = \frac{1}{\alpha_\delta} \sup_{\overline{H}\in \mathcal{H}^{perm}}\expec_\prob\bra{\overline{U}_\delta\pare{\overline{H}\cdot S_T+ \overline{\xi}_\delta}},
\]
where $\overline{\xi}_\delta:= \alpha_\delta \xi_\delta$. Therefore the optimal strategy $H_\delta$ for \eqref{def: u_delta} is exactly $\overline{H}_\delta/\alpha_\delta$ where $\overline{H}_\delta$ maximizes the rightmost problem. Hence we can consider \eqref{def: u_delta} with utility $\overline{U}_\delta$ and the random endowment $\overline{\xi}_\delta$. In this case Assumption \ref{ass: marginal bound exp} holds with $\alpha_\delta =1$ for all $\delta\geq 0$.

Now suppose that Theorem \ref{thm: conv R} holds for $\overline{U}_\delta$, then the same statements hold for $U_\delta$ as well. For example, if $\lim_{\delta\downarrow 0} \expec_\qprob\bra{\left|(\overline{H}_\delta - \overline{H}_0)\cdot S_T\right|} =0$, then
\begin{equation}\label{eq: exp diff delta X}
\begin{split}
 \expec_\qprob\bra{\left|(H_\delta - H_0)\cdot S_T\right|} &= \frac{1}{\alpha_\delta} \expec_\qprob\bra{\left|(\overline{H}_\delta - \alpha_\delta H_0)\cdot S_T\right|}\\
 &\leq \frac{1}{\alpha_\delta} \expec_\qprob\bra{\left|(\overline{H}_\delta - \overline{H}_0)\cdot S_T\right|} + \frac{|\alpha_\delta -1|}{\alpha_\delta} \expec_\qprob\bra{\left|H_0\cdot S_T\right|}\\
 &\rightarrow 0, \quad \text{ as } \delta \downarrow 0,
\end{split}
\end{equation}
where $\overline{H}_0 = H_0$ and $\expec_\qprob[|H_0\cdot S_T|]<\infty$ since $H_0 \cdot S$ is a $\qprob$-martingale. Therefore, due to the previous change of variable, it suffices to prove Theorem \ref{thm: conv R} when
\begin{equation}\label{eq: a_delta=1}
\alpha_\delta =1, \quad  \text{ for all } \delta\geq 0.
\end{equation}
To this end, Theorem \ref{thm: conv R} i) will be proved in Corollary \ref{cor: L^1 conv diff X}, ii) in Proposition \ref{prop: value fun conv}, and iii) in Corollary \ref{cor: quad-var delta X}.
In the rest of this section, Assumptions \ref{ass: utility conv}, \ref{ass: marginal bound exp}, \ref{ass: random endowment}, \ref{ass: finite entropy} and \ref{ass: endowment conv} are enforced. To simplify notation, we introduce
\[
 X^\delta := H_\delta \cdot S, \quad \cX^\delta := X^\delta + \xi_\delta, \quad \Delta \xi_\delta := \xi_\delta - \xi_0, \quad \text{ and } \quad \Delta \cX^\delta := \cX^\delta - \cX^0, \quad \text{ for } \delta \geq 0.
\]

Proof of Theorem \ref{thm: conv R} i) starts with the following estimate.

\begin{lem}\label{lem: exp est R}
 It holds that
 \[
  \lim_{\delta \downarrow 0} \expec_\qprob\bra{\left|1- \fR_\delta(\cX^\delta_T) \exp(-\Delta \cX^\delta_T)\right| \left|\Delta X^\delta_T\right|}=0.
 \]
\end{lem}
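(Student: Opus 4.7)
The strategy combines the duality identification from Proposition~\ref{prop: u_delta wellposed} with the supermartingale properties of $X^\delta$ and $X^0$ and the elementary pointwise inequality $(1-e^{-z})\,z\ge 0$.

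First, I would identify
\[
A_\delta:=\fR_\delta(\cX^\delta_T)\,e^{-\Delta\cX^\delta_T}=\frac{U'_\delta(\cX^\delta_T)}{U'_0(\cX^0_T)}=\frac{y_\delta}{y_0}\,\frac{d\qprob_\delta}{d\qprob},
\]
using Proposition~\ref{prop: u_delta wellposed} for both $\delta>0$ and $\delta=0$. Combined with the facts that $X^\delta$ is a $\qprob$-supermartingale and a $\qprob_\delta$-martingale while $X^0$ is a $\qprob_\delta$-supermartingale and a $\qprob$-martingale, this yields the one-sided bound
\[
\expec_\qprob\bra{(1-A_\delta)\,\Delta X^\delta_T}=\expec_\qprob[X^\delta_T]+\frac{y_\delta}{y_0}\,\expec_{\qprob_\delta}[X^0_T]\le 0.
\]

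Next, I would derive a pointwise lower bound on $(1-A_\delta)\,\Delta X^\delta_T$. Decomposing
\[
1-A_\delta=\bigl(1-\fR_\delta(\cX^\delta_T)\bigr)\,e^{-\Delta\cX^\delta_T}+\bigl(1-e^{-\Delta\cX^\delta_T}\bigr),
\]
substituting $\Delta X^\delta_T=\Delta\cX^\delta_T-\Delta\xi_\delta$, and applying $(1-e^{-z})\,z\ge 0$ with $z=\Delta\cX^\delta_T$ (which contributes a non-negative summand that we discard from below), one obtains the pointwise estimate
\[
(1-A_\delta)\,\Delta X^\delta_T\ \ge\ -f(\delta)\,e^{-\Delta\cX^\delta_T}\,|\Delta X^\delta_T|-\bigl(1+e^{-\Delta\cX^\delta_T}\bigr)\,|\Delta\xi_\delta|.
\]
Since $\expec_\qprob|F|=\expec_\qprob[F]+2\expec_\qprob[F^-]$ and the signed expectation above is non-positive, the lemma reduces to showing that the $\qprob$-expectation of the right-hand side tends to zero.

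For this, I would use $e^{-\Delta\cX^\delta_T}\le \underline{\fR}^{-1}A_\delta=\underline{\fR}^{-1}(y_\delta/y_0)\,d\qprob_\delta/d\qprob$, which converts the $e^{-\Delta\cX^\delta_T}$-weighted $\qprob$-integrals into unweighted $\qprob_\delta$-integrals with a uniform constant. Boundedness of $\expec_{\qprob_\delta}[|\Delta X^\delta_T|]$ follows from the $\qprob_\delta$-(super)martingale properties of $X^\delta$ and $X^0$ together with the admissibility Assumption~\ref{ass: random endowment}, so the first term vanishes at rate $f(\delta)\to 0$. The bare term $\expec_\qprob[|\Delta\xi_\delta|]\to 0$ is immediate from Assumption~\ref{ass: endowment conv}. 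The main obstacle is the cross-term $\expec_\qprob\bigl[e^{-\Delta\cX^\delta_T}|\Delta\xi_\delta|\bigr]\le \underline{\fR}^{-1}(y_\delta/y_0)\,\expec_{\qprob_\delta}[|\Delta\xi_\delta|]$, since one only has $L^1(\qprob)$- rather than $L^1(\qprob_\delta)$-convergence of $\Delta\xi_\delta$. This is where Remark~\ref{rem: comparsion} enters: the uniform sandwich of $U_\delta$ between exponential utilities from Remark~\ref{rem: utility bdd R} yields uniform integrability of $\{V_\delta(d\qprob/d\prob)\}_{\delta>0}$ under $\prob$, hence uniform integrability of $\{d\qprob_\delta/d\qprob\}_{\delta>0}$ under $\qprob$; combined with $\Delta\xi_\delta\to 0$ in $\qprob$-probability and the lower bound $\Delta\xi_\delta\ge -C$ (Assumption~\ref{ass: endowment conv}), a Vitali-type argument then delivers $\expec_{\qprob_\delta}[|\Delta\xi_\delta|]\to 0$, completing the proof.
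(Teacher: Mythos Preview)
Your approach shares the paper's opening move---both derive $\expec_\qprob[(1-A_\delta)\Delta X^\delta_T]\le 0$ from the (super)martingale properties and then aim to control the negative part---but your treatment of that negative part has a genuine gap.

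The main issue is your use of $f(\delta):=\sup_{x\in\Real}|\fR_\delta(x)-1|\to 0$. This is \emph{not} among the hypotheses of Lemma~\ref{lem: exp est R}; it is an \emph{additional} assumption imposed only later in Corollary~\ref{cor: convergence rate}. Under the standing Assumptions \ref{ass: utility conv}--\ref{ass: endowment conv} (with $\alpha_\delta=1$), the pointwise convergence $U_\delta\to U_0$ yields only \emph{local uniform} convergence $\fR_\delta\to 1$; globally one merely has $\lR\le\fR_\delta\le\uR$. Your pointwise lower bound contains the term $-|1-\fR_\delta(\cX^\delta_T)|\,e^{-\Delta\cX^\delta_T}|\Delta X^\delta_T|$ with $\fR_\delta$ evaluated at $\cX^\delta_T$, a quantity you do not yet control. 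Without $f(\delta)\to 0$ you cannot make this term small, and the argument collapses. The paper circumvents this precisely by observing that $(1-A_\delta)\Delta X^\delta_T<0$ forces $X^\delta_T$ to lie between $X^0_T$ and $I_\delta(U'_0(\cX^0_T))-\xi_\delta$; monotonicity of $U'_\delta$ then lets one replace $\fR_\delta(\cX^\delta_T)$ by $\fR_\delta(X^0_T+\xi_\delta)$ and replace $|\Delta X^\delta_T|$ by $|I_\delta(U'_0(\cX^0_T))-\xi_\delta-X^0_T|$, both of which are expressed through $X^0_T$ and $\xi_\delta$ alone. Local uniform convergence of $\fR_\delta$ then suffices, and the uniform bound $|I_\delta(U'_0(x))-x|\le\max\{\log\uR,\log(1/\lR)\}$ gives the needed domination.

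There are also two subsidiary soft spots. First, your claim that $\expec_{\qprob_\delta}[|\Delta X^\delta_T|]$ is bounded uniformly in $\delta$ ``from the $\qprob_\delta$-(super)martingale properties together with Assumption~\ref{ass: random endowment}'' is not justified: a supermartingale starting at $0$ has nonpositive expectation, but that says nothing about the $L^1$ norm, and Assumption~\ref{ass: random endowment} constrains $\xi_\delta$, not $X^\delta_T$. Second, the Vitali step for $\expec_{\qprob_\delta}[|\Delta\xi_\delta|]\to 0$ requires uniform integrability of $\{d\qprob_\delta/d\qprob\}_{\delta>0}$ under $\qprob$, which is not what Remark~\ref{rem: comparsion} provides (that remark concerns $\{V_\delta(d\qprob/d\prob)\}_{\delta>0}$ under $\prob$); even granting such UI, you would still need $|\Delta\xi_\delta|$ bounded---Assumption~\ref{ass: endowment conv} only gives a one-sided bound $\Delta\xi_\delta\ge -C$---to push the argument through. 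The paper avoids all of this: its dominating random variable $(\uR e^C+1)(\eta+|\Delta\xi_\delta|)$ is uniformly $\qprob$-integrable directly from $\expec_\qprob[|\Delta\xi_\delta|]\to 0$.
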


\begin{proof}
 Recall from Proposition \ref{prop: u_delta wellposed} that $X^0$ is a $\qprob_\delta$-supermartingale and $X^\delta$ is a $\qprob_\delta$-martingale, where the density $d\qprob_\delta/d\prob$ is $U'_\delta(\cX^\delta_T)$ up to a constant. Therefore $U'_\delta(\cX^\delta_T) X^0_\cdot$ is a $\prob$-supermartingale and $U'_\delta(\cX^\delta_T) X^\delta_\cdot$ is a $\prob$-martingale. Since both these two processes have initial value zero, therefore $\expec_\prob\bra{U'_\delta(\cX^\delta_T) X^0_T} \leq 0 = \expec_\prob\bra{U'_\delta(\cX^\delta_T) X^\delta_T}$, which induces
 \[
  \expec_\prob\bra{U'_\delta (\cX^\delta_T) (X^0_T - X^\delta_T)}\leq 0.
 \]
 Similarly, the previous argument applied to $\qprob$ gives
 \[
  \expec_\prob\bra{U'_0(\cX^0_T)(X^\delta_T - X^0_T)}\leq 0.
 \]
 Summing up the previous two inequalities and changing to the measure $\qprob$ whose density is $U'_0(\cX^0_T)$  up to a constant, we obtain
 \[
  \expec_\qprob\bra{\pare{1-\frac{U'_\delta(\cX^\delta_T)}{U'_0(\cX^0_T)}}\pare{X^\delta_T - X^0_T}} \leq 0.
 \]
 Observe that the random variable in the expectation of the previous inequality is negative only when $X^0_T \geq X^\delta_T \geq I_\delta \pare{U'_0(\cX^0_T)} - \xi_\delta$ or $I_\delta\pare{U'_0(\cX^0_T)} -\xi_\delta \geq X^\delta_T \geq X^0_T$, where $I_\delta = (U'_\delta)^{-1}$. In either cases,
 \[
  \pare{\pare{1-\frac{U'_\delta(\cX^\delta_T)}{U'_0(\cX^0_T)}}\pare{X^\delta_T - X^0_T}}_- \leq \pare{\frac{U'_\delta(X^0_T + \xi_\delta)}{U'_0(X^0_T + \xi_0)}-1}\pare{I_\delta\pare{U'_0(\cX^0_T)}-\xi_\delta - X^0_T},
 \]
 where $(\cdot)_-$ represents the negative part.
 Utilizing the fact that $\expec_\qprob[|A|] \leq 2\expec_\qprob[A_-]$ for any random variable $A$ with $\expec_\qprob[A]\leq 0$, we obtain
 \[
  \expec_\qprob\bra{\left|\pare{1-\frac{U'_\delta(\cX^\delta_T)}{U'_0(\cX^0_T)}}\pare{X^\delta_T - X^0_T}\right|} \leq 2 \,\expec_\qprob\bra{\pare{\frac{U'_\delta(X^0_T + \xi_\delta)}{U'_0(X^0_T + \xi_0)} -1}\pare{I_\delta\pare{U'_0(\cX^0_T)}-\xi_\delta - X^0_T}}.
 \]
 Note that the left side of the previous inequality is $\expec_\qprob\bra{\left|1- \fR_\delta(\cX^\delta_T) \exp(-\Delta\cX^\delta_T)\right||\Delta X^\delta_T|} $. The statement follows once the expectation on the right side converges to zero as $\delta \downarrow 0$.

 To prove the desired convergence, let us first estimate the upper bound of $|I_\delta(U'_0(x)) - x|$ on $\Real$. Set $y= U'_0(x)$. It follows
 \begin{align*}
  I_\delta\pare{U'_0(x)} - x & = I_\delta (y) - I_0(y) = -\log \bra{\exp\pare{-\pare{I_\delta(y) - I_0(y)}}}\\
  &= -\log \bra{\frac{\exp(-I_\delta(y))}{y}} = \log \bra{\frac{U'_\delta(I_\delta(y))}{U'_0(I_\delta(y))}}.
 \end{align*}
 Assumption \ref{ass: marginal bound exp} then implies
 \[
  \sup_{x\in \Real} \left|I_\delta(U'_0(x)) -  x\right| \leq \max\{\log \uR, \log 1/\lR\} =: \eta.
 \]
 As a result, $\left|I_\delta(U'_0(\cX^0_T)) - X^0_T - \xi_\delta\right| \leq \eta  + |\Delta \xi_\delta|$. Assumptions \ref{ass: marginal bound exp} and \ref{ass: endowment conv} combined imply that
 \[
  \frac{U'_\delta(X^0_T + \xi_\delta)}{U'_0(X^0_T + \xi_0)} = \fR_\delta(X^0_T+\xi_\delta) \exp(-\Delta \xi_\delta) \leq \uR e^C.
 \]
 The previous two estimates combined yield
 \begin{equation}\label{eq: ub est}
  \left|\frac{U'_\delta(X^0_T + \xi_\delta)}{U'_0(X^0_T + \xi_0)} -1\right|\left|I_\delta\pare{U'_0(\cX^0_T)-\xi_\delta - X^0_T}\right| \leq (\uR e^C+1)(\eta + |\Delta \xi_\delta|),
 \end{equation}
 where the right side is uniformly integrable in $\delta$ under $\qprob$ thanks to $\lim_{\delta \downarrow 0}\expec_\qprob[|\Delta \xi_\delta|]=0$ in Assumption \ref{ass: endowment conv}. On the other hand, the term on the left side of \eqref{eq: ub est} converges to $0$ in probability $\qprob$. This follows from facts that $\limsup_{\delta\downarrow 0}|I_\delta(U'_0(\cX^0_T)) - \xi_\delta - X^0_T|$ is bounded and $\qprob-\lim_{\delta \downarrow 0} \fR_\delta(X^0_T + \xi_\delta) \exp(-\Delta \xi_\delta)=1$. The previous convergence follows from
 \[
 \begin{split}
 &\qprob\pare{|\fR_\delta(X^0_T + \xi_\delta) \exp(-\Delta \xi_\delta) -1|\geq \epsilon}\\
 \leq & \qprob\pare{|\fR_\delta(X^0_T + \xi_\delta) \exp(-\Delta \xi_\delta) -1|\geq \epsilon, |\xi_\delta|\leq N, |X^0_T|\leq N} + \qprob(|\xi_\delta| >N) +  \qprob(|X^0_T|>N),
 \end{split}
 \]
 where the first term on the right converges to $0$ as $\delta\downarrow 0$ since $\fR_\delta$ converges to $1$ locally uniformly and $\qprob-lim_{\delta\downarrow 0} \Delta \xi_\delta =0$, both second and third terms can be made arbitrarily small for sufficiently large $N$.
 The uniform integrability and convergence in probability combined imply
 \[
  \lim_{\delta\downarrow 0}\expec_\qprob\bra{\left|\frac{U'_\delta(X^0_T + \xi_\delta)}{U'_0(X^0_T + \xi_0)} -1\right|\left|I_\delta\pare{U'_0(\cX^0_T)-\xi_\delta - X^0_T}\right|}=0,
 \]
 hence the statement.
\end{proof}

The previous result provides a handle to study the $\mathbb{L}^1(\qprob)$ convergence of $X^\delta_T - X^0_T$.

\begin{cor}\label{cor: L^1 conv diff X}
 It holds that
 \[
  \lim_{\delta \downarrow 0} \expec_\qprob\bra{\left|\Delta X^\delta_T\right|}=0.
 \]
\end{cor}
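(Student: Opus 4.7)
My plan is to deduce the $L^1(\qprob)$ convergence from the weighted $L^1(\qprob)$ estimate of Lemma \ref{lem: exp est R} by combining $\qprob$-convergence in probability with a uniform integrability argument, and then applying Vitali's convergence theorem.

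For convergence in probability, I first rewrite the weight as $1 - \Lambda_\delta e^{-\Delta X^\delta_T}$ with $\Lambda_\delta := \fR_\delta(\cX^\delta_T) e^{-\Delta \xi_\delta}$; Assumptions \ref{ass: marginal bound exp} and \ref{ass: endowment conv} (under the running convention $\alpha_\delta = 1$) force $\Lambda_\delta \leq \uR e^C$ $\qprob$-a.s. In the tail regime $|\Delta X^\delta_T| \geq K$ for $K$ sufficiently large (depending only on $\lR, \uR, C$), this bound, coupled with $\qprob$-probability convergence $\Delta \xi_\delta \to 0$ to pin $\Lambda_\delta$ from below on the negative side, forces $|1 - \Lambda_\delta e^{-\Delta X^\delta_T}| \geq 1/2$ on a set of high $\qprob$-probability, so Markov's inequality applied to Lemma \ref{lem: exp est R} eliminates the tail mass. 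On the complementary bulk $|\Delta X^\delta_T| < K$, I would establish $\qprob$-tightness of $\{\cX^\delta_T\}$ via the representation $\cX^\delta_T = I_\delta(y_\delta\,d\qprob_\delta/d\prob)$ from Proposition \ref{prop: u_delta wellposed} together with the two-sided bound $|I_\delta(y) + \log y| \leq \max(|\log \lR|, |\log \uR|)$ implied by Assumption \ref{ass: marginal bound exp}. Tightness plus locally uniform convergence $\fR_\delta \to 1$ (a consequence of Assumption \ref{ass: utility conv}) yields $\Lambda_\delta \to 1$ in $\qprob$-probability; a first-order expansion of $x \mapsto 1 - \Lambda_\delta e^{-x}$ about the origin then shows that $|1 - \Lambda_\delta e^{-\Delta X^\delta_T}|$ is comparable to $|\Delta X^\delta_T|$ on this bulk, so Lemma \ref{lem: exp est R} forces $\min(|\Delta X^\delta_T|, |\Delta X^\delta_T|^2) \to 0$ in $L^1(\qprob)$ there, delivering $\Delta X^\delta_T \to 0$ in $\qprob$-probability globally.

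For uniform integrability, $\qprob$-UI of $\{\xi_\delta\}$ from Assumption \ref{ass: endowment conv} reduces the task to $\qprob$-UI of $\{\cX^\delta_T\}$. The same representation delivers the pointwise estimate $|\cX^\delta_T| \leq |\log y_\delta| + |\log(d\qprob_\delta/d\prob)| + \text{const}$, and boundedness of $(y_\delta)$ together with $\qprob$-UI of $\{\log(d\qprob_\delta/d\prob)\}$ (via the uniform generalized-entropy bounds of Lemma \ref{lem: M^a indep delta}) yields the rest. Vitali's convergence theorem then concludes. The main obstacle I expect is the uniform control of the dual quantities $(y_\delta, d\qprob_\delta/d\prob)$: although Lemma \ref{lem: M^a indep delta} makes the $V_\delta$-entropies of elements of $\mathcal{M}^a$ comparable across $\delta$, translating this into uniform $\qprob$-integrability of $\log(d\qprob_\delta/d\prob)$ requires a careful passage between reference measures and exploitation of the scale-free bounds $\lR, \uR$.
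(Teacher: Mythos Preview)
Your overall strategy---convergence in probability plus uniform integrability, then Vitali---is sound in spirit, but the route you choose through the dual representation $\cX^\delta_T = I_\delta(y_\delta\, d\qprob_\delta/d\prob)$ creates a genuine obstacle that the paper avoids entirely. You need uniform $\qprob$-integrability of $\{\log(d\qprob_\delta/d\prob)\}$ and boundedness of $(y_\delta)$, and you cite Lemma~\ref{lem: M^a indep delta} for this. But that lemma only asserts that the \emph{class} $\mathcal{M}^a_\delta$ is independent of $\delta$; it gives no uniform entropy bound on the specific minimisers $\qprob_\delta$. Extracting such bounds would require a separate dual argument (optimality of $\qprob_\delta$, comparison of $V_\delta$ with $V_0$, control of $\expec_{\qprob_\delta}[\xi_\delta]$), and then a further passage from $\prob$-entropy to $\qprob$-integrability of $\log(d\qprob_\delta/d\prob)$. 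None of this is impossible, but it is substantial work you have not carried out, and you correctly flag it as the main obstacle.

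The paper's proof bypasses all of this by squeezing everything out of the primal estimate of Lemma~\ref{lem: exp est R} alone. The key observation you are missing is that on the tail set $\{|\Delta\cX^\delta_T|\ge M^\delta\}$ with $M^\delta$ large enough (depending only on $\lR,\uR,C$), the weight $|1-\fR_\delta(\cX^\delta_T)\exp(-\Delta\cX^\delta_T)|$ is bounded \emph{below} by a fixed $\eta>0$, regardless of where $\cX^\delta_T$ sits. Hence Lemma~\ref{lem: exp est R} directly yields
\[
\eta\,\expec_\qprob\bigl[|\Delta X^\delta_T|\,\indic_{\{|\Delta\cX^\delta_T|\ge M^\delta\}}\bigr]\;\le\;\expec_\qprob\bigl[|1-\fR_\delta(\cX^\delta_T)e^{-\Delta\cX^\delta_T}|\,|\Delta X^\delta_T|\bigr]\;\to 0,
\]
which gives the $L^1(\qprob)$ tail control for free---no uniform integrability argument, no dual quantities. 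On the complementary bounded set one uses bounded convergence once $\qprob$-convergence in probability of $\Delta\cX^\delta_T$ is established. That convergence in probability, in turn, is obtained by a bootstrap: first on $\{|\cX^\delta_T|\le N\}$ via local uniform convergence $\fR_\delta\to 1$, then globally by combining the tail estimate above with the trivial tightness of the \emph{fixed} variable $\cX^0_T$. No a priori tightness of $\{\cX^\delta_T\}$ is needed. Your plan would work if you could close the dual-control gap, but the paper's argument is both shorter and self-contained.
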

\begin{proof}
 We will first prove
 \begin{equation}\label{eq: est<M}
  \lim_{\delta \downarrow 0} \qprob\pare{|\Delta \cX^\delta_T|\geq \epsilon, |\cX^\delta_T|\leq N} =0, \quad \text{ for any } \epsilon, N >0.
 \end{equation}
 To this end, for fixed $\epsilon$ and $N$, $\exp(-\Delta \cX^\delta_T)\leq e^{-\epsilon}$ when $\Delta \cX^\delta_T \geq \epsilon$. Since $U'_\delta$ converges to $U'_0$ locally uniformly, there exists a sufficiently small $\delta$ such that $e^{-\epsilon/2} \leq \fR_\delta(\cX^\delta_T) \leq e^{\epsilon/2}$ for $|\cX^\delta_T| \leq N$. On the other hand, $|\Delta X^\delta_T|\geq \epsilon/2$ when $|\Delta \xi_\delta|\leq \epsilon/2$ and $|\Delta \cX^\delta_T|\geq \epsilon$. The previous estimates combined imply that on $\{\Delta \cX^\delta_T \geq \epsilon, |\Delta \xi_\delta| \leq \epsilon/2, |\cX^\delta_T|\leq N\}$,
 \begin{equation*}\label{eq: est diff X>eps}
  \left|1-\fR_\delta(\cX^\delta_T)\exp(-\Delta \cX^\delta_T)\right||\Delta X^\delta_T| \geq (1-e^{\epsilon/2} e^{-\epsilon}) \epsilon/2 >0, \quad \text{ for sufficiently small } \delta.
 \end{equation*}
 Similarly, on $\{\Delta \cX^\delta_T \leq -\epsilon, |\Delta \xi_\delta| \leq \epsilon/2, |\cX^\delta_T|\leq N\}$,
 \begin{equation*}\label{eq: est diff X<-eps}
  \left|1-\fR_\delta(\cX^\delta_T)\exp(-\Delta \cX^\delta_T)\right||\Delta X^\delta_T| \geq (e^{-\epsilon/2} e^{\epsilon} -1) \epsilon/2 >0, \quad \text{ for sufficiently small } \delta.
 \end{equation*}
 Set $\eta =\min\{1-e^{-\epsilon/2}, e^{\epsilon/2} -1\}\cdot \epsilon/2>0$. Previous two inequalities and Lemma \ref{lem: exp est R} combined yield
 \[
  \eta \cdot \qprob\pare{|\Delta \cX^\delta_T|\geq \epsilon, |\Delta \xi_\delta|\leq \epsilon/2, |\cX^\delta_T|\leq N} \leq \expec_\qprob\bra{\left|1-\fR_\delta(\cX^\delta_T)\exp(-\Delta \cX^\delta_T)\right||\Delta X^\delta_T|} \rightarrow 0, \quad \text{ as } \delta \downarrow 0.
 \]
 Therefore \eqref{eq: est<M} follows from the previous inequality and $\lim_{\delta\downarrow 0} \qprob(|\Delta \xi_\delta|>\epsilon/2)=0$.

 Second, we will prove
 \begin{equation}\label{eq: est diff X}
  \lim_{\delta \downarrow 0} \qprob(|\Delta \cX^\delta_T|\geq \epsilon) =0.
 \end{equation}
 To this end, note that
 \begin{equation}\label{eq: qprob X ub}
 \begin{split}
  \qprob(|\cX^\delta_T|\geq N) &\leq \qprob(|\cX^\delta_T|\geq N, |\cX^0_T|\leq N/2) + \qprob(|\cX^0_T|\geq N/2)\\
  &\leq \qprob(|\Delta \cX^\delta_T|\geq N/2) + \qprob(|\cX^0_T|\geq N/2), \quad \text{ for any } N.
 \end{split}
 \end{equation}
 Let us prove in what follows
 \begin{equation}\label{eq: est q diff X>M/2}
  \lim_{\delta\downarrow 0} \qprob(|\Delta \cX^\delta_T| \geq N/2) =0, \quad \text{ for sufficiently large } N.
 \end{equation}
 Take $N/2> \max\{2, \log 1/\lR, \log \uR\}$ and set $M^\delta = N/2 \vee (|\Delta \xi_\delta| +1)$. On $\{\Delta\cX^\delta_T \leq -M^\delta\}$, $\fR_\delta(\cX^\delta_T) \exp(-\Delta \cX^\delta_T) \geq \lR \exp(N/2) >1$ and $|\Delta X^\delta_T| = |\Delta \cX^\delta_T - \Delta \xi_\delta|\geq 1$. Hence on the same set,
 \[
  \left|1-\fR_\delta(\cX^\delta_T)\exp(-\Delta \cX^\delta_T)\right||\Delta X^\delta_T|  \geq \lR \exp\pare{N/2} -1.
 \]
 On $\{\Delta \cX^\delta_T \geq M^\delta\}$, $\fR_\delta(\cX^\delta_T) \exp(-\Delta \cX^\delta_T) \leq \uR \exp(-N/2) <1$ and $|\Delta X^\delta_T|\geq 1$. Hence on the same set,
 \[
  \left|1-\fR_\delta(\cX^\delta_T)\exp(-\Delta \cX^\delta_T)\right||\Delta X^\delta_T| \geq 1-\uR \exp\pare{-N/2}.
 \]
 Set $\eta = \min\{\lR \exp(N/2) -1, 1-\uR \exp(-N/2)\}>0$. The previous two inequalities combined yield
 \begin{equation}\label{eq: est diff X>M/2}
 \begin{split}
  \eta \cdot\qprob\pare{|\Delta \cX^\delta_T| \geq M^\delta} &\leq \eta \,\expec_\qprob\bra{|\Delta X_T^\delta| \,\indic_{\{|\Delta \cX^\delta_T|\geq M^\delta\}}}\\
  & \leq \expec_\qprob\bra{\left|1- \fR_\delta(\cX^\delta_T) \exp(-\Delta \cX^\delta_T)\right||\Delta X^\delta_T| \,\indic_{\{\Delta \cX^\delta_T \geq M^\delta\}}}\\
  &\rightarrow 0, \quad \text{ as } \delta \downarrow 0,
 \end{split}
 \end{equation}
 where the convergence follows from Lemma \ref{lem: exp est R}. Therefore \eqref{eq: est q diff X>M/2} follows from
 \[
 \begin{split}
  \qprob\pare{|\Delta \cX^\delta_T| \geq N/2} &\leq \qprob\pare{|\Delta \cX^\delta_T|\geq N/2, |\Delta \xi_\delta|\leq 1} + \qprob\pare{|\Delta \xi_\delta| >1} \\
  & = \qprob\pare{|\Delta \cX^\delta_T|\geq M^\delta, |\Delta \xi_\delta|\leq 1} + \qprob\pare{|\Delta \xi_\delta| >1}\\
  & \rightarrow 0, \quad \text{ as } \delta \downarrow 0.
 \end{split}
 \]

 Switch our attention to $\qprob(|\cX^0_T|\geq N/2)$. Assumption \ref{ass: random endowment} yields $x_0 \leq \expec_\qprob[\xi_0] \leq \wt{x}_0 +\expec_\qprob[G_0 \cdot S_T] \leq \wt{x}_0$, where $G_0 \cdot S$ is a $\qprob$-local martingale bounded from below hence a $\qprob$-supermartingale. Moreover recall that $X^0$ is a $\qprob$-martingale. Therefore $\qprob(|\cX^0_T|\geq N/2)\leq 2 \,\expec_\qprob[|\cX^0_T|]/N$ which can be made arbitrarily small for sufficiently large $N$. The previous inequality combined with \eqref{eq: qprob X ub} and \eqref{eq: est q diff X>M/2} yields that $\limsup_{\delta\downarrow 0} \qprob(|\cX^\delta_T| \geq N)$ is sufficiently small for large $N$. Hence \eqref{eq: est diff X} follows from combining the previous limit superior with \eqref{eq: est<M}.

 Finally, let us prove
 \[
  \lim_{\delta \downarrow 0} \expec_\qprob\bra{\left|\Delta X^\delta_T\right|}=0.
 \]
 To this end, we have seen in \eqref{eq: est diff X>M/2} that $\lim_{\delta\downarrow 0} \expec_\qprob\bra{\left|\Delta X^\delta_T\right| \,\indic_{\{|\Delta \cX^\delta_T| \geq M^\delta\}}} =0$. On the other hand, \[\expec_\qprob\bra{|\Delta X^\delta_T| \,\indic_{\{|\Delta X^\delta_T| < M^\delta\}}} \leq \expec_\qprob\bra{|\Delta X^\delta_T| \,\indic_{\{|\Delta X^\delta_T| < M^\delta, |\Delta\xi_\delta|\leq 1\}}} + \expec_\qprob\bra{|\Delta X^\delta_T| \,\indic_{\{|\Delta X^\delta_T| < M^\delta, |\Delta\xi_\delta|> 1\}}}.\] Here the second term on the right is bounded from above by $\frac{N}{2}\qprob(|\Delta \xi_\delta| >1) + \expec_\qprob\bra{(|\Delta \xi_\delta| +1) \,\indic_{\{|\Delta\xi_\delta| >1\}}}$ which converges to $0$ as $\delta \downarrow 0$ due to Assumption \ref{ass: endowment conv}. The first term converges to $0$ as well. Indeed, since $|\Delta X^\delta_T|\leq N/2 +1$ when $|\Delta \cX^\delta_T|<M^\delta$ and $|\Delta \xi_\delta|\leq 1$, the bounded convergence theorem implies that $\lim_{\delta\downarrow 0} \expec_\qprob\bra{|\Delta X^\delta_T| \,\indic_{\{|\Delta X^\delta_T| < M^\delta, |\Delta\xi_\delta|\leq 1\}}} =0$ along any subsequence of $\delta$ such that $\Delta X^\delta_T$ converges  to $0$ $\qprob$-a.s.. Since for any sequence, there is a subsequence along which $\Delta X^\delta_T$ converges $\qprob$-a.s., the previous convergence in expectation also holds along the entire sequence of $\delta$. This argument, which combines convergence in probability with the bounded convergence theorem, will be used frequently later without mentioned explicitly.
\end{proof}

Now Theorem \ref{thm: conv R} iii) follows from Corollary \ref{cor: L^1 conv diff X} and the following result.

\begin{lem}\label{lem: Sp norm local mart}
 For any supermartingale $Z$ with $Z_0=0$,\footnote{This result holds for any probability measure which is denoted by $\prob$ in the proof.}
 \[
  \expec\bra{\sup_{0\leq t\leq T}|Z_t|^p} \leq \frac{1}{1-p} 2^p \, \expec\bra{|Z_T|}^p, \quad \text{ for any } p\in(0,1).
 \]
\end{lem}

\begin{proof}
 It follows from Doob's maximal inequality (cf. \cite[Chapter 1, Theorem 3.8]{Karatzas-Shreve-BM})that
 \[
  \lambda \, \prob\pare{\sup_{0\leq t\leq T} |Z_t| \geq \lambda} \leq 2\, \expec[|Z_T|].
 \]
 Set $Z_* = \sup_{0\leq t\leq T} |Z_t|$. It then follows
 \[
 \begin{split}
  \expec\bra{\sup_{0\leq t\leq T}|Z_t|^p} &= \expec\bra{\int_0^\infty \indic_{\{Z_* > x\}}p x^{p-1} \,dx} = \int_0^\infty \prob(Z_* > x) p x^{p-1}\, dx\\
  &\leq \int_0^\infty \min\left\{1, \frac{2\,\expec[|Z_T|]}{x}\right\} p \, x^{p-1} \, dx =\frac{1}{1-p} 2^p \, \expec[|Z_T|]^p.
 \end{split}
 \]
 Compare to the standard Doob's $L^p$-inequality where $p>1$, the only difference in proof is the last inequality.
\end{proof}

Applying the previous lemma to the $\qprob$-supermartingale $\Delta X^\delta$ and utilizing Corollary \ref{cor: L^1 conv diff X}, we obtain  $\lim_{\delta \downarrow 0} \expec_\qprob\bra{\sup_{0\leq t\leq T} |\Delta X^\delta_t|^p} =0$. Hence Theorem \ref{thm: conv R} iii) follows from Burkholder-Davis-Gundy inequality, cf. \cite[Chapter IV, Theorem 42.1]{Rogers-Williams}:
\begin{cor}\label{cor: quad-var delta X}
 If $S$ is continuous, then
 \[
  \lim_{\delta\downarrow 0} \expec_\qprob\bra{\bra{\Delta X^\delta, \Delta X^\delta}_T^{p/2}} =0, \quad \text{ for any } p\in (0,1).
 \]
\end{cor}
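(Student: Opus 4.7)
The plan is to exploit two distinct properties of $\Delta X^\delta = X^\delta - X^0$ under $\qprob$. First, as already recorded in the proof of Lemma \ref{lem: exp est R}, $\Delta X^\delta$ is a $\qprob$-supermartingale with $\Delta X^\delta_0 = 0$, being the difference of the $\qprob$-supermartingale $X^\delta$ and the $\qprob$-martingale $X^0$. Second, since $\qprob \in \mathcal{M}^e$, $S$ is a continuous $\qprob$-local martingale; as both $H_\delta$ and $H_0$ are $S$-integrable, linearity of the stochastic integral yields $\Delta X^\delta = (H_\delta - H_0)\cdot S$, so $\Delta X^\delta$ is itself a continuous $\qprob$-local martingale starting at $0$.

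With the continuous local martingale structure in hand, the Burkholder--Davis--Gundy inequality supplies a constant $C_p > 0$ depending only on $p \in (0,1)$ such that
\[
 \expec_\qprob\bra{\bra{\Delta X^\delta, \Delta X^\delta}_T^{p/2}} \leq C_p \, \expec_\qprob\bra{\sup_{0\leq t\leq T} |\Delta X^\delta_t|^p}.
\]
Applying Lemma \ref{lem: Sp norm local mart} to the $\qprob$-supermartingale $\Delta X^\delta$ then bounds the right hand side above by $\tfrac{C_p\, 2^p}{1-p}\, \expec_\qprob\bra{|\Delta X^\delta_T|}^p$, which vanishes as $\delta \downarrow 0$ by Corollary \ref{cor: L^1 conv diff X}. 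Chaining these estimates yields the claim.

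The only conceptual subtlety is the dual viewpoint on $\Delta X^\delta$: Lemma \ref{lem: Sp norm local mart} exploits its supermartingale property (reflecting the role of $\qprob$ as the dual optimizer at $\delta = 0$ via Proposition \ref{prop: u_delta wellposed}), while the BDG step uses only that $\qprob$ is an equivalent local martingale measure for $S$. Once this dichotomy is observed, the three displayed ingredients (BDG, Lemma \ref{lem: Sp norm local mart}, and Corollary \ref{cor: L^1 conv diff X}) combine immediately, and I do not anticipate any substantive obstacle.
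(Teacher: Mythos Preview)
Your proposal is correct and follows essentially the same route as the paper's own proof: combine Corollary~\ref{cor: L^1 conv diff X} with Lemma~\ref{lem: Sp norm local mart} to get $\expec_\qprob[\sup_{t\leq T}|\Delta X^\delta_t|^p]\to 0$, then invoke Burkholder--Davis--Gundy using that $\Delta X^\delta$ is a continuous $\qprob$-local martingale when $S$ is continuous. Your added remark distinguishing the supermartingale role of $\qprob$ (for Lemma~\ref{lem: Sp norm local mart}) from its local-martingale-measure role (for BDG) is a nice clarification but does not depart from the paper's argument.
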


The following result prepares the proof of Theorem \ref{thm: conv R} ii).

\begin{lem}\label{lem: ratio exp}
 It holds that
 \[
  \lim_{\delta\downarrow 0} \frac{\expec_\prob\bra{\exp\pare{-\cX^\delta_T}}}{\expec_\prob\bra{\exp\pare{-\cX^0_T}}}=1.
 \]
\end{lem}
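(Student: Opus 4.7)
I will prove the equivalent statement $\lim_{\delta\downarrow 0}\expec_\qprob[\exp(-\Delta\cX^\delta_T)] = 1$; the equivalence is immediate because $U_0'(\cX^0_T) = \exp(-\cX^0_T)$ and the normalisation in Proposition \ref{prop: u_delta wellposed} yields $d\qprob/d\prob = \exp(-\cX^0_T)/(-u_0)$, so that
\[
\frac{\expec_\prob[\exp(-\cX^\delta_T)]}{\expec_\prob[\exp(-\cX^0_T)]} = \expec_\qprob[\exp(-\Delta\cX^\delta_T)].
\]
The $L^1(\qprob)$-convergence $\Delta X^\delta_T \to 0$ from Corollary \ref{cor: L^1 conv diff X}, together with $\Delta\xi_\delta \to 0$ in $L^1(\qprob)$ from Assumption \ref{ass: endowment conv}, gives $\Delta\cX^\delta_T \to 0$ in $\qprob$-probability and hence $\exp(-\Delta\cX^\delta_T) \to 1$ in $\qprob$-probability. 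Fatou's lemma then delivers $\liminf_{\delta\downarrow 0}\expec_\qprob[\exp(-\Delta\cX^\delta_T)] \geq 1$, so only the matching $\limsup \leq 1$ remains.

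For the upper bound I decompose, for $N$ to be chosen large,
\[
\expec_\qprob[\exp(-\Delta\cX^\delta_T)] = \expec_\qprob[\exp(-\Delta\cX^\delta_T)\indic_{|\Delta\cX^\delta_T|\leq N}] + \expec_\qprob[\exp(-\Delta\cX^\delta_T)\indic_{\Delta\cX^\delta_T> N}] + \expec_\qprob[\exp(-\Delta\cX^\delta_T)\indic_{\Delta\cX^\delta_T<-N}].
\]
The first term is dominated by $e^N$ and tends to $1$ by bounded convergence; the second is trivially at most $e^{-N}$; the whole argument therefore reduces to the third, which is the dangerous tail where $\exp(-\Delta\cX^\delta_T)$ explodes.

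For this tail I reuse Lemma \ref{lem: exp est R}. Whenever $N \geq \log(2/\lR)$, on $\{\Delta\cX^\delta_T \leq -N\}$ one has $\fR_\delta(\cX^\delta_T)\exp(-\Delta\cX^\delta_T) \geq \lR e^N \geq 2$, hence
\[
|1-\fR_\delta(\cX^\delta_T)\exp(-\Delta\cX^\delta_T)| = \fR_\delta(\cX^\delta_T)\exp(-\Delta\cX^\delta_T) - 1 \geq \tfrac{\lR}{2}\exp(-\Delta\cX^\delta_T).
\]
Since Assumption \ref{ass: endowment conv} with $\alpha_\delta=1$ provides $\Delta\xi_\delta \geq -C$, on the same event and for $N > C+1$ we also have $|\Delta X^\delta_T| = |\Delta\cX^\delta_T - \Delta\xi_\delta| \geq N - C \geq 1$. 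Multiplying these two inequalities,
\[
\expec_\qprob[\exp(-\Delta\cX^\delta_T)\indic_{\Delta\cX^\delta_T<-N}] \leq \frac{2}{\lR}\expec_\qprob[|1-\fR_\delta(\cX^\delta_T)\exp(-\Delta\cX^\delta_T)|\,|\Delta X^\delta_T|] \longrightarrow 0
\]
by Lemma \ref{lem: exp est R}. Consequently $\limsup_{\delta\downarrow 0}\expec_\qprob[\exp(-\Delta\cX^\delta_T)] \leq 1 + e^{-N}$ for every $N \geq \max\{C+1,\log(2/\lR)\}$, and letting $N\to\infty$ finishes the argument.

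The main obstacle is controlling the tail in the direction $\Delta\cX^\delta_T \to -\infty$, where $\exp(-\Delta\cX^\delta_T)$ is unbounded; the crucial observation that unlocks everything is that on this tail $|\Delta X^\delta_T|\geq 1$, a fact that relies precisely on the one-sided bound $\Delta\xi_\delta \geq -C$ of Assumption \ref{ass: endowment conv}. This is what makes the $|\Delta X^\delta_T|$-factor in Lemma \ref{lem: exp est R} benign and extracts the required $\qprob$-uniform integrability of $\{\exp(-\Delta\cX^\delta_T)\}_\delta$ from that single lemma.
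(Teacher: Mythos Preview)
Your proof is correct and follows essentially the same route as the paper. Both arguments reduce to $\lim_{\delta\downarrow 0}\expec_\qprob[\exp(-\Delta\cX^\delta_T)]=1$, handle the bounded region by dominated convergence, and control the dangerous tail $\{\Delta\cX^\delta_T<-N\}$ via Lemma~\ref{lem: exp est R} together with the one-sided bound $\Delta\xi_\delta\geq -C$ from Assumption~\ref{ass: endowment conv}; the only cosmetic difference is that the paper uses a two-piece split $\{\Delta\cX^\delta_T\geq -N\}\cup\{\Delta\cX^\delta_T\leq -N\}$ and shows both limits directly, whereas you separate out a harmless extra piece $\{\Delta\cX^\delta_T>N\}$ and invoke Fatou for the lower bound.
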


\begin{proof}
 Proposition \ref{prop: u_delta wellposed} implies that
 \[
 \frac{d\qprob}{d\prob} =\frac{\exp(-\cX^0_T)}{\expec_\prob[\exp(-\cX^0_T)]}.
 \]
 After changing to the measure $\qprob$, the statement is equivalent to
 \begin{equation}\label{eq: qexp exp diff X}
 \lim_{\delta \downarrow 0} \expec_\qprob[\exp(-\Delta \cX^\delta_T)]=1.
 \end{equation}

 Fix $N> \max\{C, \log 1/\lR\}$ where $C$ is the constant in Assumption \ref{ass: endowment conv}. It follows from \eqref{eq: est diff X} that
 \begin{equation}\label{eq: exp dX>-M}
  \lim_{\delta\downarrow 0} \expec_\qprob\bra{\exp(-\Delta \cX^\delta_T) \,\indic_{\{\Delta \cX^\delta_T\geq -N\}}} =1.
 \end{equation}
 On the other hand, when $\Delta \cX^\delta_T \leq -N$, $\Delta X^\delta_T = \Delta \cX^\delta_T - \Delta \xi_\delta \leq -N+C <0$, then
 \[
 \begin{split}
  \left|1-\fR_\delta (\cX^\delta_T) \exp(-\Delta \cX^\delta_T)\right| |\Delta X^\delta_T| &= \exp(-\Delta \cX^\delta_T) \left|\exp(\Delta \cX^\delta_T) - \fR_\delta(\cX^\delta_T)\right||\Delta X^\delta_T| \\
  & \geq \exp(-\Delta \cX^\delta_T) \pare{\lR - \exp(-N)} (N- C).
 \end{split}
 \]
 Set $\eta = \pare{\lR - \exp(-N)} (N- C)>0$. It then follows from Lemma \ref{lem: exp est R} that
 \begin{equation}\label{eq: exp dX<-M}
  \eta \cdot \expec_\qprob\bra{\exp(-\Delta \cX^\delta_T) \, \indic_{\{\Delta \cX^\delta_T \leq -N\}}} \leq \expec_\qprob\bra{\left|1-\fR_\delta(\cX^\delta_T) \exp(-\Delta \cX^\delta_T)\right| |\Delta X^\delta_T| \,\indic_{\{\Delta \cX^\delta_T \leq -N\}}} \rightarrow 0, \quad \text{ as } \delta\downarrow 0.
 \end{equation}
 As a result, \eqref{eq: qexp exp diff X} follows from combining \eqref{eq: exp dX>-M} and \eqref{eq: exp dX<-M}.
\end{proof}

Now we are ready to prove Theorem \ref{thm: conv R} ii).

\begin{prop}\label{prop: value fun conv}
 It holds that
 \[
  \lim_{\delta \downarrow 0} u_\delta = u_0.
 \]
\end{prop}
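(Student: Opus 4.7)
The target is $u_\delta = \expec_\prob[U_\delta(\cX^\delta_T)] \to \expec_\prob[U_0(\cX^0_T)] = u_0$. My plan is to invoke Vitali's convergence theorem by establishing (a) $U_\delta(\cX^\delta_T) \to U_0(\cX^0_T)$ in $\prob$-probability and (b) uniform integrability of the family $\{U_\delta(\cX^\delta_T)\}_{\delta>0}$ under $\prob$.

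For (a): Corollary \ref{cor: L^1 conv diff X} gives $\Delta X^\delta_T \to 0$ in $\mathbb{L}^1(\qprob)$, and combined with Assumption \ref{ass: endowment conv} this yields $\Delta \cX^\delta_T \to 0$ in $\qprob$-probability, which coincides with $\prob$-probability as $\qprob \sim \prob$. The pointwise convergence $U_\delta \to U_0$ of concave functions is automatically locally uniform on $\Real$. A truncation onto $\{|\cX^0_T| \leq N\} \cap \{|\Delta \cX^\delta_T| \leq 1\}$, combined with the uniform Lipschitz bound $|U_\delta'(x)| \leq \uR \exp(N+1)$ on $[-N-1, N+1]$ afforded by Assumption \ref{ass: marginal bound exp}, then yields (a) upon letting $N$ grow.

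For (b): The sandwich from Remark \ref{rem: utility bdd R} (with $\alpha_\delta = 1$ after the earlier reduction) reads
\[
U_\delta(\infty) - \uR \exp(-\cX^\delta_T) \leq U_\delta(\cX^\delta_T) \leq U_\delta(\infty),
\]
where $U_\delta(\infty)$ is uniformly bounded for small $\delta$ by the same remark. Thus uniform integrability of $\{U_\delta(\cX^\delta_T)\}$ reduces to uniform integrability of $\{\exp(-\cX^\delta_T)\}_{\delta>0}$ under $\prob$. For the latter I apply Scheffé's lemma, invoking (i) $\exp(-\cX^\delta_T) \to \exp(-\cX^0_T)$ in $\prob$-probability (continuous mapping applied to (a)'s underlying convergence of $\cX^\delta_T$) and (ii) $\expec_\prob[\exp(-\cX^\delta_T)] \to \expec_\prob[\exp(-\cX^0_T)]$, which is exactly the content of Lemma \ref{lem: ratio exp}. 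Scheffé then delivers $\mathbb{L}^1(\prob)$-convergence of $\exp(-\cX^\delta_T)$, hence uniform integrability.

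Combining (a) and (b) via Vitali yields $u_\delta \to u_0$. The main obstacle is step (b): without Lemma \ref{lem: ratio exp} one would have no leverage to control the negative tails of $U_\delta(\cX^\delta_T)$ uniformly in $\delta$, since the optimizers $\cX^\delta_T$ themselves depend on $\delta$. Granted that lemma, however, Scheffé closes the gap cleanly and the rest of the argument is routine.
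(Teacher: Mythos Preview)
Your argument is correct and cleaner than the paper's own proof. Both proofs rest on the same two inputs, namely Corollary~\ref{cor: L^1 conv diff X} (which gives $\cX^\delta_T\to\cX^0_T$ in $\qprob$-, hence $\prob$-, probability) and Lemma~\ref{lem: ratio exp}. The paper passes to $\qprob$, rewrites the goal as $\expec_\qprob[U_\delta(\cX^\delta_T)/U_0(\cX^0_T)]\to 1$, and then carries out a three-region decomposition according to the value of $\cX^\delta_T$: on $\{|\cX^\delta_T|\le N\}$ the locally uniform convergence of $U_\delta'/U_0'$ is used, while on the two tails the crude global bounds $\lR\le\fR_\delta\le\uR$ together with the exponential estimates~(3.8) and~(3.10) are invoked; the pieces are then reassembled by sending $\epsilon\downarrow0$ and $N\uparrow\infty$. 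You instead stay under $\prob$, observe that the sandwich from Remark~\ref{rem: utility bdd R} reduces uniform integrability of $\{U_\delta(\cX^\delta_T)\}$ to that of $\{\exp(-\cX^\delta_T)\}$, and recognize that Lemma~\ref{lem: ratio exp} together with convergence in probability is precisely the hypothesis of Scheff\'e's lemma; $\bL^1(\prob)$-convergence of $\exp(-\cX^\delta_T)$ then supplies the uniform integrability, and Vitali finishes. Your route replaces the paper's ad hoc region-by-region estimates with two classical convergence theorems and is noticeably shorter; the paper's approach, on the other hand, makes the contribution of each tail explicit and mirrors the structure used elsewhere in Section~\ref{sec: stab R}.
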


\begin{proof}
 After changing to the measure $\qprob$, the statement is equivalent to
 \[
  1=\lim_{\delta\downarrow 0} \frac{\expec_\prob\bra{U_\delta(\cX^\delta_T)}}{\expec_\prob\bra{U_0(\cX^0_T)}} = \lim_{\delta\downarrow 0} \expec_\qprob\bra{\frac{U_\delta(\cX^\delta_T)}{U_0(\cX^0_T)}}.
 \]
 In what follows, we will prove
 \begin{equation}\label{eq: limsup exp ratio}
  \limsup_{\delta\downarrow 0} \expec_\qprob\bra{\frac{U_\delta(\cX^\delta_T)}{U_0(\cX^0_T)}} \leq 1;
 \end{equation}
 while $\liminf_{\delta\downarrow 0} \expec_\qprob\bra{\frac{U_\delta(\cX^\delta_T)}{U_0(\cX^0_T)}} \geq 1$ can be proved similarly. To prove \eqref{eq: limsup exp ratio}, we will estimate the limit superior of the expectation on sets $\{-N \leq \cX^\delta_T \leq N\}$, $\{\cX^\delta_T >N\}$, and $\{\cX^\delta_T<-N\}$ separately, for a fixed sufficiently large $N$, in the following three steps.

 \vspace{2mm}
 \noindent \textit{\underline{Step 1: on $\{-N\leq \cX^\delta_T\leq N\}$.} } For any $\epsilon, N>0$, there exists $\delta_{\epsilon, N}$ such that $1-\epsilon \leq \frac{U'_\delta(x)}{U'_0(x)} \leq 1+\epsilon$ for $x\in (-N, N)$ and $\delta \leq \delta_{\epsilon, N}$. Integrating $U'_\delta(x) \leq (1+\epsilon)U'_0(x)$ on $(x, N)$ gives $U_\delta(x) \geq (1+\epsilon) U_0(x) - (1+\epsilon) U_0(N) + U_\delta(N)$, which yields
 \[
  \frac{U_\delta(x)}{U_0(x)} \leq 1+\epsilon + \frac{U_\delta(N) - (1+\epsilon) U_0(N)}{U_0(x)}, \quad \text{ for } x\in [-N, N] \text{ and } \delta \leq \delta_{\epsilon, N}.
 \]
 It then follows
 \begin{equation}\label{eq: est -N<X<N}
 \begin{split}
  &\expec_\qprob\bra{\frac{U_\delta (\cX^\delta_T)}{U_0(\cX^0_T)} \, \indic_{\{-N\leq \cX^\delta_T \leq N\}}} = \expec_\qprob\bra{\frac{U_\delta(\cX^\delta_T)}{U_0(\cX^\delta_T)} \exp(- \Delta \cX^\delta_T) \, \indic_{\{-N \leq \cX^\delta_T\leq N\}}}\\
  & \leq (1+\epsilon)\, \expec_\qprob\bra{\exp(-\Delta \cX^\delta_T) \, \indic_{\{-N \leq \cX^\delta_T \leq N\}}} + \pare{U_\delta(N) - (1+\epsilon) U_0(N)} \expec_\qprob\bra{\frac{\indic_{\{-N\leq \cX^\delta_T \leq N\}}}{U_0(\cX^0_T)}} \\
  &= (1+\epsilon)\, \expec_\qprob\bra{\exp(-\Delta \cX^\delta_T) \, \indic_{\{-N \leq \cX^\delta_T \leq N\}}} + \pare{U_\delta(N) - (1+\epsilon) U_0(N)} \frac{\prob(-N\leq \cX^\delta_T\leq N)}{\expec_\prob\bra{U_0(\cX^0_T)}}.
 \end{split}
 \end{equation}
 In what follows the two terms on the right side of the previous inequality will be estimated separately.

 Let us first prepare
 \begin{equation}\label{eq: conv L0}
  \qprob(-N< \cX^0_T<N) \leq \liminf_{\delta\downarrow 0} \qprob(-N \leq \cX^\delta_T \leq N) \leq \limsup_{\delta\downarrow 0} \prob(-N\leq \cX^\delta_T \leq N) \leq \qprob(-N \leq \cX^0_T \leq N).
 \end{equation}
 Indeed, for any $\epsilon$,
 \begin{align*}
  &\qprob(-N\leq \cX^\delta_T \leq N) \\
  &= \qprob(-N - \Delta \cX^\delta_T \leq \cX^0_T \leq N- \Delta \cX^\delta_T, |\Delta \cX^\delta_T|\leq \epsilon) + \qprob(-N - \Delta \cX^\delta_T \leq \cX^0_T \leq N- \Delta \cX^\delta_T, |\Delta \cX^\delta_T|> \epsilon).
 \end{align*}
 Here the second term converges to $0$ due to \eqref{eq: est diff X}, and the first term is bounded from below by $\qprob(-N+\epsilon \leq \cX^0_T \leq N-\epsilon, |\Delta \cX^\delta_T| \leq \epsilon)$ whose limit, as $\delta \downarrow 0$, is $\qprob(-N+\epsilon \leq \cX^0_T \leq N-\epsilon)$. Hence the first inequality in \eqref{eq: conv L0} follows since $\epsilon$ is chosen arbitrarily. The third inequality in \eqref{eq: conv L0} can be proved similarly.

 Now to estimate the first term on the right side of \eqref{eq: est -N<X<N}, note
 \begin{align*}
  &\expec_\qprob\bra{\exp(-\Delta \cX^\delta_T) \indic_{\{-N \leq \cX^\delta_T \leq N, \cX^0_T \leq 2N\}}} \\
  &= \expec_\qprob\bra{\pare{\exp(-\Delta \cX^\delta_T)-1} \indic_{\{-N \leq \cX^\delta_T \leq N, \cX^0_T \leq 2N\}}} + \qprob\pare{-N \leq \cX^\delta_T \leq N, \cX^0_T \leq 2N}.
 \end{align*}
 Here, since $\Delta \cX^\delta_T \geq -3N$ when $-N \leq \cX^\delta_T \leq N$ and $\cX^0_T \leq 2N$, then the first term on the right hand side converges to zero by the bounded convergence theorem and \eqref{eq: est diff X}. For the second term, we employ the same estimate as in \eqref{eq: conv L0}. Combining estimates for both terms, we obtain
 \[
 \begin{split}
  \qprob(-N<\cX^0_T < N) &\leq \liminf_{\delta \downarrow 0} \expec_\qprob\bra{\exp(-\Delta \cX^\delta_T) \indic_{\{-N\leq \cX^\delta_T\leq N, \cX^0_T \leq 2N\}}}\\
  &\leq \limsup_{\delta \downarrow 0} \expec_\qprob\bra{\exp(-\Delta \cX^\delta_T) \indic_{\{-N\leq \cX^\delta_T\leq N, \cX^0_T \leq 2N\}}} \leq \qprob(-N \leq \cX^0_T \leq N).
 \end{split}
 \]
 On the other hand, $\Delta \cX^\delta_T \leq -N$ when $-N\leq \cX^\delta_T \leq N$ and $\cX^0_T > 2N$. Therefore
 \[
  \limsup_{\delta\downarrow 0}\expec_\qprob\bra{\exp(-\Delta \cX^\delta_T) \, \indic_{\{-N\leq \cX^\delta_T\leq N, \cX^0_T > 2N\}}} \leq \lim_{\delta\downarrow 0} \expec_\qprob\bra{\exp(-\Delta \cX^\delta_T) \,\indic_{\{\Delta \cX^\delta_T \leq -N\}}} = 0, \quad \text{ as } \delta \downarrow 0,
 \]
 where the last convergence holds owing to \eqref{eq: exp dX<-M}.
 The previous two convergence combined imply
 \begin{equation}\label{eq: bet M exp}
 \begin{split}
 \qprob(-N<\cX^0_T <N)&\leq \liminf_{\delta \downarrow 0} \expec_\qprob\bra{\exp(-\Delta \cX^\delta_T) \,\indic_{\{-N \leq \cX^\delta_T \leq N\}}}\\
 &\leq
  \limsup_{\delta \downarrow 0} \expec_\qprob\bra{\exp(-\Delta \cX^\delta_T) \,\indic_{\{-N \leq \cX^\delta_T \leq N\}}} \leq \qprob(-N\leq \cX^0_T \leq N).
 \end{split}
 \end{equation}

 To estimate the second term on the right of \eqref{eq: est -N<X<N}, note $U_\delta(N) -(1+\epsilon) U_0(N)<0$, for sufficiently small $\delta$, and $\expec_{\prob}[U_0(\cX^0_T)]<0$. The third inequality in \eqref{eq: conv L0} (where $\qprob$ can be replaced by $\prob$, since $\qprob\sim \prob$) yields
 \begin{equation}\label{eq: bet M ind}
  \limsup_{\delta \downarrow 0} \pare{U_\delta(N) - (1+\epsilon) U_0(N)} \frac{\prob(-N\leq \cX^\delta_T\leq N)}{\expec_\prob[U_0(\cX^0_T)]} \leq -\epsilon\, U_0(N) \frac{\prob(-N\leq \cX^0_T \leq N)}{\expec_\prob[U_0(\cX^0_T)]}.
 \end{equation}

 \vspace{2mm}

 \noindent \textit{\underline{Step 2: on $\{\cX^\delta_T >N\}$.}} Integrating $\lR U'_0(x) \leq U'_\delta(x)$ on $(N, x)$ yields that $\lR U_0(x) - \lR U_0(N) + U_\delta(N) \leq U_\delta(x)$ for $x> N$. This implies
 \begin{equation}\label{eq: est X>N}
  \expec_\qprob\bra{\frac{U_\delta(\cX^\delta_T)}{U_0(\cX^0_T)} \,\indic_{\{\cX^\delta_T > N\}}} \leq \lR \, \expec_\qprob\bra{\exp(-\Delta \cX^\delta_T) \,\indic_{\{\cX^\delta_T > N\}}} + \pare{U_\delta(N) - \lR U_0(N)} \frac{\prob(\cX^\delta_T > N)}{\expec_\prob[U_0(\cX^0_T)]}.
 \end{equation}
 Lemma \ref{lem: ratio exp} and the first inequality in \eqref{eq: bet M exp} combined give
 \begin{equation}\label{eq: outside M exp}
  \limsup_{\delta \downarrow 0} \expec_\qprob\bra{\exp(- \Delta \cX^\delta_T) \, \indic_{\{\cX^\delta_T > N, \cX^\delta_T < -N\}}} \leq \qprob(\cX^0_T \geq N, \cX^0_T \leq -N).
 \end{equation}
 On the other hand, $\uR U_0(N) \leq U_\delta(N) \leq \lR U_0(N)<0$ for sufficiently small $\delta$. Combining the previous inequality with $U_\delta(N) - \lR U_0(N) \geq U_\delta(0) - \lR U_0(0)$, we obtain $0\geq U_\delta(N) - \lR U_0(N) \geq U_\delta(0) - \lR U_0(0)$, where the right side is bounded uniformly in $\delta$. Utilizing the similar argument as in \eqref{eq: conv L0},  we obtain $\limsup_{\delta\downarrow 0} \prob(\cX^\delta_T >N)\leq \prob(\cX^0_T \leq N)$. Combining above estimates for the right side of \eqref{eq: est X>N},
 \begin{equation}\label{eq: >M}
  \limsup_{\delta \downarrow 0} \expec_\qprob\bra{\frac{U_\delta(\cX^\delta_T)}{U_0(\cX^0_T)} \,\indic_{\{\cX^\delta_T > N\}}} \leq \lR \,\qprob(\cX^0_T \geq N, \cX^0_T \leq -N) + (1-\lR) \,U_0(0) \frac{\prob(\cX^0_T \geq N)}{\expec_\prob[U_0(\cX^0_T)]}.
 \end{equation}

 \vspace{2mm}
 \noindent \textit{\underline{Step 3: on $\{\cX^\delta_T <-N\}$.}} Integrating $U'_\delta(x) \leq \uR U'_0(x)$ on $(x, -N)$ gives $U_\delta(x) \geq \uR U_0(x) + U_\delta(-N) - \uR U_0(-N) \geq \uR U_0(x)$, where the second inequality holds since $U_\delta(-N) \geq \uR U_0(-N)$ for sufficiently small $\delta$. As a result, we have from \eqref{eq: outside M exp} that
 \begin{equation}\label{eq: <-M}
  \limsup_{\delta \downarrow 0} \expec_\qprob\bra{\frac{U_\delta(\cX^\delta_T)}{U_0(\cX^0_T)} \,\indic_{\{\cX^\delta_T < -N\}}} \leq \uR \limsup_{\delta \downarrow 0} \expec_\qprob \bra{\exp(-\Delta \cX^\delta_T) \, \indic_{\{\cX^\delta_T < -N\}}} \leq \uR \,\qprob(\cX^0_T \geq N, \cX^0_T\leq -N).
 \end{equation}

 \vspace{2mm}
 \noindent Finally combining \eqref{eq: bet M exp}, \eqref{eq: bet M ind},  \eqref{eq: >M}, and \eqref{eq: <-M}, \eqref{eq: limsup exp ratio} follows after sending $\epsilon \downarrow 0$ then $N\uparrow \infty$.
\end{proof}

\begin{proof}[Proof of Corollary \ref{cor: convergence rate}]
 Following the discussion after Lemma \ref{lem: M^a indep delta}, we consider problem \eqref{def: u_delta} for $\overline{U}_\delta(\alpha_\delta x)$ and $\overline{\xi}_\delta = \alpha_\delta x_0$. After the previous change of variable, $f(\delta) = \sup_{x\in \Real} |\overline{\fR}_\delta(x)-1|$ where $\overline{\fR}_\delta(x) = \overline{U}'_\delta(x) / \exp(-x)$. In what follows, we add a bar to random variables and processes associated to the problem for $\overline{U}_\delta$. In the rest of the proof, $C$ represents a constant which may be different in different places.

 First, we utilize the argument in Lemma \ref{lem: exp est R} to prove
 \begin{equation}\label{eq: exp conv rate}
  \expec_\qprob\bra{\left|1- \overline{\fR}_\delta(\overline{\cX}^\delta_T) \exp(-\Delta \overline{\cX}^\delta_T)\right| |\Delta \overline{X}^\delta_T|}\leq C\pare{f(\delta)^2 + g(\delta)^2 + h(\delta)}, \quad \text{ for sufficiently small } \delta.
 \end{equation}
 To this end, we have seen in Lemma \ref{lem: exp est R} that the left side is bounded from above by
 \begin{equation}\label{eq: conv rate ub}
  2 \,\expec_\qprob\bra{\left| \overline{\fR}_\delta(\overline{X}^0_T + \alpha_\delta \xi_\delta) \exp(-\Delta \overline{\xi}_\delta) -1\right| \,\left|\overline{I}_\delta\pare{\overline{U}'_0(\overline{\cX}^0_T)}-\overline{\cX}^0_T - \Delta \overline{\xi}_\delta\right|},
 \end{equation}
 where $\Delta\overline{\xi}_\delta = \alpha_\delta \xi_\delta - \xi_0$. To estimate the expectation above, note that $|\Delta \overline{\xi}_\delta| \leq  C g(\delta) + |\Delta \xi_\delta|$, where the constant $C$ depends on the uniform bound of $|\Delta \xi_\delta|$ (cf. assumptions of Corollary \ref{cor: convergence rate}). Then
 \[
  1- C\pare{g(\delta) + |\Delta \xi_\delta|} \leq \exp\pare{-Cg(\delta) - |\Delta \xi_\delta|} \leq \exp(-\Delta \overline{\xi}_\delta) \leq \exp\pare{C g(\delta) + |\Delta \xi_\delta|} \leq 1+ C\pare{g(\delta) + |\Delta \xi_\delta|},
 \]
 where the first inequality follows from $e^{-y} \geq 1- y$ for $y>0$ and the fourth inequality holds due to $e^y = 1+ \int_0^y e^z \,dz \leq 1+ C y$ when $e^y \leq C$. On the other hand, $1- f(\delta) \leq \overline{\fR}_\delta \leq 1+ f(\delta)$ for sufficiently small $\delta$.
 Therefore
 \begin{equation}\label{eq: conv rate est1}
 \begin{split}
  \left|\overline{\fR}_\delta(\overline{X}^0_T + \alpha_\delta \xi_0) \exp(-\Delta \overline{\xi}_\delta) -1\right| &\leq f(\delta) + C \pare{g(\delta)+ |\Delta \xi_\delta|} + C f(\delta) \pare{g(\delta) + |\Delta \xi_\delta|} \\
  &\leq C\pare{f(\delta) + g(\delta) + |\Delta \xi_\delta|}, \quad \qprob-a.s.,
 \end{split}
 \end{equation}
 for sufficiently small $\delta$. On the other hand, we have seen in Lemma \ref{lem: exp est R} that $\overline{I}_\delta(\overline{U}'_0(\overline{x})) -\overline{x} = \log \overline{\fR}_\delta (\overline{I}_\delta (\overline{y}))$ where $\overline{y} = \overline{U}'_0(\overline{x})$. It then follows $-2f(\delta) \leq \overline{I}_\delta(\overline{U}'_0(\overline{x})) -\overline{x} \leq 2 f(\delta)$, where we use $\log(1-y) = -\int_{-y}^0 (1+z)^{-1} dz \geq -2y$ for $0<y<1/2$ and $\log(1+y) \leq y$ for $y>0$. As a result
 \begin{equation}\label{eq: conv rate est2}
  \left|\overline{I}_\delta\pare{\overline{U}'_0(\overline{\cX}^0_T)}-\overline{\cX}^0_T - \Delta \overline{\xi}_\delta\right| \leq 2f(\delta) + C g(\delta) +|\Delta \xi_\delta|, \quad \qprob-a.s.,
 \end{equation}
 for sufficiently small $\delta$.
 Combining \eqref{eq: conv rate est1} and \eqref{eq: conv rate est2}, we obtain that the expectation in \eqref{eq: conv rate ub} is bounded from above by
 \[
  C \,\expec_\qprob\bra{(f(\delta)+ g(\delta)+|\Delta \xi_\delta|)^2} \leq C\pare{f(\delta)^2 + g(\delta)^2 + \expec_\qprob[|\Delta \xi_\delta|^2]}, \quad \text{ for sufficiently small } \delta.
 \]
 This confirms \eqref{eq: exp conv rate}.

 In the next step, we will prove
 \begin{equation}\label{eq: conv rate L1}
  \expec_\qprob\bra{|\Delta \overline{X}^\delta_T|} \leq C\pare{f(\delta)^2 + g(\delta)^2 + h(\delta)}, \quad \text{ for sufficiently small }\delta.
 \end{equation}
 Indeed, an argument similar to that in Corollary \ref{cor: L^1 conv diff X} implies that there exists $N, \eta>0$ such that
 \[
  \eta \, \expec_\qprob\bra{\left|\Delta \overline{X}^\delta_T\right| \, \indic_{\{|\Delta \overline{X}^\delta_T|\geq M^\delta\}}} \leq \expec_\qprob\bra{\left|1- \overline{\fR}_\delta(\overline{\cX}^\delta_T) \exp(-\Delta \overline{\cX}^\delta_T)\right| |\Delta \overline{X}^\delta_T|\, \indic_{\{|\Delta \overline{X}^\delta_T| \geq M^\delta\}}},
 \]
 where $M^\delta = N/2\vee (|\Delta \overline{\xi}_\delta| +1)$
 The previous inequality, combined with \eqref{eq: exp conv rate}, yields
 \[
  \expec_\qprob\bra{|\Delta \overline{X}^\delta_T| \,\indic_{\{|\Delta \overline{X}^\delta_T| \geq M^\delta\}}} \leq C\pare{f(\delta)^2 + g(\delta)^2 + h(\delta)}, \quad \text{ for sufficiently small } \delta.
 \]
 Now \eqref{eq: conv rate L1} follows after noticing $\expec_\qprob\bra{|\Delta \overline{X}^\delta_T| \indic_{\{|\Delta \overline{X}^\delta_T| \leq M^\delta\}}} \leq \expec_\qprob\bra{|\Delta \overline{X}^\delta_T| \indic_{\{|\Delta \overline{X}^\delta_T| \geq M^\delta\}}}$.

 Finally, come back to the problem before changing of variable,
 \[
 \begin{split}
  \expec_\qprob\bra{|\Delta X^\delta_T|} &\leq \frac{1}{\alpha_\delta}\expec_\qprob\bra{|\Delta \overline{X}^\delta_T|} + \frac{|\alpha_\delta -1|}{\alpha_\delta} \expec_\qprob[|X^0_T|] \\
  &\leq C\bra{f(\delta)^2+g(\delta)^2 + h(\delta) + g(\delta)}\\
  &\leq C\pare{f(\delta)^2 + g(\delta) + h(\delta)}, \quad \text{ for sufficiently small } \delta.
 \end{split}
 \]
\end{proof}

Let us now prove implications of Theorem \ref{thm: conv R} on utility-based prices.

\begin{proof}[Proof of Corollary \ref{cor: davis price}]
 Following the change of variable after Lemma \ref{lem: M^a indep delta}, we can assume without loss of generality that $\alpha_\delta=1$ for all $\delta\geq 0$ throughout this proof.
 Since $B\in L^\infty(\F_T)$, it suffices to prove
 $\lim_{\delta\downarrow 0} \expec_\qprob\bra{\left|d\qprob_\delta/d\qprob -1\right|}=0,$
 which follows from $\qprob-\lim_{\delta\downarrow 0} d\qprob_\delta/d\qprob=1$ in virtual by Scheffe's lemma.

 To prove the convergence in probability, the following form of $d\qprob_\delta/d\qprob$ can be read from Proposition \ref{prop: u_delta wellposed}:
 \begin{equation*}\label{eq: density qprob delta}
  \frac{d\qprob_\delta}{d\qprob} = \frac{U'_\delta(\cX^\delta_T)}{U'_0(\cX^0_T)} \, \frac{\expec_\prob[U'_0(\cX^0_T)]}{\expec_\prob[U'_\delta(\cX^\delta_T)]}.
 \end{equation*}
 In what follows, both factors on the right side will be proved converging to $1$.

 Let us estimate the first factor. For any given $N$ and $\epsilon$, there exists a sufficiently small $\delta$ such that $|\fR_\delta(x)-1|\leq \epsilon$ for $|x|\leq N$. Then $\qprob(|\fR_\delta(\cX^\delta_T) -1| \geq \epsilon, |\cX^\delta_T|\leq N)=0$ for sufficiently small $\delta$.
 Hence 
 \begin{align*}
 \limsup_{\delta\downarrow 0} \qprob(|\fR_\delta(\cX^\delta_T)-1|\geq \epsilon) &\leq \limsup_{\delta\downarrow 0} \qprob(|\fR_\delta(\cX^\delta_T)-1|\geq \epsilon, |\cX^\delta_T|\leq N) + \limsup_{\delta \downarrow 0} \qprob(|\cX^\delta_T|> N) \\
 &\leq \qprob(|\cX^0_T|\geq N),
 \end{align*}
 which can be made arbitrarily small for sufficiently large $N$. Therefore $\qprob-\lim_{\delta\downarrow 0} \fR_\delta(\cX^\delta_T) =1$, which combined $\qprob-\lim_{\delta\downarrow 0} \exp(-\Delta \cX^\delta_T) =1$ from \eqref{eq: est diff X}, implies
 \[
  \qprob-\lim_{\delta\downarrow 0} \frac{U'_\delta(\cX^\delta_T)}{U'_0(\cX^0_T)}=\qprob-\lim_{\delta\downarrow 0} \fR_\delta(\cX^\delta_T) \exp(-\Delta \cX^\delta_T) =1.
 \]

 In this paragraph, we will prove
 \[
  \lim_{\delta \downarrow 0} \frac{\expec_\prob[U'_\delta(\cX^\delta_T)]}{\expec_\prob[U'_0(\cX^0_T)]} =1.
 \]
 Changing to  the measure $\qprob$, the previous convergence is equivalent to
 \begin{equation}\label{eq: ratio expec mar utility}
  \lim_{\delta \downarrow 0} \expec_\qprob\bra{\frac{U'_\delta(\cX^\delta_T)}{U'_0(\cX^0_T)}}=1,
 \end{equation}
 which we will prove next. For any $\epsilon$ and $N$, there exists a sufficiently small $\delta$ such that $|\fR_\delta(\cX^\delta_T)-1|\leq \epsilon$ when $|\cX^\delta_T|\leq N$. The previous inequality combined with \eqref{eq: bet M exp} yield
 \[
 \begin{split}
  \limsup_{\delta\downarrow 0}\expec_\qprob\bra{\frac{U'_\delta(\cX^\delta_T)}{U'_0(\cX^0_T)} \, \indic_{\{|\cX^\delta_T| \leq N\}}} &= \limsup_{\delta\downarrow 0} \expec_\qprob\bra{\fR_\delta(\cX^\delta_T) \exp(-\Delta \cX^\delta_T) \, \indic_{\{|\cX^\delta_T| \leq N\}}}\\
  &\leq (1+\epsilon) \limsup_{\delta\downarrow 0}\expec_\qprob\bra{\exp(-\Delta \cX^\delta_T) \, \indic_{\{|\cX^\delta_T| \leq N\}}}\\
  &\leq (1+\epsilon) \,\qprob(|\cX^0_T| \leq N).
 \end{split}
 \]
 Similar argument also gives $\liminf_{\delta\downarrow 0} \expec_\qprob\bra{U'_\delta(\cX^\delta_T) / U'_0(\cX^0_T) \,\indic_{\{|\cX^\delta_T|\leq N\}}}\geq (1-\epsilon) \,\qprob(|\cX^0_T|<N)$.
 On the other hand, it follows from \eqref{eq: outside M exp} that
 \[
 \begin{split}
 \limsup_{\delta\downarrow 0} \expec_\qprob\bra{\fR_\delta(\cX^\delta_T) \exp(-\Delta \cX^\delta_T) \, \indic_{\{|\cX^\delta_T| >N\}}} &\leq \uR \limsup_{\delta\downarrow 0} \expec_\qprob\bra{\exp(-\Delta \cX^\delta_T) \, \indic_{\{|\cX^\delta_T| >N\}}} \\
  &\leq \uR \,\qprob(|\cX^0_T|\geq N).
 \end{split}
 \]
 Combining the previous two convergence and sending $N\uparrow \infty$ then $\epsilon\downarrow 0$, we confirm \eqref{eq: ratio expec mar utility}, hence the statement of the corollary.
\end{proof}

\begin{proof}[Proof of Corollary \ref{cor: indifference price}]
 It follows from \cite[Proposition 7.2 (i)]{Owen-Zitkovic} that $(p_\delta)_{\delta\geq 0}$ is uniformly bounded since $B\in L^{\infty}(\F_T)$. Therefore in every subsequence of $(p_\delta)_{\delta\geq 0}$ there exists a further subsequence $(p_{\delta_n})_{n\geq 0}$ converging to some limit, say $\wt{p}_0$. In the next paragraph, we will prove $\wt{p}_0 = p_0$. This implies that the entire sequence of $(p_\delta)_{\delta\geq 0}$ converges to $p_0$ as well, since the choice of subsequence is arbitrary.

 For the subsequence $(\delta_n)_{n\geq 0}$, Assumption \ref{ass: endowment conv} holds for $\xi_n = x_0+ B- p_{\delta_n}$ and $\xi_0 = x_0 + B-\wt{p}_0$ when $B$ is bounded. It then follows from Theorem \ref{thm: conv R} ii) that
 \[
  \lim_{\delta_n \downarrow 0} u_{\delta_n}(x_0+B - p_{\delta_n}) = u_0(x_0+B- \wt{p}_0).
 \]
 Apply Theorem \ref{thm: conv R} ii) with  $\xi_n= x_0$,
 \[
  \lim_{\delta_n \downarrow 0} u_{\delta_n}(x_0) = u_0(x_0).
 \]
 Since $u_{\delta_n}(x_0+B-p_{\delta_n}) = u_{\delta_n}(x_0)$, the previous two convergence combined imply $u_0(x_0+B-\wt{p}_0) = u_0(x_0)$. Then $p_0= \wt{p}_0$ follows from the uniqueness of the indifference price $p_0$.
\end{proof}

\section{Stability for utilities defined on $\Real_+$}\label{sec: stab R_+}

We will prove Theorem \ref{thm: conv R_+} in this section. To this end, we can assume without loss of generality that $D=1$ $\prob$-a.s.. Otherwise, we can define $\prob_D \sim \prob$ via $d\prob_D / d\prob= D/ \expec_\prob[D]$ and work with $\prob_D$ instead of $\prob$ throughout this section.
Assumptions \ref{ass: finite entropy}, \ref{ass: marginal bound power}, and \ref{ass: cont filtration} are enforced throughout this section, \eqref{ass: Rp conv} is satisfied as well. To simplify notation, denote $\wt{U}_p(x)= x^p/p$ and $\wt{X}^{(p)}$, $\wt{Y}^{(p)}$, and $\wt{y}_p$ quantities in Proposition \ref{prop: u_p wellposed} when $U_p$ is chosen as $\wt{U}_p$.

Denote the ratio of optimal wealth processes as
\[
 r^{(p)} = \frac{X^{(p)}}{\wt{X}^{(p)}}
\]
and introduce a sequence of auxiliary probability measures $(\prob_p)_{p<0}$ via
\[
 \frac{d\prob_p}{d\prob} = \frac{\pare{\wt{X}^{(p)}_T}^p}{\expec_\prob\bra{\pare{\wt{X}^{(p)}_T}^p}}, \quad \text{ for each } p<0.
\]
It follows from Proposition \ref{prop: u_p wellposed} that $(X^{(p)}_T)^p>0$, $\prob$-a.s., therefore $\prob_p \sim \prob$ for each $p<0$. This sequence of auxiliary measures will facilitate various estimates in this section. Another important observation is that $\wt{X}^{(p)}$ has the \emph{num\'{e}raire property} under $\prob_p$, i.e., $\expec_{\prob_p}[X_T/ \wt{X}^{(p)}_T]\leq 1$ for any admissible wealth process $X$. Indeed, Proposition \ref{prop: u_p wellposed} implies $\expec_\prob\bra{(\wt{X}^{(p)}_T)^{p-1}(X_T- \wt{X}_T^{(p)})}\leq 0$ for any admissible $X$. The claim then follows from changing the measure to $\prob_p$ in the previous inequality. As a result, every admissible wealth process $X$ deflated by $\wt{X}^{(p)}$ is a $\prob_p$-supermartingale; see \cite[Equation (3.10)]{turnpike}. In particular, $\pr$ is a $\prob_p$-supermartingale.

As the last section, we start our analysis with the following estimate.

\begin{lem}\label{lem: exp est R_+}
 It holds that
 \[
  \lim_{p\downarrow -\infty} \expec_{\prob_p}\bra{|p|\left|\fR(X^{(p)}_T) (r^{(p)}_T)^{p-1}-1\right|\left|1-r^{(p)}_T\right|} =0.
 \]
\end{lem}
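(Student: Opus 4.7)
The plan is to mirror the proof of Lemma~\ref{lem: exp est R}, using the optimality of $\pX$ from Proposition~\ref{prop: u_p wellposed} together with the num\'eraire property of $\ptX$ under $\prob_p$. Testing Proposition~\ref{prop: u_p wellposed} (with $D_T=1$) against the admissible wealth $\ptX$ gives $\expec_\prob[U'_p(\pX_T)(\ptX_T - \pX_T)]\leq 0$. Writing $U'_p(x) = \fR_p(x)\,x^{p-1}$, substituting $\pX_T = \pr_T \ptX_T$, and changing measure via $d\prob_p/d\prob \propto (\ptX_T)^p$, I obtain
$$\expec_{\prob_p}\bra{\fR_p(\pX_T)(\pr_T)^{p-1}(1 - \pr_T)}\leq 0.$$
Combined with $\expec_{\prob_p}[\pr_T - 1]\leq 0$ from the num\'eraire property, this yields
$$\expec_{\prob_p}[A_p] \leq 0, \qquad A_p := \bigl(\fR_p(\pX_T)(\pr_T)^{p-1} - 1\bigr)\bigl(1 - \pr_T\bigr).$$
As in the $\Real$-valued setting, $\expec_{\prob_p}[|A_p|] \leq 2\,\expec_{\prob_p}[(A_p)_-]$ reduces the lemma to a uniform pointwise estimate on $|p|\,(A_p)_-$.

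Next I identify the set where $A_p<0$. Since $r\mapsto \fR_p\, r^{p-1}$ is strictly decreasing on $(0,\infty)$ with value $1$ precisely at $r_\ast := \fR_p(\pX_T)^{1/(1-p)}$, and since $r\mapsto 1-r$ changes sign at $r=1$, the two factors of $A_p$ have opposite signs exactly when $\pr_T$ lies strictly between $1$ and $r_\ast$. On that set, monotonicity of each factor forces
$$|1-\pr_T| \leq |1-r_\ast|, \qquad |\fR_p(\pX_T)(\pr_T)^{p-1} - 1| \leq |\fR_p(\pX_T) - 1|,$$
so $(A_p)_- \leq |\fR_p(\pX_T)-1|\cdot |1-r_\ast|$ $\prob_p$-a.s.

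Finally I quantify these using \eqref{ass: Rp conv}. Let $K:=\sup_p (1-p)\,\max(\uR_p-1,\,1-\lR_p)<\infty$, so that $|\fR_p(\cdot)-1|\leq K/(1-p)$ uniformly. A mean value theorem estimate for $a\mapsto a^{1/(1-p)}$ between $\fR_p(\pX_T)$ and $1$, using that $\lR_p,\uR_p\to 1$ keeps the derivative $\xi^{1/(1-p)-1}$ (with $\xi\in[\lR_p,\uR_p]$ and exponent in $(-1,0)$) bounded by a universal constant for $|p|$ large, gives $|1-r_\ast|\leq \frac{C}{1-p}|\fR_p(\pX_T)-1|$. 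Therefore
$$|p|\,(A_p)_- \;\leq\; \frac{C\,|p|\,K^2}{(1-p)^3} \;=\; O\!\pare{|p|^{-2}}$$
uniformly in $\omega$, and integration against $\prob_p$ closes the argument. The key analytical step is this mean value estimate: the resulting factor $1/(1-p)$, together with the rate built into \eqref{ass: Rp conv}, is precisely what absorbs the $|p|$-prefactor in the statement.
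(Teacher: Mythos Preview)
Your proof is correct and follows essentially the same route as the paper: both start from the pair of optimality inequalities in Proposition~\ref{prop: u_p wellposed} (your num\'eraire inequality is exactly the paper's $\expec_\prob[\wt{X}_T^{p-1}(X_T-\wt{X}_T)]\le0$ after changing to $\prob_p$), sum them, and control $\expec_{\prob_p}[|A_p|]$ via $2\,\expec_{\prob_p}[(A_p)_-]$ by locating where the two factors have opposite sign. The only cosmetic difference is the parametrisation of that localisation step: the paper fixes $\wt X_T$ and varies $X_T$, using monotonicity of $U'_p$ to trap $X_T$ between $\wt X_T$ and $I_p(\wt X_T^{p-1})$, whereas you freeze $\fR_p(X_T^{(p)})$ and vary $r$, using monotonicity of $r\mapsto r^{p-1}$ to trap $r_T^{(p)}$ between $1$ and $\fR_p(X_T^{(p)})^{1/(1-p)}$; both give the same pointwise bound $(A_p)_-\le|\fR_p-1|\cdot|1-\fR_p^{1/(1-p)}|$ up to where $\fR_p$ is evaluated, which is immaterial once one invokes $\lR_p\le\fR_p\le\uR_p$. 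Your mean-value estimate and the paper's direct observation that $|p|(\uR_p-1)$ stays bounded while $\uR_p^{1/(1-p)}-1\to0$ are equivalent; your version additionally extracts the quantitative rate $O(|p|^{-2})$.
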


\begin{proof}
 Throughout this proof we omit the superscript $(p)$ in $\pX$, $\ptX$, and $\pr$ to simplify notation. Applying Proposition \ref{prop: u_p wellposed} to $U_p$ and $\wt{U}_p$, respectively, yields
 \[
  \expec_\prob\bra{U'_p(X_T)(\wt{X}_T - X_T)} \leq 0 \quad \text{ and } \quad \expec_\prob\bra{\wt{X}^{p-1}_T (X_T - \wt{X}_T)} \leq 0.
 \]
 Summing up the previous two inequalities and changing to the measure $\prob_p$, we obtain
 \[
  \expec_{\prob_p} \bra{\pare{\frac{U'_p(X_T)}{\wt{X}^{p-1}_T}-1}\pare{1- \frac{X_T}{\wt{X}_T}}} \leq 0.
 \]
 Similar to Lemma \ref{lem: exp est R}, $(U'_p(X_T) \wt{X}^{1-p}_T-1)(1-X_T/ \wt{X}_T) \leq 0$ only when $I_p(\wt{X}^{p-1}_T)\leq X_T \leq \wt{X}_T$ or $\wt{X}_T \leq X_T \leq I_p(\wt{X}^{p-1})$, where $I_p = (U'_p)^{-1}$. In either cases,
 \[
  \pare{\pare{\frac{U'_p(X_T)}{\wt{X}^{p-1}_T}-1}\pare{1- \frac{X_T}{\wt{X}_T}}}_- \leq \pare{1-\frac{U'_p(\wt{X}_T)}{\wt{X}^{p-1}_T}}\pare{1- \frac{I_p(\wt{X}_T^{p-1})}{\wt{X}_T}}.
 \]
 Therefore,
 \[
  \expec_{\prob_p} \bra{\left|\pare{\frac{U'_p(X_T)}{\wt{X}^{p-1}_T}-1}\pare{1- \frac{X_T}{\wt{X}_T}}\right|} \leq 2 \, \expec_{\prob_p} \bra{\pare{1- \fR(\wt{X}_T)}\pare{1- \frac{I_p(\wt{X}^{p-1}_T)}{\wt{X}_T}}}.
 \]
 Note that
 \[
  \frac{I_p(x^{p-1})}{x} = \frac{I_p(y)}{y^{\frac{1}{p-1}}} = \pare{\frac{I_p(y)^{p-1}}{U'_p(I_p(y))}}^{\frac{1}{p-1}} = \fR_p(I_p(y))^{\frac{1}{1-p}},
 \]
 where $y = x^{p-1}$. Utilizing the previous identity, we obtain from the previous inequality and Assumption \ref{ass: marginal bound power} that
 \begin{equation}\label{eq: exp est power}
  \expec_{\prob_p} \bra{\left|\pare{\frac{U'_p(X_T)}{\wt{X}^{p-1}_T}-1}\pare{1- \frac{X_T}{\wt{X}_T}}\right|} \leq 2 \,\max\left\{(\uR_p -1)(\uR_p^{\frac{1}{1-p}}-1), (1-\lR_p)(1-\lR_p^{\frac{1}{1-p}})\right\}.
 \end{equation}
 Since $\limsup_{p\downarrow -\infty} |p| (\uR_p-1) <\infty$ from  \eqref{ass: Rp conv}, $\lim_{p\downarrow -\infty} \uR_p^{\frac{1}{1-p}} = \lim_{p\downarrow -\infty} \exp(\frac{1}{1-p} \log \uR_p)=1$. Therefore the first term on the right side of \eqref{eq: exp est power}, after multiplying by $|p|$, converges to $0$ as $p\downarrow -\infty$. Similar argument applies to the second term as well. As a result, the left side expectation, after multiplying $|p|$, converges to $0$ as $p\downarrow -\infty$.
\end{proof}

The previous estimate induces the convergence of $r^{(p)}_T$ in the following sense.
\begin{cor}\label{cor: r^p->1 prob}
 It holds that
 \[
  \lim_{p\downarrow -\infty} \prob_p\pare{\left|(r^{(p)}_T)^p -1\right| \geq \epsilon} =0, \quad \text{ for any } \epsilon>0.
 \]
\end{cor}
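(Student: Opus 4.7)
The plan is to translate the deviation event $\{|(r^{(p)}_T)^p - 1|\geq \epsilon\}$ into a two-sided event on $r^{(p)}_T$ itself, and then apply Markov's inequality to the estimate supplied by Lemma \ref{lem: exp est R_+}. The key qualitative point is that since $p<0$ with $|p|\to\infty$, the bad set for $(r^{(p)}_T)^p$ corresponds to a window for $r^{(p)}_T$ that shrinks to $\{1\}$ at rate $|p|^{-1}$, and this rate is exactly matched by the prefactor $|p|$ in Lemma \ref{lem: exp est R_+}.

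For $\epsilon\in(0,1)$ (the case $\epsilon\geq 1$ for the lower tail is vacuous since $(r^{(p)}_T)^p>0$), I set $\alpha_p := (1+\epsilon)^{1/p}$ and $\beta_p := (1-\epsilon)^{1/p}$, so that $\alpha_p<1<\beta_p$ and both tend to $1$; the deviation event then decomposes into $A^+_p := \{r^{(p)}_T\leq \alpha_p\}$ and $A^-_p := \{r^{(p)}_T\geq \beta_p\}$. On $A^+_p$, monotonicity (since $p-1<0$) gives $(r^{(p)}_T)^{p-1}\geq \alpha_p^{p-1} = (1+\epsilon)^{1-1/p}\to 1+\epsilon$, and combined with $\lR_p\to 1$ from \eqref{ass: Rp conv} this forces $\fR_p(X^{(p)}_T)(r^{(p)}_T)^{p-1}-1\geq \epsilon/2$ for all $|p|$ sufficiently large. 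Simultaneously $1-r^{(p)}_T\geq 1-\alpha_p$, and a one-term expansion yields $|p|(1-\alpha_p)\to \log(1+\epsilon)>0$, so $|p|\,|1-r^{(p)}_T|$ is bounded below by a fixed positive constant on $A^+_p$ for large $|p|$. Multiplying, the integrand inside Lemma \ref{lem: exp est R_+}'s expectation is bounded below by a fixed positive constant on $A^+_p$, and Markov's inequality together with Lemma \ref{lem: exp est R_+} yields $\prob_p(A^+_p)\to 0$.

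The treatment of $A^-_p$ is symmetric: one uses $\uR_p\to 1$ together with $\beta_p^{p-1} = (1-\epsilon)^{1-1/p}\to 1-\epsilon$ to bound $|\fR_p(X^{(p)}_T)(r^{(p)}_T)^{p-1}-1|$ from below by $\epsilon/2$ on $A^-_p$, and the expansion $|p|(\beta_p-1)\to |\log(1-\epsilon)|$ to bound $|p|\,|1-r^{(p)}_T|$ from below on $A^-_p$; a second application of Markov's inequality to Lemma \ref{lem: exp est R_+} gives $\prob_p(A^-_p)\to 0$. No serious obstacle is anticipated beyond bookkeeping the constants. The delicate structural point is precisely the matching of rates between the shrinking $r^{(p)}_T$-window (of order $|p|^{-1}$) and the $|p|$-amplification in Lemma \ref{lem: exp est R_+}, which is exactly what Assumption \eqref{ass: Rp conv} was imposed to guarantee.
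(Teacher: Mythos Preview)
Your proposal is correct and follows essentially the same route as the paper: decompose the deviation event into the two tails $\{r^{(p)}_T\leq (1+\epsilon)^{1/p}\}$ and $\{r^{(p)}_T\geq (1-\epsilon)^{1/p}\}$, use the expansion $|p|\bigl(1-(1\pm\epsilon)^{1/p}\bigr)\to \mp\log(1\pm\epsilon)$ to bound $|p|\,|1-r^{(p)}_T|$ below, use $\lR_p,\uR_p\to 1$ together with $(r^{(p)}_T)^{p-1}\to 1\pm\epsilon$ on the respective tails to bound $|\fR_p(X^{(p)}_T)(r^{(p)}_T)^{p-1}-1|$ below, and conclude via Markov's inequality and Lemma~\ref{lem: exp est R_+}. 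The only cosmetic difference is in the choice of explicit constants (you take $\epsilon/2$ where the paper uses $(1+\epsilon/2)^{1/2}-1$, etc.), and your explicit remark that the case $\epsilon\geq 1$ is vacuous for the lower tail is a small clarification the paper leaves implicit.
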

\begin{proof}
 Throughout this proof we still omit the superscript $(p)$. When $r_T^p \geq 1+\epsilon$, $1-r_T \geq 1-(1+\epsilon)^{1/p}$. Note that $(1+\epsilon)^{1/p} = \exp(p^{-1} \log(1+\epsilon)) = 1+ p^{-1}\log(1+\epsilon) + o(p^{-1})$. Hence $\lim_{p\downarrow -\infty} -p (1-(1+\epsilon)^{1/p}) = \log(1+\epsilon)>0$. Therefore when $r_T^p \geq 1+ \epsilon$, $-p (1-r_T) \geq \frac12 \log(1+\epsilon)>0$ for sufficiently small $p$. When $r^p_T\leq 1-\epsilon$, we can similarly obtain $-p (r_T-1) \geq -\frac12 \log(1-\epsilon)>0$ for sufficiently small $p$. Set $\eta= \min\{\frac12 \log(1+\epsilon), -\frac12 \log(1-\epsilon)\}>0$. The previous two estimates combined yield
 \[
  -p \,|r_T -1| \geq \eta \quad \text{ when } |r^p_T -1|\geq \epsilon \text{ for sufficiently small } p.
 \]

 On the other hand, when $r^p_T \geq 1+\epsilon$, $r_T^{p-1} \geq 1+\epsilon/2$ for sufficiently small $p$. Moreover \eqref{ass: Rp conv} and Assumption \ref{ass: marginal bound power} combined imply that $\fR_p(X_T)\geq \lR_p \geq (1+\epsilon/2)^{-\frac12}$ for sufficiently small $p$. As a result,
 \[
  \fR_p(X_T) r_T^{p-1} -1 \geq (1+\epsilon/2)^{-\frac12} (1+\epsilon/2)-1 =(1+\epsilon/2)^{\frac12} - 1>0, \quad \text{ when } r^p_T -1 \geq \epsilon \text{ for sufficiently small } p.
 \]
 Similarly,
 \[
  1- \fR(X_T) r_T^{p-1} \geq 1-(1-\epsilon/2)^{\frac12}>0, \quad \text{ when } r^p_T -1 \leq -\epsilon, \text{ for sufficiently small } p.
 \]

 Combining estimates in the last two paragraphs, we obtain
 \[
  |p| \left|\fR(X_T) r^{p-1}_T -1\right||1-r_T| \geq \eta\cdot \min\left\{\pare{1+\epsilon/2}^{\frac12} -1, 1-(1-\epsilon/2)^{\frac12}\right\}>0, \quad \text { when } |r^p_T-1|\geq \epsilon,
 \]
 for for sufficiently small $p$.
 The statement then follows from the previous inequality and Lemma \ref{lem: exp est R_+}.
\end{proof}

The previous convergence in probability implies that $(r^{(p)}_T)^p$ converges to $1$ in expectation.

\begin{prop}\label{prop: L^1 conv r^p}
 It holds that
 \[
  \lim_{p\downarrow -\infty} \expec_{\prob_p} \bra{\left|\pare{r_T^{(p)}}^p -1\right|}=0.
 \]
\end{prop}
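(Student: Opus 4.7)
The plan is to upgrade the $\prob_p$-convergence in probability from Corollary \ref{cor: r^p->1 prob} to $L^1(\prob_p)$-convergence by a Scheff\'e-type argument. Since $(r^{(p)}_T)^p \ge 0$ and converges to $1$ in $\prob_p$-probability, the identity
\[
\expec_{\prob_p}\bigl[|(r^{(p)}_T)^p-1|\bigr] \;=\; \bigl(\expec_{\prob_p}[(r^{(p)}_T)^p]-1\bigr) + 2\,\expec_{\prob_p}\bigl[(1-(r^{(p)}_T)^p)_+\bigr]
\]
combined with $\expec_{\prob_p}[(1-(r^{(p)}_T)^p)_+] \to 0$ (by bounded convergence, as $(1-(r^{(p)}_T)^p)_+ \le 1$) reduces the claim to showing $\expec_{\prob_p}[(r^{(p)}_T)^p] \to 1$.

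Set $\wt{U}_p(x)=x^p/p$, $\wt{a}_p := \expec_\prob[(\wt{X}^{(p)}_T)^p]>0$, and $\eta_p := \max(\uR_p-1,\,1-\lR_p)$, which tends to $0$ by \eqref{ass: Rp conv}. Using the density $d\prob_p/d\prob = (\wt{X}^{(p)}_T)^p/\wt{a}_p$ together with the homogeneity $\wt{U}_p(X^{(p)}_T) = (r^{(p)}_T)^p\,\wt{U}_p(\wt{X}^{(p)}_T)$, a direct change of measure yields the scaling identity
\[
\expec_\prob[\wt{U}_p(X^{(p)}_T)] \;=\; \wt{u}_p(x_0)\cdot \expec_{\prob_p}[(r^{(p)}_T)^p].
\]
Since $\wt{u}_p(x_0) < 0$ and $\wt{X}^{(p)}$ is optimal for $\wt{U}_p$, so that $\expec_\prob[\wt{U}_p(X^{(p)}_T)] \le \wt{u}_p(x_0)$, the scaling identity immediately gives the easy lower bound $\expec_{\prob_p}[(r^{(p)}_T)^p] \ge 1$.

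The core step is the matching upper estimate
\[
\expec_{\prob_p}[(r^{(p)}_T)^p] - 1 \;\le\; \eta_p\,\expec_{\prob_p}\bigl[|(r^{(p)}_T)^p-1|\bigr].
\]
To derive it, I would combine the primal optimality $\expec_\prob[U_p(X^{(p)}_T)] \ge \expec_\prob[U_p(\wt{X}^{(p)}_T)]$ with the decomposition $U_p = \wt{U}_p + (U_p-\wt{U}_p)$ and rearrange to obtain
\[
\wt{u}_p(x_0) - \expec_\prob[\wt{U}_p(X^{(p)}_T)] \;\le\; \expec_\prob\bigl[(U_p-\wt{U}_p)(X^{(p)}_T) - (U_p-\wt{U}_p)(\wt{X}^{(p)}_T)\bigr].
\]
By the scaling identity the left side equals $(\wt{a}_p/|p|)\bigl(\expec_{\prob_p}[(r^{(p)}_T)^p]-1\bigr)$. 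For the right side I would use the pointwise Lipschitz-type bound
\[
\bigl|(U_p-\wt{U}_p)(x) - (U_p-\wt{U}_p)(y)\bigr| \;\le\; \eta_p\,|\wt{U}_p(x)-\wt{U}_p(y)|, \qquad x,y>0,
\]
which follows from Assumption \ref{ass: marginal bound power} since $(U_p-\wt{U}_p)(x)-(U_p-\wt{U}_p)(y) = \int_y^x (\fR_p(z)-1)z^{p-1}\,dz$, the integrand is uniformly controlled by $\eta_p\,z^{p-1}$, and $z^{p-1}>0$ so the sign of $\int_y^x z^{p-1}\,dz = \wt{U}_p(x)-\wt{U}_p(y)$ is preserved. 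Taking $\prob$-expectations and changing measure once more bounds the right side by $(\eta_p/|p|)\wt{a}_p\,\expec_{\prob_p}[|(r^{(p)}_T)^p-1|]$; cancelling the positive common factor $\wt{a}_p/|p|$ produces the displayed estimate.

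To close, substitute the crude bound $|(r^{(p)}_T)^p-1| \le (r^{(p)}_T)^p + 1$ into the key inequality and rearrange to obtain $(1-\eta_p)\expec_{\prob_p}[(r^{(p)}_T)^p] \le 1+\eta_p$, whence
\[
1 \;\le\; \expec_{\prob_p}[(r^{(p)}_T)^p] \;\le\; \frac{1+\eta_p}{1-\eta_p} \;\xrightarrow[p\downarrow -\infty]{}\; 1.
\]
Therefore $\expec_{\prob_p}[(r^{(p)}_T)^p] \to 1$, and the conclusion follows from the Scheff\'e-type reduction in the first paragraph. The main obstacle is the derivation of the key inequality, which rests on the Lipschitz bound together with two applications of the change of measure induced by $\prob_p$; the sign bookkeeping works out cleanly because $p<0$ forces $\wt{u}_p(x_0) < 0$, $\wt{a}_p > 0$, and $|p| > 0$, so that dividing through by $\wt{a}_p/|p|$ is legitimate.
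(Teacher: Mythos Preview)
Your argument is correct. Both the paper and you reduce the claim to showing $\expec_{\prob_p}[(r^{(p)}_T)^p]\to 1$, and both use Corollary~\ref{cor: r^p->1 prob} to handle the bounded piece; the differences lie in how each half is executed.

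For the convergence $\expec_{\prob_p}[(r^{(p)}_T)^p]\to 1$, the paper works on the dual side: it uses the first-order identity $x_0 y_p=\expec_\prob[U'_p(X^{(p)}_T)X^{(p)}_T]$ from Proposition~\ref{prop: u_p wellposed}, combines it with Assumption~\ref{ass: marginal bound power} to sandwich $\expec_\prob[(X^{(p)}_T)^p]$ between $x_0 y_p/\uR_p$ and $x_0 y_p/\lR_p$, and then shows $y_p/\wt{y}_p\to 1$ via primal optimality in one direction and a H\"older/supermartingale-deflator estimate in the other. Your route is purely primal: the lower bound follows from optimality of $\wt{X}^{(p)}$ for $\wt{U}_p$, and the upper bound comes from the Lipschitz control $|(U_p-\wt{U}_p)(x)-(U_p-\wt{U}_p)(y)|\le \eta_p\,|\wt{U}_p(x)-\wt{U}_p(y)|$ together with optimality of $X^{(p)}$ for $U_p$. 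This avoids the dual variables $y_p,\wt{y}_p$ and the H\"older step entirely, and yields the explicit quantitative bound $1\le \expec_{\prob_p}[(r^{(p)}_T)^p]\le (1+\eta_p)/(1-\eta_p)$. For the passage from $\expec_{\prob_p}[(r^{(p)}_T)^p]\to 1$ to $L^1$-convergence, the paper truncates on $\{r^p_T\le N\}$ and $\{r^p_T>N\}$, whereas your Scheff\'e-type identity $\expec_{\prob_p}[|(r^{(p)}_T)^p-1|]=(\expec_{\prob_p}[(r^{(p)}_T)^p]-1)+2\,\expec_{\prob_p}[(1-(r^{(p)}_T)^p)_+]$ is cleaner and avoids the cutoff parameter. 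One small point worth making explicit: the ``bounded convergence'' step is applied under the varying measures $\prob_p$, but since $(1-(r^{(p)}_T)^p)_+\le 1$ the elementary bound $\expec_{\prob_p}[(1-(r^{(p)}_T)^p)_+]\le \epsilon+\prob_p((1-(r^{(p)}_T)^p)_+>\epsilon)$ combined with Corollary~\ref{cor: r^p->1 prob} suffices.
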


\begin{proof}
 Throughout this proof we omit the superscript $(p)$. The proof is split into two steps. The first step proves
 \begin{equation}\label{eq: exp r^p ->1}
  \lim_{p\downarrow -\infty} \expec_{\prob_p}[r^p_T] =1.
 \end{equation}
 The second step confirms the statement.

 \vspace{2mm}
 \noindent\textit{\underline{Step 1}:} After the measure $\prob_p$ is changed to $\prob$, \eqref{eq: exp r^p ->1} is equivalent to
 \begin{equation}\label{eq: ratio power}
  \lim_{p\downarrow -\infty} \frac{\expec_\prob[X^p_T]}{\expec_\prob[\wt{X}^p_T]} =1,
 \end{equation}
 which will be proved in this step. We have seen in Proposition \ref{prop: u_p wellposed} that
 \[
  \frac{\lR_p}{y_p} \expec_\prob[X^p_T] \leq x_0 = \frac{1}{y_p} \expec_\prob\bra{U'_p(X_T) X_T} \leq \frac{\uR_p}{y_p} \expec_\prob[X^p_T],
 \]
 where Assumption \ref{ass: marginal bound power} is used to obtain two inequalities. Sending $p\downarrow -\infty$ in previous inequalities, we obtain from $\lR_p, \uR_p\rightarrow 1$,
 \[
  \lim_{p\downarrow -\infty} \frac{1}{y_p} \expec_\prob[X^p_T] =x_0.
 \]
 The optimality of $\wt{X}$ gives $\expec_\prob[X^p_T]/p \leq \expec_\prob[\wt{X}^p_T]/p= x_0 \wt{y}_p/p$. The previous convergence and $p<0$ then yields
 \[
  \limsup_{p\downarrow -\infty} \frac{\wt{y}_p}{y_p} \leq 1.
 \]
 The reverse inequality on the limit inferior will be proved in the next paragraph.

 Note that $\frac{y}{I_p(y)^{p-1}} = \frac{U'_p(x)}{x^{p-1}}$ for $x= I_p(y)$. Then Assumption \ref{ass: marginal bound power} gives $\lR_p \leq \frac{y}{I_p(y)^{p-1}} \leq \uR_p$, hence
 \[
  \lR_p^{\frac{1}{1-p}} \leq \frac{I_p(y)}{y^{\frac{1}{p-1}}} \leq \uR_p^{\frac{1}{1-p}}, \quad \text{ for } y>0.
 \]
 Proposition \ref{prop: u_p wellposed} then yields
 \[
  x_0 = \expec_\prob\bra{Y_T I_p \pare{y_p Y_T}} \leq \uR^{\frac{1}{1-p}} \expec_\prob\bra{Y_T \pare{y_p Y_T}^{\frac{1}{p-1}}} = \uR^{\frac{1}{1-p}} y_p^{\frac{1}{p-1}} \expec_\prob\bra{Y_T^q},
 \]
 where $q:= p/(p-1)$. Note $\expec_\prob[Y_T^q]^{1-p} \leq \expec_\prob[\wt{X}_T^p/ x_0^p]$ follows from $\expec_\prob[Y_T\wt{X}_T/x_0]\leq 1$ and H\"{o}lder's inequality (see e.g. \cite[Lemma 5]{Guasoni-Robertson}). The previous two inequalities combined yield $x_0 y_p \leq \uR_p \expec_\prob[\wt{X}^p_T] = \uR_p \,x_0 \wt{y}_p$. Sending $p\downarrow -\infty$ and utilizing $\lim_{p\downarrow -\infty} \uR_p=1$, we obtain from the previous inequality
 \[
  \liminf_{p\downarrow -\infty} \frac{\wt{y}_p}{y_p} \geq 1.
 \]

 Estimates from the last two paragraphs yield $\lim_{p\downarrow -\infty} y_p/\wt{y}_p =1$, which is equivalent to
 \[
  \lim_{p\downarrow -\infty} \frac{\expec_\prob\bra{U'_p(X_T) X_T}}{\expec_\prob\bra{\wt{X}_T^p}} =1.
 \]
 Since $\lR_p \expec_\prob[X_T^p] \leq \expec_\prob[U'_p(X_T) X_T] \leq \uR_p \expec_\prob[X_T^p]$, \eqref{eq: ratio power} follows from dividing by $\expec_\prob[\wt{X}_T^p]$ on both sides of the previous inequality and sending $p\downarrow -\infty$.

 \vspace{2mm}
 \noindent\textit{\underline{Step 2}:} For any $N>1$,  $\lim_{p\downarrow -\infty} \expec_{\prob_p} \bra{|r_T^p -1| \,\indic_{\{r_T^p\leq N \}}}=0$ is proved in this paragraph.
 To this end, for any $\epsilon>0$,
 \[
 \begin{split}
  \expec_{\prob_p} \bra{|r^p_T-1| \,\indic_{\{r^P_T \leq N\}}}&= \expec_{\prob_p} \bra{|r^p_T-1| \,\indic_{\{r^P_T \leq N, |r^p_T-1|\leq \epsilon\}}} + \expec_{\prob_p} \bra{|r^p_T-1| \,\indic_{\{r^P_T \leq N, |r^p_T-1|>\epsilon\}}}\\
  &\leq \epsilon + (N-1) \,\prob_p(|r^p_T-1|>\epsilon)\\
  & \rightarrow \epsilon, \quad \text{ as } p\downarrow -\infty,
 \end{split}
 \]
 where the convergence follows from Corollary \ref{cor: r^p->1 prob}. Therefore the claim is confirmed since the choice of $\epsilon$ is arbitrary in the previous inequality.

 Now $\lim_{p\downarrow -\infty} \expec_{\prob_p} \bra{|r_T^p -1| \,\indic_{\{r_T^p> N \}}}=0$ is proved in this paragraph. Combining this convergence and the one in the last paragraph confirm $\lim_{p\downarrow -\infty} \expec_{\prob_p} \bra{|r_T^p -1|}=0$. To prove the claim,
 \[
 \begin{split}
  \expec_{\prob_p} \bra{|r_T^p -1| \,\indic_{\{r_T^p> N \}}} & \leq \expec_{\prob_p} \bra{r^p_T \,\indic_{\{r^p_T > N\}}}\\
  & = \expec_{\prob_p}[r^p_T] - \expec_{\prob_p}\bra{(r^p_T-1) \,\indic_{\{r^p_T \leq N\}}} - \prob_p(r^p_T \leq N)\\
  & \rightarrow 1-0-1 =0, \quad \text{ as } p\downarrow -\infty,
 \end{split}
 \]
 where the convergence of three terms follow from the result in Step 1, the result in the last paragraph, and Corollary \ref{cor: r^p->1 prob}, respectively.
\end{proof}

The convergence of optimal payoffs in Proposition \ref{prop: L^1 conv r^p} implies the ratio of optimal wealth processes converges uniformly in probability. The proof of the following two results adapt arguments in \cite[Theorem 2.5]{Kardaras} into our context.

\begin{cor}\label{lem: r^p conv ucp}
 It holds that
 \[
  \lim_{p\downarrow -\infty} \prob_p \pare{\sup_{t\in[0,T]} \left|(\pr_T)^p-1\right|\geq \epsilon}=0.
 \]
\end{cor}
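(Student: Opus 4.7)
The plan is to realize $(\pr)^p$ as a continuous non-negative $\prob_p$-submartingale starting at $1$, and then combine Proposition \ref{prop: L^1 conv r^p} with a submartingale analogue of Lemma \ref{lem: Sp norm local mart}. The statement over $\sup_{t \in [0,T]}$ should be read as being over $(\pr_t)^p$ rather than the (static) $(\pr_T)^p$.

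First, I would verify that $Y^{(p)}_t := (\pr_t)^p - 1$ is a continuous $\prob_p$-submartingale with $Y^{(p)}_0 = 0$. The discussion following the definition of $\prob_p$ records that $\pr$ is a positive $\prob_p$-supermartingale with $\pr_0 = 1$. Under Assumption \ref{ass: cont filtration}, $S$ is continuous, so every admissible wealth process is continuous, and hence $\pr$ is continuous. Since $\phi(x) := x^p$ is convex and decreasing on $(0,\infty)$ for $p < 0$, conditional Jensen combined with the supermartingale bound (and $\phi$ decreasing) gives, for $s \leq t$,
\[
 \expec_{\prob_p}\bra{(\pr_t)^p \, | \, \F_s} \geq \pare{\expec_{\prob_p}\bra{\pr_t \, | \, \F_s}}^p \geq (\pr_s)^p,
\]
so $(\pr)^p$ is a $\prob_p$-submartingale.

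Second, I would establish the following submartingale counterpart of Lemma \ref{lem: Sp norm local mart}: for every continuous submartingale $Y$ with $Y_0 = 0$,
\[
 \lambda \, \prob_p\pare{\sup_{0\leq t \leq T} |Y_t| \geq \lambda} \leq 2 \, \expec_{\prob_p}\bra{|Y_T|}, \quad \lambda > 0.
\]
Doob's submartingale maximal inequality yields $\lambda \, \prob_p(\sup_t Y_t \geq \lambda) \leq \expec_{\prob_p}[Y_T^+]$. For the lower branch, optional sampling applied at $\tau_\lambda := \inf\set{t \such Y_t \leq -\lambda} \wedge T$ gives $0 = Y_0 \leq \expec_{\prob_p}[Y_{\tau_\lambda}] = -\lambda \, \prob_p(\tau_\lambda < T) + \expec_{\prob_p}\bra{Y_T \indic_{\{\tau_\lambda = T\}}}$, whence $\lambda \, \prob_p(\tau_\lambda < T) \leq \expec_{\prob_p}[Y_T^+]$. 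Continuity of $Y$ identifies $\{\tau_\lambda < T\}$ with $\{\inf_t Y_t \leq -\lambda\}$, and summing the two one-sided bounds with $Y_T^+ \leq |Y_T|$ produces the display.

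Applying this inequality to $Y = Y^{(p)}$ under $\prob_p$ with $\lambda = \epsilon$ yields
\[
 \epsilon \, \prob_p\pare{\sup_{0\leq t \leq T} \left|(\pr_t)^p - 1\right| \geq \epsilon} \leq 2 \, \expec_{\prob_p}\bra{\left|(\pr_T)^p - 1\right|} \longrightarrow 0 \quad \text{ as } p \downarrow -\infty,
\]
where the convergence of the right-hand side is exactly Proposition \ref{prop: L^1 conv r^p}. The substantive work has already been carried out in Proposition \ref{prop: L^1 conv r^p}; the main obstacle one has to guard against is the temptation to derive $\sup_t |(\pr_t)^p -1|\to 0$ from $\sup_t |\pr_t - 1|\to 0$, which fails because the map $x \mapsto x^p$ has derivative of order $|p|$ near $x=1$. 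Working with the submartingale $(\pr)^p$ directly, rather than with $\pr$, is what bypasses this difficulty and mirrors the philosophy of \cite[Theorem 2.5]{Kardaras}.
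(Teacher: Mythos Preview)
Your proof is correct and takes a genuinely different, more streamlined route than the paper's. Both arguments begin the same way, by observing via conditional Jensen that $(\pr)^p$ is a continuous $\prob_p$-submartingale starting at $1$. From there, however, the paper argues by contradiction: it treats the one-sided events $\{\sup_t (\pr_t)^p \geq 1+\delta\}$ and $\{\inf_t (\pr_t)^p \leq 1-\delta\}$ separately, introduces first-passage times $\tau_p$, supposes along a subsequence that $\prob_p(\tau_p<T)\to\eta>0$, and then combines the optional sampling inequality $1\leq\expec_{\prob_p}[(\pr_{\tau_p})^p]\leq\expec_{\prob_p}[(\pr_T)^p]\to 1$ with Proposition~\ref{prop: L^1 conv r^p} to force $1\geq 1+\delta\eta$ (resp.\ $1\leq 1-\delta\eta$). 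You instead extract a direct two-sided maximal inequality $\lambda\,\prob_p(\sup_t|Y_t|\geq\lambda)\leq 2\,\expec_{\prob_p}[|Y_T|]$ for the submartingale $Y=(\pr)^p-1$, which is exactly the submartingale analogue of Lemma~\ref{lem: Sp norm local mart}, and then Proposition~\ref{prop: L^1 conv r^p} finishes in one line. Your approach makes fuller use of the $\mathbb L^1(\prob_p)$ convergence already established in Proposition~\ref{prop: L^1 conv r^p}, whereas the paper's contradiction argument effectively re-derives part of that strength. One tiny imprecision: the identification of $\{\tau_\lambda<T\}$ with $\{\inf_t Y_t\leq-\lambda\}$ misses the boundary event where the infimum is achieved only at $T$; this is harmless (take $\lambda'\uparrow\lambda$, or note $Y\geq-1$ so only $\lambda<1$ matters anyway), but worth tightening.
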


\begin{proof}
 The superscript $(p)$ is still omitted throughout to simplify notation. Recall that $r$ is a $\prob_p$-supermartingale; see the discussion before Lemma \ref{lem: exp est R_+}. Then $p<0$ implies that $r^p$ is a $\prob_p$-submartingale. Indeed,
 $\expec_{\prob_p} \bra{r^p_t \, |\, \F_s} \geq \pare{\expec_{\prob_p}\bra{r_t \,|\, \F_s}}^p \geq r_s^p$ for any $s\leq t$,
 where the Jensen's inequality is applied to obtain the first inequality.

 In the next two paragraphs, we will prove
 \begin{equation}\label{eq: sup r^p-1}
  \lim_{p\downarrow -\infty} \prob_p \pare{\left|\sup_{t\in [0,T]} r_t^p -1\right|\geq \epsilon} =0 \quad \text{ and } \quad \lim_{p\downarrow -\infty} \prob_p \pare{\left|\inf_{t\in [0,T]} r_t^p -1\right| \geq \epsilon} =0,
 \end{equation}
 for any fixed $\epsilon>0$.
 These two convergence combined confirm the statement.

 To prove the first convergence in \eqref{eq: sup r^p-1}, define $\tau_p := \inf\{t\geq 0 \,|\, r_t^p \geq 1+ \delta\}\wedge T$ for $p<0$ and $\delta>0$. It then suffices to prove
 \[
  \lim_{p\downarrow -\infty} \prob_p \pare{\tau_p < T}=0,
 \]
 since $\delta$ is arbitrarily chosen. Suppose the previous convergence does not hold. Then there exists $\eta>0$ and a subsequence, which we still denote by $\tau_p$, such that $\lim_{p\downarrow -\infty} \prob_p\pare{\tau_p < T} = \eta$. It then follows from Proposition \ref{prop: L^1 conv r^p} that
 \[
 \left|\expec_{\prob_p}\bra{r^p_T \,\indic_{\{\tau_p= T\}}} -\prob_p(\tau_p =T)\right| = \left|\expec_{\prob_p}\bra{(r^p_T-1) \,\indic_{\{\tau_p=T\}}}\right| \leq \expec_{\prob_p} [|r^p_T-1|] \rightarrow 0, \quad \text{ as } p\downarrow -\infty.
 \]
 This implies $\lim_{p\downarrow -\infty}\expec_{\prob_p}\bra{r^p_T \,\indic_{\{\tau_p=T\}}}=1-\eta$. On the other hand, the $\prob_p$-submartingale property of $r^p$ implies
 \[
  1\leq \expec_{\prob_p}[r^p_{\tau_p}] \leq \expec_{\prob_p}\bra{r^p_T} \rightarrow 1, \quad \text{ as } p\downarrow -\infty,
 \]
 where the last convergence follows from \eqref{eq: exp r^p ->1}. Hence $\lim_{p\downarrow -\infty} \expec_{\prob_p}\bra{r^p_{\tau_p}}=1$. Therefore
 \[
 \begin{split}
  1=\lim_{p\downarrow -\infty} \expec_{\prob_p}[r^p_{\tau_p}] &\geq \liminf_{p\downarrow -\infty} \expec_{\prob_p}\bra{r^p_{\tau_p} \,\indic_{\{\tau_p<T\}}} + \lim_{p\downarrow -\infty} \expec_{\prob_p}\bra{r^p_{\tau_p} \, \indic_{\{\tau_p=T\}}}\\
  & \geq (1+\delta) \eta + (1-\eta) = 1+ \delta \eta>1,
 \end{split}
 \]
 which is a contradiction.
 The proof of the second convergence in \eqref{eq: sup r^p-1} is similar.
\end{proof}

Our next goal is to pass from convergence of optimal payoffs to  convergence of optimal strategies.

\begin{prop}\label{prop: quad var diff X}
 If $S$ is continuous, then the following statements hold for any $\epsilon>0$:
 \begin{enumerate}
  \item[i)] $\lim_{p\downarrow -\infty} \prob_p \pare{\bra{ \pare{\pr}^p, \pare{\pr}^p }_{T}\geq \epsilon}=0$;
  \item[ii)] $\lim_{p\downarrow -\infty} \prob_p \pare{\bra{\cL^{(p)}, \cL^{(p)}}_T \geq \epsilon} =0$, where $\cL^{(p)} := \int_0^\cdot \pare{1/ (r^{(p)}_t)^p} d(r^{(p)}_t)^p$, i.e., $\cL^{(p)}$ is the stochastic logarithm of $(r^{(p)})^p$.
 \end{enumerate}
\end{prop}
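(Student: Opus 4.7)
The plan is to reduce both (i) and (ii) to Proposition~\ref{prop: L^1 conv r^p} and Corollary~\ref{lem: r^p conv ucp} via the Doob--Meyer decomposition of the submartingale $Z^{(p)} := (r^{(p)})^p$. Under Assumption~\ref{ass: cont filtration}, $r^{(p)}$ and hence $Z^{(p)}$ are continuous; $Z^{(p)}$ is a positive $\prob_p$-submartingale with $Z^{(p)}_0=1$, and since $\expec_{\prob_p}[Z^{(p)}_T]<\infty$, the dominations $Z^{(p)}_\tau\leq \expec_{\prob_p}[Z^{(p)}_T\mid \F_\tau]$ for $\tau\le T$ show that $Z^{(p)}$ is of class (D) on $[0,T]$. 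Doob--Meyer then yields
\[
 Z^{(p)} = 1 + M^{(p)} + A^{(p)},
\]
with $M^{(p)}$ a continuous uniformly integrable $\prob_p$-martingale and $A^{(p)}$ continuous predictable increasing. Taking $\expec_{\prob_p}$ at $T$ and applying Proposition~\ref{prop: L^1 conv r^p} gives $\expec_{\prob_p}[A^{(p)}_T] = \expec_{\prob_p}[Z^{(p)}_T] - 1 \to 0$ as $p\downarrow -\infty$.

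For (i), introduce the stopping time $\tau^{(p)} := \inf\{t : Z^{(p)}_t \geq 2\} \wedge T$. Corollary~\ref{lem: r^p conv ucp} implies $\prob_p(\tau^{(p)} < T)\to 0$, and on $[0,\tau^{(p)}]$ the process $Z^{(p)}$ takes values in $[0,2]$, so $|Z^{(p)}-1|\le 1$ there. Apply It\^o's formula to $(Z^{(p)}-1)^2$ on $[0, T\wedge \tau^{(p)}\wedge \rho_n]$, where $\rho_n:= \inf\{t : \langle M^{(p)}\rangle_t \geq n\}$; on this bounded interval both $Z^{(p)}-1$ and $\langle M^{(p)}\rangle$ are bounded, so the stochastic integral against $dM^{(p)}$ is a true martingale with zero expectation. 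Rearranging yields
\[
 \expec_{\prob_p}\bra{\langle M^{(p)}\rangle_{T\wedge \tau^{(p)}\wedge \rho_n}} = \expec_{\prob_p}\bra{(Z^{(p)}_{T\wedge \tau^{(p)}\wedge \rho_n} - 1)^2} - 2\, \expec_{\prob_p}\bra{\int_0^{T\wedge \tau^{(p)}\wedge \rho_n}(Z^{(p)}_s - 1)\, dA^{(p)}_s}.
\]
Sending $n\to \infty$ by monotone convergence on the left and by dominated convergence on the right (using $|Z^{(p)}-1|\le 1$ on $[0,\tau^{(p)}]$ and $A^{(p)}_T\in L^1(\prob_p)$), and then bounding $(Z^{(p)}-1)^2\le |Z^{(p)}-1|$ on $[0,\tau^{(p)}]$, produces
\[
 \expec_{\prob_p}\bra{\langle M^{(p)}\rangle_{T\wedge \tau^{(p)}}} \leq \expec_{\prob_p}\bra{|Z^{(p)}_T - 1|} + \prob_p(\tau^{(p)} < T) + 2\,\expec_{\prob_p}\bra{A^{(p)}_T},
\]
each of whose three terms vanishes in the limit by Proposition~\ref{prop: L^1 conv r^p}, Corollary~\ref{lem: r^p conv ucp}, and the first paragraph, respectively. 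Markov's inequality and a further application of $\prob_p(\tau^{(p)} < T)\to 0$ then give $\langle Z^{(p)}\rangle_T = \langle M^{(p)}\rangle_T \to 0$ in $\prob_p$-probability, which is (i).

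For (ii), observe that $[\cL^{(p)}, \cL^{(p)}]_T = \int_0^T (Z^{(p)}_s)^{-2}\, d\langle Z^{(p)}\rangle_s$. On the event $\G^{(p)} := \{1/2 \le Z^{(p)}_t \le 2 \text{ for all } t\in [0,T]\}$, which has $\prob_p$-probability tending to $1$ by Corollary~\ref{lem: r^p conv ucp}, we have $(Z^{(p)}_s)^{-2} \le 4$, so $[\cL^{(p)}, \cL^{(p)}]_T \leq 4\,\langle Z^{(p)}\rangle_T$, and (ii) reduces to (i). The main technical subtlety is ensuring the stochastic integral against $dM^{(p)}$ vanishes in expectation on $[0, T\wedge \tau^{(p)}]$: although $M^{(p)}$ is a uniformly integrable martingale, one does not know a priori that $\expec_{\prob_p}[\langle M^{(p)}\rangle_{T\wedge\tau^{(p)}}^{1/2}]<\infty$, so the auxiliary reducing sequence $\rho_n$ is required, with the corresponding passage to the limit handled by the boundedness of the integrand on $[0,\tau^{(p)}]$ and the integrability of $A^{(p)}_T$.
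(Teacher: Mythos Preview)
Your argument is correct and follows essentially the same route as the paper: stop $(r^{(p)})^p$ at level $2$, apply the Doob--Meyer decomposition, use $\expec_{\prob_p}[A^{(p)}_T]=\expec_{\prob_p}[Z^{(p)}_T]-1\to 0$, and derive (ii) from (i) via $d[\cL^{(p)}]=(Z^{(p)})^{-2}\,d[Z^{(p)}]$ together with Corollary~\ref{lem: r^p conv ucp}. The only substantive difference is in how the quadratic variation of the martingale part is controlled: the paper bounds $\sup_t|M^{(p)}_t-1|\le \sup_t|Z^{(p)}_t-1|+A^{(p)}_T$ (both of which go to $0$ in $L^1(\prob_p)$ after stopping) and then invokes the Davis inequality to obtain $\expec_{\prob_p}\bigl[[M^{(p)}]_T^{1/2}\bigr]\to 0$, whereas you apply It\^o's formula to $(Z^{(p)}-1)^2$ to get $\expec_{\prob_p}[\langle M^{(p)}\rangle_{T\wedge\tau^{(p)}}]$ directly; your localization with $\rho_n$ and the passages to the limit are handled correctly. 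Both variants are standard and yield the same conclusion.
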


\begin{rem}
 Under the structure condition, $[\cL^{(p)}, \cL^{(p)}]_T = \int_0^T p(\pi_p -\wt{\pi}_p)_t \,d\langle M \rangle_t\, p(\pi_p -\wt{\pi}_p)_t$, which measures how far $p(\pi_p -\wt{\pi}_p)$ is away from $0$.
\end{rem}

\begin{proof}
 The superscript $(p)$ on $r$ and $\cL$ is omitted throughout this proof. Note that $\bra{r^p, r^p}_\cdot = \int_0^\cdot |r^p|^2 \, d[\cL, \cL]_t$. Statement ii) then follows from statement i) and Corollary \ref{cor: r^p->1 prob} directly. We will prove statement i) in what follows.

 Define $\tau_p = \inf\{t\geq 0\,|\, r_t^p \geq 2\} \wedge T$. It follows from Corollary \ref{lem: r^p conv ucp} that $\lim_{p\downarrow -\infty} \prob_p \pare{\tau_p = T}=1$. Therefore it suffices to prove
 \begin{equation}\label{eq: conv local quad var}
  \lim_{p\downarrow -\infty} \prob_p\pare{\bra{r^p, r^p}_{T\wedge \tau_p} \geq \epsilon}=0.
 \end{equation}
 Set $Z^{(p)}_\cdot = r^p_{\cdot \wedge \tau_p}$. Since $r^p$ is a $\prob_p$-submartingale, so is $Z^{(p)}$. Therefore \eqref{eq: exp r^p ->1} induces $\lim_{p\downarrow -\infty} \expec_{\prob_p}[Z^{(p)}_T]=1$. On the other hand, the continuity of $S$ implies the continuity of $r^p$, hence $Z^{(p)}$ is bounded from above by $2$ for all $p<0$. The Doob-Meyer decomposition gives $Z^{(p)} = M^{(p)}+ B^{(p)}$ where $M^{(p)}$ is a $\prob_p$-martingale and $B^{(p)}$ is a continuous nondecreasing process with $B^{(p)}_0=0$. The continuity of $B^{(p)}$ follows from \cite[Theorem 1.4.14]{Karatzas-Shreve-BM}. Note $sup_{t\in[0,T]}|Z^{(p)}_t-1| \leq \sup_{t\in[0,T]} |M^{(p)}_t -1| + B^{(p)}_T$. Hence
 \[
 \begin{split}
  \expec_{\prob_p}\bra{\sup_{t\in[0,T]}|M^{(p)}_t -1|} &\leq \expec_{\prob_p}\bra{\sup_{t\in[0,T]}|Z^{(p)}_t -1|} + \expec_{\prob_p}[B^{(p)}_T]\\
  &= \expec_{\prob_p}\bra{\sup_{t\in[0,T]}|Z^{(p)}_t -1|} +\expec_{\prob_p}[Z^{(p)}_T]  - \expec_{\prob_p}[M^{(p)}_T] \\
  &\rightarrow 0 + 1- 1 =0, \quad \text{ as } p\downarrow -\infty,
 \end{split}
 \]
 where $\expec_{\prob_p}\bra{\sup_{t\in[0,T]}|Z^{(p)}_t -1|} \rightarrow 0$ holds owing to $|Z^{(p)}-1|\leq 1$ and Corollary \ref{lem: r^p conv ucp}, $\expec_{\prob_p}[M^{(p)}_T]=1$ holds because $M^{(p)}$ is a $\prob_p$-martingale. Therefore the Davis inequality yields $\lim_{p\downarrow -\infty} \expec_{\prob_p}[[M^{(p)}, M^{(p)}]^{1/2}_T] =0$, which implies $\lim_{p\downarrow -\infty} \prob_p([M^{(p)}, M^{(p)}]_T\geq \epsilon) =0$. Hence \eqref{eq: conv local quad var} is confirmed, since $B^{(p)}$ is a continuous increasing process.

 %Now we claim that
 %\begin{equation}\label{eq: M conv P_p}
 % \lim_{p\downarrow -\infty} \prob_p \pare{\left|M_T -1\right|\geq %\epsilon} =0.
 %\end{equation}
 %Indeed, since $Z_\cdot \geq M_\cdot$ $\prob_p$-a.s., $\lim_{p\downarrow -\infty} \prob_p\pare{|Z_T -1|\geq \epsilon} =0$ from Lemma \ref{lem: r^p conv ucp} implies $\lim_{p\downarrow -\infty}\prob_p(M_T - 1\geq \epsilon)=0$. To the other end, suppose $\prob_p\pare{M_T \leq 1-\epsilon} \geq \eta$ for some $\eta>0$ and a subsequence which is still denoted by $M$. Then since $M\leq 2$,
 %\[
 % \expec_{\prob_p} \bra{\pare{M_T -1} \,\indic_{\{M_T -1 \geq \delta\}}} %\leq \prob_p \pare{M_T -1 \geq \delta} \rightarrow 0, \quad \text{ as } %p\downarrow -\infty,
 %\]
 %for any $\delta>0$. On the other hand,
 %\[
 % \expec_{\prob_p}\bra{(M_T -1) \, \indic_{\{M_T< 1+\delta\}}} = %\expec_{\prob_p}\bra{(M_T-1) \, \indic_{\{1-\epsilon \leq M_T < 1+ %\delta\}}} + \expec_{\prob_p} \bra{(M_T -1) \, \indic_{\{M_T < %1-\epsilon\}}} \leq \delta - \epsilon \,\eta.
 %\]
 %Since $\delta$ is arbitrarily chosen, the last two estimates combined yield $\limsup_{p\downarrow -\infty} \expec_{\prob_p}[M_T -1] \leq -\eta \epsilon <0$, which contradicts with $\lim_{p\downarrow -\infty} \expec_{\prob_p} [M_T]=1$. This confirms the claim.

 %It then follows from $\lim_{p\downarrow -\infty} \prob_p\pare{|Z_T -1| \geq \epsilon}=0$ and \eqref{eq: M conv P_p} that $\lim_{p\downarrow -\infty} \prob_p(B_T \geq \epsilon) =0$.
\end{proof}

Last step to prove Theorem \ref{thm: conv R_+}, we are going to identify limit of $\prob_p$ as $p\downarrow -\infty$. To this end, we recall the opportunity process for power utility. The \cadlag\, semimartingale $L^{(p)}$ is called the \emph{opportunity process} for the power utility $x^p/p$ if it satisfies
\[
 L^{(p)}_t \,\frac{1}{p} \pare{X_t(\pi)}^p = \esssup{\wt{\pi}\in \mathcal{A}(\pi)} \expec_\prob\bra{\left.\frac{1}{p} \pare{X(\wt{\pi})_T}^p \right| \, \F_t},
\]
for any $t\in[0,T]$ and $\pi\in \mathcal{A}$, where $\mathcal{A}(\pi)= \{\wt{\pi} \in \mathcal{A} \,:\, \wt{\pi} = \pi \text{ on } [0,t]\}$. The existence and uniqueness of $L^{(p)}$ have been proved in \cite[Proposition 3.1]{Nutz-opp}. Thanks to the scaling property of power utility, $L^{(p)}$ can be viewed as a dynamic version of the reduced value function. In particular, the definition above implies that $L^{(p)}_0 x_0^p/p = \wt{u}_p(x_0)$, where $\wt{u}_p(x_0)$ is defined in \eqref{def: u_p} with $U_p(x) = x^p/p$, and $L_0^{(p)} x_0^{p-1} = \wt{y}_p=\wt{u}'_p(x_0)$. As a result, the density of $\prob_p$ can be rewritten as
\[
 \frac{d\prob_p}{d\prob} = \frac{\pare{\wt{y}_p \wt{Y}^{(p)}_T}^q}{p \wt{u}_p(x_0)} = \frac{\pare{L^{(p)}_0 \wt{Y}^{(p)}_T}^q}{L^{(p)}_0} = \frac{\pare{\wt{Y}^{(p)}_T}^q}{\pare{L^{(p)}_0}^{1-q}},
\]
where $q= p/(p-1)$. As $p\downarrow -\infty$, using convergence results in \cite{Nutz-asy}, we will show that the denominator in the rightmost equality above converges to $1$ and the numerator converges to the density of the minimal entropy measure $\qprob$. Therefore convergence under the sequence of measures $(\prob_p)_{p<0}$ in Proposition \ref{prop: quad var diff X} can be replaced by convergence in probability $\qprob$. This, combined with \cite[Theorem 3.2]{Nutz-asy}, concludes the proof of Theorem \ref{thm: conv R_+}.

\begin{proof}[Proof of Theorem \ref{thm: conv R_+}]
 Let us first prove
 \begin{equation}\label{eq: L^1 conv P_p}
  \lim_{p\downarrow -\infty}\expec_\prob\bra{\left|\frac{d\prob_p}{d\prob} - \frac{d\qprob}{d\prob}\right|} =0.
 \end{equation}
 To this end, when $S$ is continuous, it follows from \cite[Theorem 6.6]{Nutz-asy} that $\lim_{p\downarrow -\infty} L^{(p)}_0 = L^{\exp}_0$, where $L^{\exp}$ is the opportunity process for exponential utility $-\exp(-x)$ defined in the similar fashion as that for power utility; cf. \cite[equation (6.3)]{Nutz-asy}. Since $q\rightarrow 1$ as $p\downarrow -\infty$, then $\lim_{p\downarrow -\infty} (L_0^{(p)})^{1-q} = 1$. On the other hand, when $S$ and $(L^{(p)})_{p<0}$ are continuous, \cite[Proposition 6.13]{Nutz-asy} proved that $\wt{Y}^{(p)}$ converges in the semimartingale topology to the density of $\qprob$ as $p\downarrow -\infty$. In particular,  $\prob-\lim_{p\downarrow -\infty} \wt{Y}^{(p)}_T = d\qprob/d\prob$. Hence $\prob-\lim_{p\downarrow -\infty} (\wt{Y}^{(p)}_T)^q = d\qprob/d\prob$, which, after combined with $\lim_{p\downarrow -\infty}(L^{(p)}_0)^{1-q}=1$, implies
  \[
   \prob-\lim_{\delta\downarrow -\infty} \frac{d\prob_p}{d\prob} = \frac{d\qprob}{d\prob}.
  \]
 Hence the $\mathbb{L}^1(\prob)$ convergence in \eqref{eq: L^1 conv P_p} follows from the previous convergence and Scheffe's lemma. The assumptions on the continuity of $S$ and $(L^{(p)})_{p<0}$ are ensured by Assumption \ref{ass: cont filtration}; cf. \cite[Remark 4.2]{Nutz-asy}.

 Proposition \ref{prop: quad var diff X} ii) and \eqref{eq: L^1 conv P_p} combined yield $\qprob-\lim_{p\downarrow -\infty} \bra{p (\pi_p - \wt{\pi}_p) \cdot R}_T = 0$, where $[Z] := [Z, Z]$ is the quadratic variation for the semimartingale $Z$. Hence
 \begin{equation}\label{eq: quad conv 1}
  \prob-\lim_{p\downarrow-\infty} \bra{(1-p) (\pi_p - \wt{\pi}_p) \cdot R}_T = 0,
 \end{equation}
 since $\qprob\sim \prob$.
 On the other hand, \cite[Theorem 3.2]{Nutz-asy} proved that $(1-p) \wt{\pi}_p \rightarrow \hat{\vartheta}$ in $L^2_{loc}(M)$ as $p\downarrow -\infty$. This implies
 $\prob-\lim_{p\downarrow -\infty}[((1-p)\wt{\pi}_p - \hat{\vartheta})\cdot R]_{T\wedge \tau_n} = 0$, for a sequence of stopping time $(\tau_n)$ with $\lim_{n\uparrow \infty} \tau_n=\infty$; cf. \cite[Lemma A.3]{Nutz-asy}. The previous convergence then yields
 \begin{equation}\label{eq: quad conv 2}
  \prob-\lim_{p\downarrow -\infty}\bra{((1-p)\wt{\pi}_p - \hat{\vartheta})\cdot R}_T = 0.
 \end{equation}
 Finally, the statement is confirmed via
 \[
 \begin{split}
  \bra{((1-p)\pi_p - \hat{\vartheta})\cdot R}_T &= \bra{(1-p)(\pi_p - \wt{\pi}_p)\cdot R + ((1-p)\wt{\pi}_p - \hat{\vartheta})\cdot R}_T\\
  &\leq 2\bra{(1-p)(\pi_p - \wt{\pi}_p)\cdot R}_T + 2\bra{((1-p)\wt{\pi}_p - \hat{\vartheta})\cdot R}_T,\\
 \end{split}
 \]
 where both terms in the right side converge in probability $\prob$ to zero as we have seen in \eqref{eq: quad conv 1} and \eqref{eq: quad conv 2}.
\end{proof}

\bibliographystyle{abbrvnat}
\bibliography{biblio}

\end{document}